\definecolor{acolor}{RGB}{0,192,0}
\definecolor{gcolor}{RGB}{255,0,0}
\definecolor{rcolor}{RGB}{255,0,0}
\definecolor{bcolor}{RGB}{10,10,255}
\newcommand{\ignore}[1]{}
\newtheorem{nthm}{Theorem}[]
\newtheorem{nlemma}{Lemma}
\declaretheorem[sibling=nthm]{theorem}
\renewcommand{\vec}[1]{\boldsymbol{\mathbf{#1}}}
\renewcommand{\(}{\left(}
\renewcommand{\)}{\right)}
\renewcommand{\[}{\left[}
\renewcommand{\]}{\right]}
\newcommand{\Z}{\mathbb{Z}}
\def\@fpheader{~}\makeatother
\newcommand{\Eqref}[1]{Eq.~\eqref{#1}}
\newcommand{\secref}[1]{Sec.~\ref{#1}}
\newcommand{\figref}[1]{Fig.~\ref{#1}}
  \let\g=\gamma \let\d=\delta 
       \let\p=\phi \let\r=v
\let\y=\psi
\let\la=\label  \let\re=\ref
\let\pa=\partial
\def\be{\begin{equation}}
\def\ee{\end{equation}}
\def\ba#1\ea{\begin{align}#1\end{align}}
\def\bg#1\eg{\begin{gather}#1\end{gather}}
\def\bm#1\em{\begin{multline}#1\end{multline}}
\def\bmd#1\emd{\begin{multlined}#1\end{multlined}}
\def\d{\delta}
\def\g{\gamma}
\def\p{\phi}
\def\r{\rho}
\def\y{\psi}
\def\la{\label}
\def\re{\ref}
\def\er{\Eqref}
\def\fr{\frac}
\def\pa{\partial}
\def\ap{\approx}
\def\qu{\quad}
\def\qqu{\qquad}
\def\lt{\left}
\def\rt{\right}
\def\({\left(}
\def\){\right)}
\def\[{\left[}
\def\]{\right]}
\def\<{\langle}
\def\>{\rangle}
\def\Tr{\operatorname{Tr}}
\title{Entanglement Negativity and Replica Symmetry Breaking in General Holographic States}
\author[1]{Xi Dong,}
\emailAdd{xidong@ucsb.edu}
\author[2,3]{Jonah Kudler-Flam,}
\emailAdd{jkudlerflam@ias.edu}
\author[4]{and Pratik Rath}
\emailAdd{pratik\_rath@berkeley.edu}
\affiliation[1]{Department of Physics, University of California, Santa Barbara, CA 93106, USA}
\affiliation[2]{School of Natural Sciences, Institute for Advanced Study, Princeton, NJ 08540, USA}
\affiliation[3]{Princeton Center for Theoretical Science, Princeton University, Princeton, NJ~08540, USA}
\affiliation[4]{Center for Theoretical Physics and Department of Physics,
University of California, Berkeley, CA 94720, USA}
\abstract{The entanglement negativity $\mathcal{E}(A:B)$ is a useful measure of quantum entanglement in bipartite mixed states.
In random tensor networks (RTNs), which are related to fixed-area states, it was found in Ref.~\cite{2021JHEP...06..024D} that the dominant saddles computing the even R\'enyi negativity $\mathcal{E}^{(2k)}$ generically break the $\mathbb{Z}_{2k}$ replica symmetry.
This calls into question previous calculations of holographic negativity using 2D CFT techniques that assumed $\mathbb{Z}_{2k}$ replica symmetry and proposed that the negativity was related to the entanglement wedge cross section.
In this paper, we resolve this issue by showing that in general holographic states, the saddles computing $\mathcal{E}^{(2k)}$ indeed break the $\mathbb{Z}_{2k}$ replica symmetry.

Our argument involves an identity relating $\mathcal{E}^{(2k)}$ to the $k$-th R\'enyi entropy on subregion $AB^*$ in the doubled state $\ket{\rho_{AB}}_{AA^*BB^*}$, from which we see that the $\mathbb{Z}_{2k}$ replica symmetry is broken down to $\mathbb{Z}_{k}$. 
For $k<1$, which includes the case of $\mathcal{E}(A:B)$ at $k=1/2$, we use a modified cosmic brane proposal to derive a new holographic prescription for $\mathcal{E}^{(2k)}$ and show that it is given by a new saddle with multiple cosmic branes anchored to subregions $A$ and $B$ in the original state.
Using our prescription, we reproduce known results for the PSSY model and show that our saddle dominates over previously proposed CFT calculations near $k=1$.
Moreover, we argue that the $\mathbb{Z}_{2k}$ symmetric configurations previously proposed are not gravitational saddles, unlike our proposal.
Finally, we contrast holographic calculations with those arising from RTNs with non-maximally entangled links, demonstrating that the qualitative form of backreaction in such RTNs is different from that in gravity.}
\begin{document}
 

\maketitle

\section{Introduction}\label{sec:intro}

It is well known that entanglement plays a crucial role in the emergence of a semiclassical spacetime description in holographic systems \cite{2010GReGr..42.2323V,2013ForPh..61..781M, 2015JHEP...04..163A, 2016PhRvL.117b1601D}. 
While this connection has been well understood in the context of calculating entanglement entropy via the Ryu-Takayanagi (RT) formula and its subsequent generalizations \cite{2006PhRvL..96r1602R,2007JHEP...07..062H,2015JHEP...01..073E}, the precise structure of multipartite entanglement has only recently been explored  \cite{2019arXiv190500577D, 2019PhRvD..99j6014K,2019PhRvL.123m1603K,2021JHEP...06..024D, 2018NatPh..14..573U,2020JHEP...04..208A,2018JHEP...01..098N,2019PhRvL.122n1601T,2018JHEP...10..152U}. 
For general mixed states $\rho_{AB}$, the correlations between subsystems $A$ and $B$ can be both classical and quantum. 
A state with only classical correlations is called separable and takes the form
\begin{align}
    \rho_{AB} = \sum_i p_i \rho_A^{(i)} \otimes \rho_B^{(i)}, \qquad p_i\geq 0, \qquad \sum_i p_i = 1,
\end{align}
where $\rho_{A/B}^{(i)}$ are density matrices themselves.
The entanglement entropy (or more precisely the mutual information) is sensitive to both classical and quantum correlations between $A$ and $B$, and generally does not vanish for separable states. 
Thus, it is of clear interest to obtain a quantity that faithfully measures only quantum entanglement.
Among a whole zoo of such measures,\footnote{See, for instance, Ref.~\cite{Plenio:2007zz} for a review of such entanglement measures.} we are interested in a measure that is computable, operationally meaningful, and potentially has a geometric interpretation in holography. 

\subsubsection*{Entanglement negativity}
With this motivation, our primary focus in this paper will be on the logarithmic negativity (henceforth called the ``negativity''), which is a genuine measure of entanglement \cite{1998PhRvA..58..883Z,2002PhRvA..65c2314V,1996PhRvL..77.1413P,1999JMOp...46..145E,2005PhRvL..95i0503P,2000PhRvL..84.2726S,1996PhLA..223....1H} because it vanishes for all separable states and decreases monotonically under local operations and classical communications \cite{2005PhRvL..95i0503P}. 
Negativity is defined as
        \begin{equation}
                \mathcal{E}(A:B) = \log \left(\sum_{i} |\lambda_i^{(T)}|\right),
        \end{equation}
where $\lambda_i^{(T)}$ are the eigenvalues of $\rho^{T_B}_{AB}$, obtained by performing a partial transpose $T_B$ of the density matrix $\rho_{AB}$.
In order to analyze the negativity, it is useful to study a family of quantities called the even R\'enyi negativity (ERN)\footnote{There is a separate family of R\'enyi negativities arising from the odd moments of $\rho^{T_B}_{AB}$. 
In this paper, we primarily focus on the even case which is related to the negativity $\mathcal{E}$. 
We comment on the odd case in \secref{sub:odd}.}
\begin{align}
    \mathcal{E}^{(2k)}(A:B) &= \log\( \sum_{i} |\lambda_i^{(T)}|^{2k}\),\end{align}
which can be analyzed using a replica trick since $\sum_{i} |\lambda_i^{(T)}|^{2k}=\Tr \[\(\rho_{AB}^{T_B} \)^{2k}\]$ for integer $k$. Notably, the replica trick for ERN has a $\mathbb{Z}_{2k}$ symmetry that cyclically permutes the replicas. This quantity can be analytically continued to other values of $k$.
The negativity is then given by the $k=1/2$ ERN.

\begin{figure}
    \centering
    \includegraphics[scale=0.5]{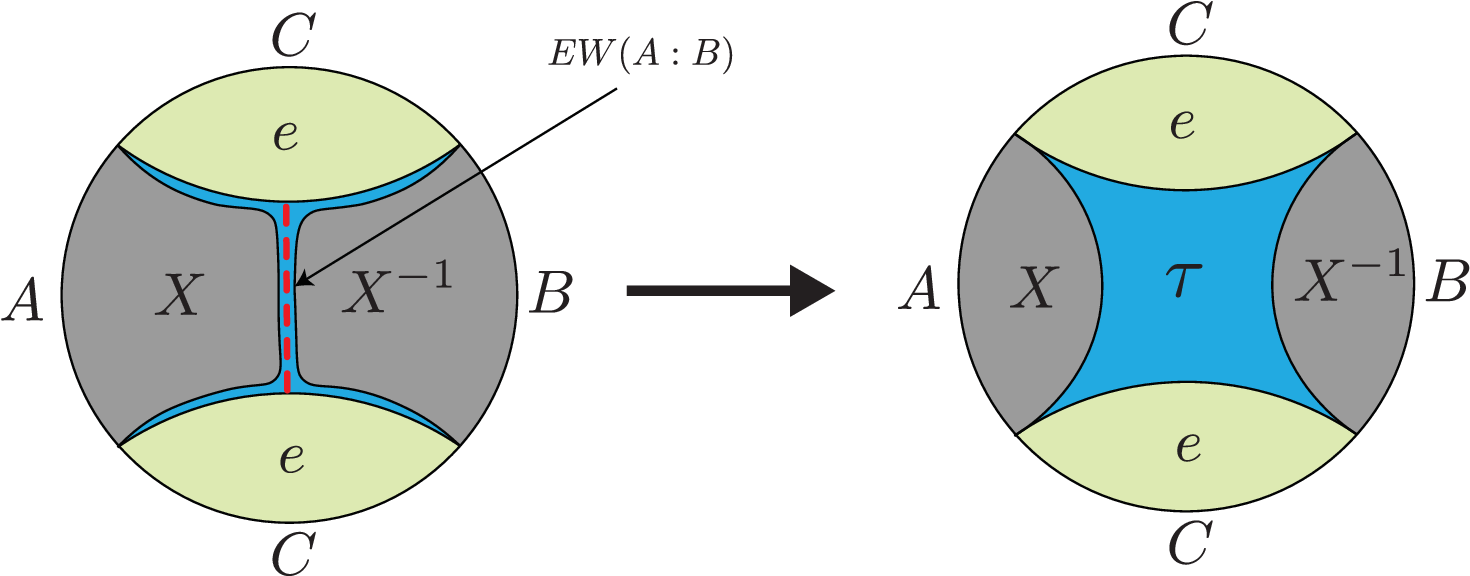}
    \caption{
    The RTN calculation of even moments of the partially transposed density matrix $\rho_{AB}^{T_B}$ is given by the free energy of a spin model with boundary conditions $\{X,X^{-1},e\}$ at the subregions $\{A,B,C\}$ respectively. $\{X,X^{-1},e\}$ denote the cyclic, anti-cyclic and identity permutations respectively. In the connected phase, the naive saddle (left) has a domain wall at the entanglement wedge cross section $EW(A:B)$ (red, dashed), which is the minimal area surface that divides the entanglement wedge of $AB$, bounded by the RT surface $\gamma_{C}$, into portions containing subregions $A$ and $B$ respectively. The naive saddle can be split (left) by a small domain of an at-most-$\mathbb{Z}_k$-symmetric permutation $\tau$ (blue) and eventually allowed to relax to the true ground state configuration (right).}
    \label{splitting_fig}
\end{figure}

Given the usefulness of the negativity, it is natural to ask whether it has a useful holographic dual. 
Several previous attempts have been made to answer this question \cite{2014JHEP...10..060R, 2016arXiv160906609C, 2019PhRvD..99j6014K,2021JHEP...06..024D}. However, no consensus has been reached on a universal bulk dual. 
We will briefly review some of the literature on this topic below.

\subsubsection*{Holographic negativity}

Negativity was computed in 2D holographic conformal field theory (CFT) in Refs.~\cite{2014JHEP...09..010K, 2019PhRvD..99j6014K,2019PhRvL.123m1603K}. 
This calculation was done under the assumption of the dominance of a single Virasoro conformal block, which is standard in computations of entanglement entropy in 2D holographic CFTs \cite{2013arXiv1303.6955H}. 
On the bulk side, this translated into a $\mathbb{Z}_{2k}$ replica symmetric configuration for $\mathcal{E}^{(2k)}$, resulting in $\mathcal{E}(A:B)$ being related to a backreacted version of the entanglement wedge cross section, $EW(A:B)$, shown in \figref{splitting_fig}. 

Meanwhile, this problem was also analyzed in random tensor networks (RTNs) in Ref.~\cite{2021JHEP...06..024D}. 
RTNs are useful toy models for holography and accurately model fixed-area states in gravity that possess a flat entanglement spectrum \cite{2019JHEP...10..240D,2020JHEP...03..191D,2019JHEP...05..052A}. 
RTNs allow for a rigorous calculation of the negativity as well as the ERN.
Somewhat surprisingly, the dominant saddle, shown on the right side of \figref{splitting_fig}, has at most $\mathbb{Z}_k$ symmetry generically.
This leads to $\mathcal{E}(A:B) = \frac{1}{2}I(A:B) + O(1)$, where $I(A:B)$ is the mutual information and $O(1)$ denotes subleading (in Newton's constant $G$) corrections. In particular, the mutual information is computed by extremal surfaces anchored to the subregions, quite unlike $EW(A:B)$.
More generally, assuming a $\mathbb{Z}_k$ replica symmetry motivated by the RTN calculation, a prescription was provided in Ref.~\cite{2021JHEP...06..024D} to compute the negativity for general holographic states using the cosmic brane proposal\footnote{We will henceforth refer to it as the original cosmic brane proposal to distinguish it from the modified cosmic brane proposal of Ref.~\cite{Dong:2023bfy}.} of Refs.~\cite{2013JHEP...08..090L,2016NatCo...712472D}.

\subsubsection*{Our results}

The goal of this paper is to revisit the proposal of Ref.~\cite{2021JHEP...06..024D} for general holographic states with arbitrary entanglement spectra. 
Our first result in this paper is to demonstrate that in general, only a $\mathbb{Z}_k$ replica symmetry is preserved by the saddles computing $\mathcal{E}^{(2k)}$ at integer $k$ for general holographic states.
This allows us to obtain our second result, an explicit holographic proposal for the ERN at arbitrary $k$, and thus also for the negativity.
The final answer, which we shall briefly summarize below, agrees with the prescription for computing ERN provided by Ref.~\cite{2021JHEP...06..024D} only for $k\geq 1$.
On the other hand, for $k<1$ the original cosmic brane proposal provably fails at leading order, and we instead need to apply the modified cosmic brane proposal provided in Ref.~\cite{Dong:2023bfy}.
In particular, this modified cosmic brane proposal is crucial to reproduce the negativity itself since it arises as the $k=1/2$ ERN.

\begin{figure}
    \centering
    \includegraphics[scale=0.6]{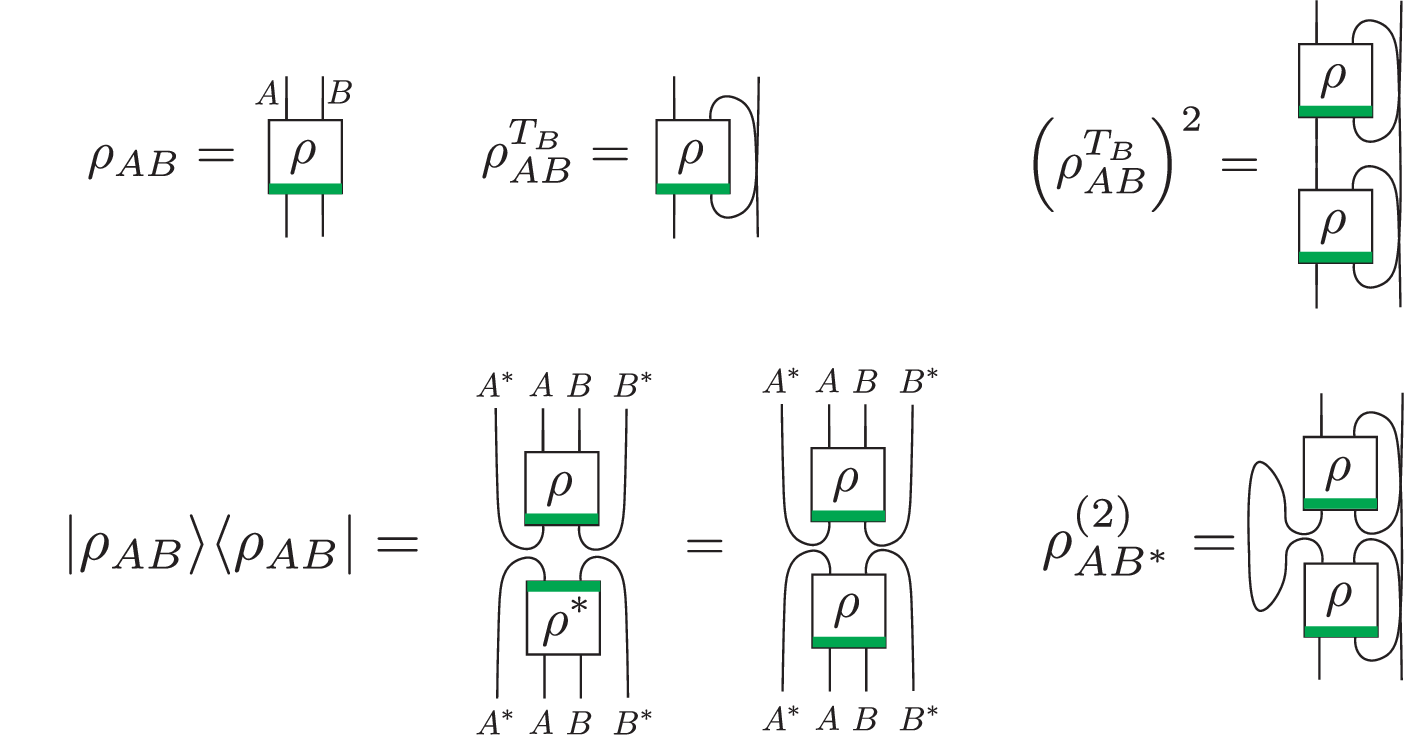}
    \caption{A diagrammatic proof of the identity \Eqref{eq:central_identity} where the boxes represent tensors, a transpose is performed by switching legs and tensor contraction is done by gluing legs together. We have added a green strip to clarify the orientation of the tensor. In the second line, we first obtain $\bra{\rho_{AB}}$ by a mirror image and then use hermiticity of $\rho_{AB}$ in the second step.}
    \label{fig:identity}
\end{figure}

In order to obtain our results, we find a very useful identity relating the partially transposed density matrix $\rho_{AB}^{T_B}$ to the doubled (``Choi'') state $\ket{\rho_{AB}}_{AA^*BB^*}$, which is obtained by applying channel-state duality to the operator $\rho_{AB}$ to convert it into a pure state in a doubled Hilbert space $\mathcal{H}_A\otimes\mathcal{H}_{A^*}\otimes\mathcal{H}_B\otimes\mathcal{H}_{B^*}$ \cite{choi1975completely,jamiolkowski1972linear}.
In particular, we have the identity
\begin{equation}
\label{eq:central_identity}
    \left(\rho_{AB}^{T_B}\right)^2 = \rho^{(2)}_{AB^*}\equiv\Tr_{A^* B}\[\ket{\rho_{AB}}\bra{\rho_{AB}}\],
\end{equation}
where $\ket{\rho_{AB}}$ and $\rho^{(2)}_{AB^*}$ are unnormalized.\footnote{The state $\ket{\rho_{AB}}$ belongs to the one-parameter generalization $\ket{\rho_{AB}^{m/2}}$ which includes the canonical purification $\ket{\sqrt{\rho_{AB}}}$.
Such states have been well studied in the context of the reflected entropy whose holographic dual is related to $EW(A:B)$ \cite{2019arXiv190500577D}.}
We provide a diagrammatic proof of this fact in \figref{fig:identity}.

The $2k$-th moments of the partially transposed density matrix are thus related to the $k$-th moments of a properly normalized density matrix defined as $\bar{\rho}^{(2)}_{AB^*}=\frac{\rho^{(2)}_{AB^*}}{\Tr \left(\rho^{(2)}_{AB^*}\right)}$:
\begin{align}
 \Tr\[\left(\rho_{AB}^{T_B}\right)^{2k}\] = \Tr \[\left(\rho^{(2)}_{AB^*}\)^k\] = \Tr \[\(\bar{\rho}^{(2)}_{AB^*}\)^k\] \(\Tr \rho^{(2)}_{AB^*}\)^k.
\end{align}
The ERN then becomes
\begin{equation}
    \mathcal{E}^{(2k)}(A:B) = -(k-1)S_k\(\bar{\rho}^{(2)}_{AB^*}\) - k S_2\(\rho_{AB}\),
\end{equation}
where $S_n(\rho)$ is the R\'enyi entropy of $\rho$.

\begin{figure}
    \centering
    \includegraphics[scale=0.5]{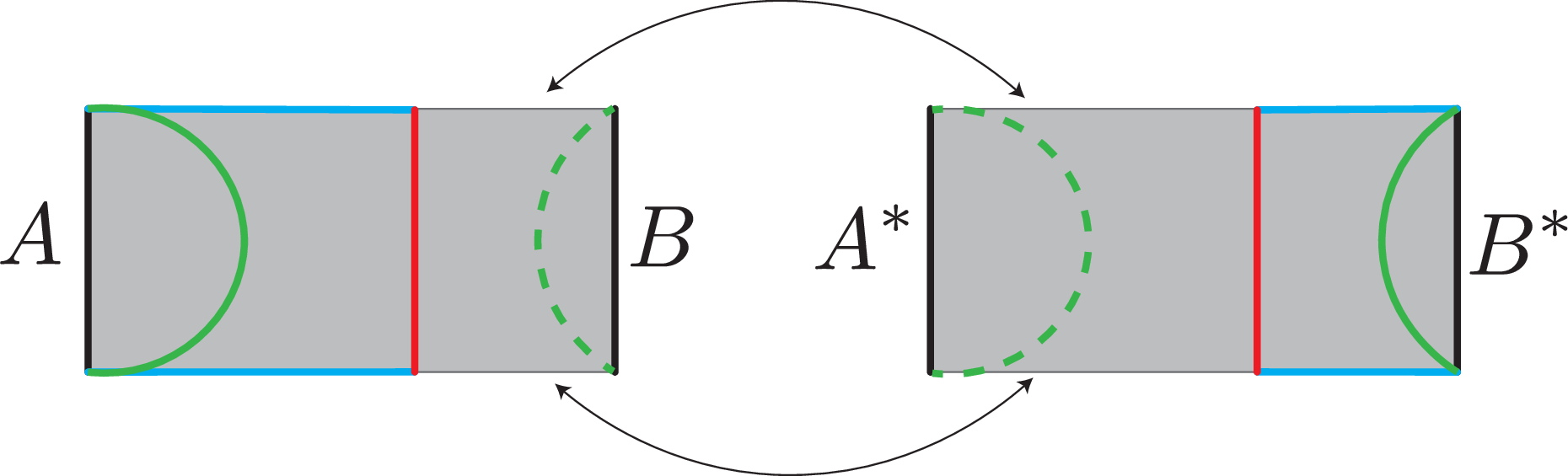}
    \caption{The spatial geometry dual to the doubled state $\ket{\rho_{AB}}_{AA^*BB^*}$ for $A,B$ chosen to be disjoint intervals in the vacuum state of a 2D holographic CFT is depicted. The left and right horizontal edges are identified so that the spatial slice is topologically a cylinder. The saddles computing R\'enyi entropy for $AB^*$ involve cosmic branes anchored to the respective subregions. For $k\geq1$, the true location of the cosmic brane of tension $T_k$ (opening angle $\frac{2\pi}{k}$) is at either the solid or dashed green surface, thus breaking the $\mathbb{Z}_2$ symmetry exchanging $AB$ with $A^*B^*$. The previously proposed $\mathbb{Z}_2$ symmetric configuration involves a union of cosmic branes on the entanglement wedge cross section (red) and the RT surface for $AB$ (blue; note that the left and right horizontal edges are identified). For $k<1$, the dominant saddle has cosmic branes of tension $\frac{T_k}{2}$ (opening angle $\pi+\frac{\pi}{k}$) at both the solid and dashed green surfaces.}
    \label{fig:cylinder}
\end{figure}

The calculation of R\'enyi entropy is by now standard in the context of holography, and there is a lot of evidence that at integer $k$, the dominant saddles are $\mathbb{Z}_k$ replica symmetric \cite{2013JHEP...08..090L,2016NatCo...712472D}.
Using this assumption, one can then quotient the bulk saddle by the $\mathbb{Z}_k$ replica symmetry to obtain a manifold with conical defects located at the fixed points of the $\mathbb{Z}_k$ symmetry.
Such conical defects can be thought of as being sourced by cosmic branes of tension $T_k=\frac{k-1}{4kG}$.

The preservation or breaking of $\mathbb{Z}_{2k}$ symmetry in the bulk saddles for $ \Tr\left(\rho_{AB}^{T_B}\right)^{2k} $ is then controlled by the location of the cosmic branes and whether they preserve the remaining $\mathbb{Z}_2$ symmetry.
For integer $k>1$, it is then easy to see in examples that this $\mathbb{Z}_2$ symmetry is generally broken. 
We demonstrate one such example in \figref{fig:cylinder} which arises in the computation of negativity for two disjoint intervals in the vacuum state. 
In the phase where the entanglement wedge of the union of the two intervals is connected, preservation of the $\mathbb{Z}_2$ symmetry requires the cosmic branes to intersect orthogonally, but such a configuration cannot descend from a $\mathbb{Z}_k$ quotient of a smooth parent manifold, as we prove in Appendix \ref{app:brane}. 
Moreover, there is a natural continuation of the cosmic brane saddles to arbitrary $k\geq1$.
Near $k=1$, where the cosmic brane becomes a probe RT surface, it is easy to see that any ``saddle'' with intersecting branes, even if included, would be subleading, since the area can be decreased by smoothing the corners. 
In this example, the $\mathbb{Z}_{2k}$ symmetric configuration has cosmic branes at the connected RT surface $\gamma_C$, $EW(A:B)$, and $EW(A^*:B^*)$. 
On the other hand, there are two $\mathbb{Z}_k$ symmetric configurations involving cosmic branes of tension $T_k$ at either $\gamma_{A}\cup \gamma_{B^*}$ or $\gamma_{A^*}\cup \gamma_{B}$, which are clearly smaller in area compared to the $\mathbb{Z}_{2k}$ symmetric configuration.
This breaking of the $\mathbb{Z}_{2k}$ replica symmetry in general holographic states is our first result.

A key feature of these $\mathbb{Z}_k$ symmetric saddles is that they are exactly degenerate with their image under the $\mathbb{Z}_2$ transformation exchanging $AB\leftrightarrow A^*B^*$. 
This means that for the ERN, we are always situated precisely at a ``phase transition.'' 
In general, and particularly near phase transitions, there can be leading order corrections to the R\'enyi entropies for $k<1$ \cite{2019arXiv191111977P,2020JHEP...11..007D,2020JHEP...12..084M,Dong:2023bfy}.
In this situation, we instead need to apply the modified cosmic brane proposal of Ref.~\cite{Dong:2023bfy} to compute the R\'enyi entropy.

In particular, for the ERN, we will demonstrate that the original cosmic brane proposal fails for all $k<1$, except in a few special situations.
Moreover, we prove that the ERN for $k<1$ is dominated by a saddle in the ``diagonal phase," where two cosmic branes with equal areas are on surfaces $\gamma_{A}\cup \gamma_{B^*}$ and $\gamma_{A^*}\cup \gamma_{B}$; see e.g., \figref{fig:cylinder}.
Interestingly, this restores the $\mathbb{Z}_2$ symmetry that was lost for $k>1$, allowing us to perform a further $\mathbb{Z}_2$ quotient.
This leads to our second result: a new, concrete holographic proposal for the ERN in the connected phase at arbitrary $k<1$ summarized by
\begin{equation}\la{ernresult}
    \mathcal{E}^{(2k)}(A:B)= 2 k \[I\(M_1\) - I\(M_1,\g_{AB}^{(\pi)},\(\g_{A}\cup\g_B\)^{(\pi+\pi/k)}\)\],
\end{equation}
where $I(f)$ is the gravitational action of the solution with boundary conditions $f$, $M_1$ represents the boundary conditions preparing the original state, and $\g_i^{(\phi)}$ represents the insertion of a conical defect of opening angle $\phi$ at surface $\g_i$.\footnote{In fixed-area states, we use the more general definition that $\g_i^{(\phi)}$ represents the insertion of a cosmic brane of tension $T_{2\pi/\p} = \frac{2\pi-\p}{8\pi G}$. Under this general definition, Eqs.~(\re{ernresult}) and (\re{lnresult}) also hold in the disconnected phase, where $\g_{AB}^{(\pi)} = \(\g_{A}\cup\g_B\)^{(\pi)}$ coincides and thus partially cancels $\(\g_{A}\cup\g_B\)^{(\pi+\pi/k)}$, resulting in a cosmic brane of net tension $\frac{2k-1}{8kG}$.}
In particular, the holographic proposal for the negativity in the connected phase is
\begin{align}\la{lnresult}
    \mathcal{E}(A:B) = I\(M_1\) - I\(M_1,\g_{AB}^{(\pi)},\(\g_{A}\cup\g_B\)^{(3\pi)}\).
\end{align}

\subsubsection*{Overview}

In \secref{sec:gravity}, we discuss the holographic dual of negativity.
Motivated by the identity \Eqref{eq:central_identity}, we formulate the holographic dual in terms of the doubled state $\ket{\rho_{AB}}$.
We first review the holographic construction of $\ket{\rho_{AB}}$ using the gravitational path integral.
Using this, we compute the ERN by applying the modified cosmic brane proposal to subregion $AB^*$ in the $\ket{\rho_{AB}}$ state.
We show that the original cosmic brane proposal must generally fail for any $k<1$ (except for very special cases).
Moreover, we show that the ERN for $k<1$ is always dominated by a saddle in the diagonal phase, resulting in a simple bulk dual for the negativity.

In \secref{sec:PSSY}, we illustrate our proposal using the simple example of the PSSY model \cite{2019arXiv191111977P}, reproducing the results obtained in Ref.~\cite{2021arXiv211011947D} and finding some new results.

In \secref{sec:intervals}, we revisit the example of computing the negativity for two disjoint intervals in the vacuum state of a 2D holographic CFT.  
We describe how the wrong assumption of $\mathbb{Z}_{2k}$ symmetry and dominance of a single channel was used in the calculations in Refs.~\cite{2019PhRvD..99j6014K,2019PhRvL.123m1603K}. 
In Appendix~\ref{sub:PRMI}, we provide a simpler example of the Petz R\'enyi mutual information, where a calculation under analogous assumptions can be performed that leads to an obviously incorrect answer.
We argue that the $\mathbb{Z}_{2k}$ symmetric configurations proposed by Refs.~\cite{2019PhRvD..99j6014K,2019PhRvL.123m1603K} fail to be gravitational saddles and, moreover, we demonstrate that our saddle dominates over their non-saddle contribution for the calculation of $\tilde{\mathcal{E}}^{(2)}$.
This provides significant evidence for our argument.

In \secref{sec:disc}, we summarize our results and discuss various aspects of our work. 
In \secref{sub:odd}, we discuss the calculation of odd moments of the partially transposed density matrix. 
In \secref{sub:nmRTNdisc}, we discuss shortcomings of random tensor networks with non-flat entanglement spectra (nfRTNs) and potential ways to improve them as models of holography.
The detailed differences in nfRTNs and gravity are explained in Appendix~\ref{sec:RTN}.
Appendix~\ref{app:brane} discusses brane intersections that can descend from a quotient of smooth parent spacetimes.

\section{Holographic Dual of Negativity}
\label{sec:gravity}

In this section, we describe our proposal for the holographic dual of negativity.
First, we describe the gravity dual of the auxiliary state $\ket{\rho_{AB}}$. 
With this state in hand, we simply need to evaluate the R\'enyi entropies. 
We review the modified cosmic brane proposal for computing R\'enyi entropy holographically.
Putting these ingredients together, we arrive at a proposal for the holographic dual of ERNs, and most importantly the logarithmic negativity itself.

\subsection{The holographic dual of $\ket{\rho_{AB}}$}

For simplicity, let us consider a CFT state $\ket{\psi}_{ABC}$ that enjoys a time-reversal symmetry and can be prepared using a Euclidean path integral on a manifold $M_1$.
By the AdS/CFT dictionary, its bulk dual can then be obtained from the corresponding gravitational saddle consistent with the boundary conditions specified by the CFT path integral.
In particular, we use the Euclidean path integral to compute its norm $\bra{\psi}\ket{\psi}$ and the corresponding Euclidean bulk geometry is labelled $\mathcal{B}_1$ which satisfies $\pa \mathcal{B}_1=M_1$.
The $\mathbb{Z}_2$ symmetric slice $\Sigma_1$ of $\mathcal{B}_1$ provides initial data for obtaining the Lorentzian spacetime associated to $\ket{\psi}$.

Given the reduced density matrix $\rho_{AB}$ on subregion $AB$ in state $\ket{\psi}$, we would like to reinterpret it as a pure state $\ket{\rho_{AB}}$ that lives in a doubled Hilbert space $\mathcal{H}_{AB}\otimes \mathcal{H}_{A^*B^*}$.\footnote{This procedure of mapping operators acting on $\mathcal{H}_{AB}$ to states in a doubled Hilbert space is familiar in the context of canonical purification, and the interested reader can find more details in Ref.~\cite{2019arXiv190500577D}. In mathematical literature, it is commonly known as the Choi–Jamiolkowski isomorphism \cite{choi1975completely,jamiolkowski1972linear}.
}
In particular, the map defines the inner product between states as $\bra{\sigma_1}\ket{\sigma_2}=\Tr\[\sigma_1^\dagger\sigma_2\]$, where $\sigma_1,\sigma_2$ are operators acting on $\mathcal{H}_{AB}$.

More explicitly, let us pick a basis $\ket{i}_A,\ket{j}_B$ for $\mathcal{H}_A,\mathcal{H}_B$ respectively.
In this basis, the reduced density matrix $\rho_{AB}$ takes the form
\begin{equation}
    \rho_{AB} = \sum_{i,i'=1}^{d_A} \sum_{j,j'=1}^{d_B} \rho_{iji'j'}\ket{i}_A\ket{j}_B\bra{i'}_A\bra{j'}_B.
\end{equation}
Then the doubled state $\ket{\rho_{AB}}$ is given by
\begin{equation}
    \ket{\rho_{AB}} = \sum_{i,i'=1}^{d_A} \sum_{j,j'=1}^{d_B} \rho_{iji'j'}\ket{i}_A\ket{j}_B\ket{i'}_{A^*}\ket{j'}_{B^*}.
\end{equation}
Using these expressions, it is also straightforward to derive the identity \Eqref{eq:central_identity} for which we have provided a diagrammatic proof in \figref{fig:identity}.
Note that the state $\ket{\rho_{AB}}$ depends on the choice of the basis $\ket{i}_A,\ket{j}_B$ for $\mathcal{H}_A,\mathcal{H}_B$ used in transforming to the doubled Hilbert space.
However, entropies computed in $\ket{\rho_{AB}}$ for any combination of $A,A^*,B,B^*$ are independent of the basis choice for $\mathcal{H}_A,\mathcal{H}_B$.

\begin{figure}
    \centering
    \includegraphics[scale=0.25]{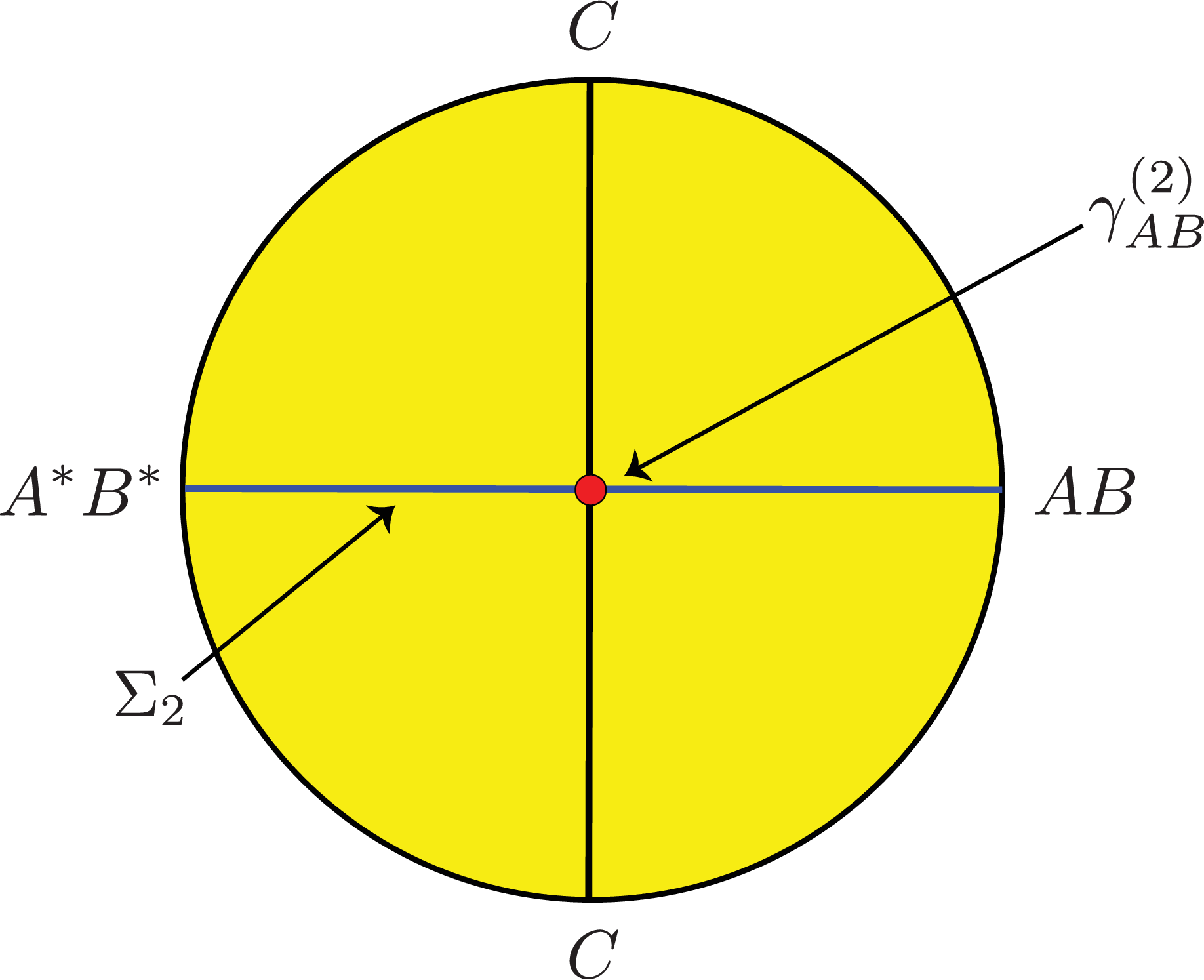}
    \caption{The Euclidean geometry $\mathcal{B}_2$ computing the norm of $\ket{\rho_{AB}}$ has a $\mathbb{Z}_2$ time reversal symmetry, as well as a $\mathbb{Z}_2$ symmetry under the exchange $AB\leftrightarrow A^* B^*$. The time reversal symmetric slice $\Sigma_2$ (blue) contains $\gamma_{AB}^{(2)}$ (red) which is the RT surface for subregion $AB$ in the state $\ket{\rho_{AB}}$.}
    \label{fig:doubled}
\end{figure}

The norm of the doubled state is given by\footnote{We remind the reader that $\ket{\rho_{AB}}$ is unnormalized in general.}
\begin{equation}\label{eq:norm}
    \bra{\rho_{AB}}\ket{\rho_{AB}} = \Tr\[\rho_{AB}^2\],
\end{equation}
which can be computed using a Euclidean path integral in the CFT on manifold $M_2^{AB}$ that is a double branched cover of $M_1$ over subregion $AB$.
Following the same logic as above, we can find the dual bulk geometry associated to $\ket{\rho_{AB}}$ by considering the Euclidean saddle $\mathcal{B}_2$ such that $\pa \mathcal{B}_2=M_2^{AB}$.
Following Ref.~\cite{2013JHEP...08..090L}, it is standard to assume that the dominant saddle $\mathcal{B}_2$ respects the $\mathbb{Z}_2$ symmetry permuting the two copies of the boundary CFT glued together in computing \Eqref{eq:norm}.
Moreover, in the presence of a time-reflection symmetry as we have assumed, this symmetry is enhanced to a dihedral $D_2$ symmetry \cite{2019arXiv190500577D}.

Cutting open $\mathcal{B}_2$ on the time-symmetric slice $\Sigma_2$ gives us the initial data for obtaining the Lorentzian spacetime associated to $\ket{\rho_{AB}}$.
For the purpose of computing $\mathcal{E}^{(2k)}(A:B)$, we are interested in computing the $k$-th R\'enyi entropy of subregion $AB^*$ in the state $\ket{\rho_{AB}}$.
In order to do so, we will briefly review the computation of R\'enyi entropies in general holographic states closely following Ref.~\cite{Dong:2023bfy}.

\subsection{Holographic R\'enyi entropy}
In a state $\ket{\psi}$, the R\'enyi entropy of a subregion $R$ is defined as
\begin{align}
    S_n(R) = \frac{1}{1-n}\log \Tr\rho_R^n,
\end{align}
where $\rho_R$ is the reduced density matrix on $R$.
For integer $n>1$, the R\'enyi entropy may be computed via the gravitational path integral using the standard replica trick.
In the semi-classical limit ($G \rightarrow 0$), the saddle point approximation is valid and we may approximate the path integral by a single gravitational configuration, $\mathcal{B}_n$. 
It is standard to assume that this dominant configuration in the bulk preserves the $\mathbb{Z}_n$ replica symmetry of the boundary manifold $M_n^R$ \cite{2013JHEP...08..090L} which is an $n$-sheeted branched cover over $R$ of manifold $M_1$ that computes the norm $\braket{\psi}$.
By now there is a lot of evidence in the literature for this assumption of replica symmetry to be true at leading order in $G$.

Assuming replica symmetry, one may then quotient $\mathcal{B}_n$ by $\mathbb{Z}_n$ to obtain a new geometry $\hat{\mathcal{B}}_n$ that has a conical defect with opening angle $\frac{2\pi}{n}$ emanating from the branching points of $M_n^R$. 
While $\mathcal{B}_n$ made sense only for integer $n$, there is a natural continuation of $\hat{\mathcal{B}}_n$ to non-integer $n$ by tuning the opening angle. 
This is equivalent to solving the bulk equations of motion with a cosmic brane anchored to the boundary entanglement surface with tension $T_n = \frac{n-1}{4nG}$ \cite{2013JHEP...08..090L,2016NatCo...712472D}. 
The moments of the density matrix are then
\begin{align}
    \Tr \rho_R^n = e^{-n(I[\hat{\mathcal{B}}_n]-I[\hat{\mathcal{B}}_1])},
\end{align}
where $I[\mathcal{B}]$ is the gravitational action of bulk manifold $\mathcal{B}$. 

In Ref.~\cite{Dong:2023bfy}, we have argued that the original cosmic brane proposal can fail when there are two or more candidate extremal surfaces for the subregion of interest, in which case one must employ a modified cosmic brane proposal that correctly computes the R\'enyi entropy even for $n<1$.
Consider a situation when there are two candidate extremal surfaces $\gamma_{1,2}$ with areas $A_{1,2}$ as will be relevant for the negativity calculation.
Let $p\(A_1,A_2\)$ be the probability distribution over the two areas in state $\ket{\psi}$ defined as
\begin{align}\label{eq:prob}
       p\(A_1,A_2\)&=\frac{\<\y|P_{A_1,A_2}|\y\>}{\<\y|\y\>},
\end{align}
where $P_{A_1,A_2}$ is a projector onto definite values $A_{1,2}$ of the areas of surfaces $\g_{1,2}$.
The probability distribution can be computed using the gravitational path integral as \cite{2020JHEP...11..007D,2020JHEP...12..084M,Dong:2023bfy,2021JHEP...04..062A}
\begin{align}
       p\(A_1,A_2\)&=\exp\(I\[\mathcal{B}_1\]-I\[\mathcal{B}_{A_1,A_2}\]\),\label{eq:prob2}
\end{align}
where $\mathcal{B}_{A_1,A_2}$ is the fixed-area saddle obtained by solving the equations of motion given $M_1$ as asymptotic boundary condition and areas $A_{1,2}$ at the surfaces $\g_{1,2}$.

The original cosmic brane proposal can then be reformulated as \cite{Dong:2023bfy}
\begin{align}\label{eq:cosmic_A}
    S_n^{C}(R) = \begin{cases}
        \displaystyle \frac{1}{1-n} \max_{i=1,2}\,\max_{A_1,A_2}\( n\log p\(A_1,A_2\) +\frac{(1-n)}{4G}A_i\)&n\geq1,\\
        \displaystyle \frac{1}{1-n} \min_{i=1,2}\,\max_{A_1,A_2}\( n\log p\(A_1,A_2\) +\frac{(1-n)}{4G}A_i\)&n<1.
    \end{cases}
\end{align}
It is useful to understand this by writing down the maximization condition coming from \Eqref{eq:cosmic_A}: e.g., in the case of $i=1$ we have
\begin{align}\label{eq:max11}
    \frac{\pa I\[\mathcal{B}_{A_1,A_2}\]}{\pa A_1}&= \frac{1-n}{4 n G},\\
    \frac{\pa I\[\mathcal{B}_{A_1,A_2}\]}{\pa A_2}&=0,\label{eq:max21}
\end{align}
where we have used \Eqref{eq:prob2}.
These equations are the equations of motion arising from the insertion of a cosmic brane of the appropriate tension $T_n$.
For a holographic state prepared using a smooth gravitational path integral, the right-hand side (RHS) of \Eqref{eq:max11} and \Eqref{eq:max21} can be related to the conical opening angle at the given surface and result in $\phi_1=\frac{2\pi}{n}$ and $\phi_2=2\pi$ as required for the original cosmic brane proposal \cite{Dong:2023bfy}.

On the other hand, the modified cosmic brane proposal, based on the assumption of a diagonal approximation in the fixed-area basis,\footnote{This approximation was later justified in Ref.~\cite{Penington:2024jmt}.} is given by \cite{Dong:2023bfy}
\begin{align}\label{eq:MCS}
    S_n^{MC}(R) = \begin{cases}
        \displaystyle \frac{1}{1-n}\max_{A_1,A_2}\, \max_{i=1,2}\( n\log p\(A_1,A_2\) +\frac{(1-n)}{4G}A_i\)&n\geq1,\\
        \displaystyle \frac{1}{1-n}\max_{A_1,A_2}\, \min_{i=1,2}\( n\log p\(A_1,A_2\) +\frac{(1-n)}{4G}A_i\)&n<1,
    \end{cases}
\end{align}
which differs from the original cosmic brane proposal in the order of optimization. 
It is straightforward to see that the two proposals agree for $n\geq 1$ but can disagree for $n<1$ \cite{Dong:2023bfy}.
For $n<1$, we will compute R\'enyi entropies using the modified cosmic brane proposal \Eqref{eq:MCS}. 

It was shown in Ref.~\cite{Dong:2023bfy} that the modified cosmic brane proposal agrees with the original cosmic brane proposal if and only if one of the original cosmic brane saddles satisfies the minimality constraint, i.e., the area of the cosmic brane is no greater than the area of the other candidate RT surface.
However, for $n<1$, it is possible that neither of the original cosmic brane saddles satisfies the minimality constraint.
In this case, the dominant contribution to the modified cosmic brane proposal comes either from a diagonal saddle with $A_1=A_2$ where the cosmic brane tension is distributed over the two surfaces, or from a subleading saddle for the original cosmic brane proposal which satisfies the minimality constraint.
This will in fact always turn out to be the case for the computation of ERNs with $k<1$, which we now turn to.

\subsection{Holographic negativity}

Using our identity \Eqref{eq:central_identity}, we have the following formula for the ERN as discussed in \secref{sec:intro}:
\begin{align}\label{eq:results}
\mathcal{E}^{(2k)}(A:B) &= -(k-1)S_k\(\bar{\rho}^{(2)}_{AB^*}\) - k S_2\(\rho_{AB}\),
\end{align}
where we remind the reader that the above formulas are purely from the boundary perspective, and we have not used holography in deriving them.
With the holographic description of the state $\ket{\rho_{AB}}$ in hand, it is now straightforward to obtain the ERN by applying the modified cosmic brane proposal \Eqref{eq:MCS}.
Since $k\geq1$ and $k<1$ have qualitatively different behaviors, we will discuss them separately.

\subsubsection{$k\geq1$}

For the $k\geq1$ ERN we obtain
\begin{align}\label{eq:main1}
    \mathcal{E}^{(2k)}(A:B) = -k S_2(\rho_{AB})+
        \displaystyle \max_{A_1,A_2}\, \max_{i=1,2}\( k\log p^{(2)}\(A_1,A_2\) +\frac{(1-k)}{4G}A_i\)
\end{align}
where $p^{(2)}\(A_1,A_2\)$ is the probability distribution over the areas of the two candidate HRT surfaces, $\g_{A}\cup \g_{B^*}$ and $\g_{A^*}\cup \g_{B}$, for subregion $AB^*$ in the doubled state $\ket{\rho_{AB}}$.\footnote{We assume that other candidate HRT surfaces (which would lead to a connected phase for the entanglement wedge of $AB^*$) are not relevant. We will give an argument for this in the case of $k\leq 1$ at the end of this subsection.}
Note that \Eqref{eq:main1} applies to holographic states with arbitrary area distributions, but we will now focus on the case of greatest interest: holographic states prepared by a smooth gravitational path integral.

As discussed before, we can equivalently apply the original cosmic brane proposal for $k\geq1$ by swapping the order of maximization in \Eqref{eq:main1}.
By symmetry, there are two degenerate saddles (one for each of $i=1,2$) and we can consider either of them at leading order.
From the maximization conditions \Eqref{eq:max11} and \Eqref{eq:max21}, we see that a cosmic brane of tension $T_k$ is inserted at either $\g_1=\g_A\cup\g_{B^*}$ or $\g_2=\g_{A^*}\cup\g_B$.
Let us label the saddle that solves these maximization conditions as $\hat{\mathcal{B}}_2^{(k)}$ which satisfies $\pa \hat{\mathcal{B}}_2^{(k)}=M_2^{AB}$ and has conical defects of opening angle $\frac{2\pi}{k}$ at the surfaces $\g_A$ and $\g_{B^*}$ (or equivalently at $\g_{A^*}$ and $\g_{B}$).
Recalling \Eqref{eq:prob2} for the probability distribution over areas, the fact $\langle \rho_{AB}|\rho_{AB}\rangle=\Tr\(\rho_{AB}^2\)$, and the definition of the R\'enyi entropy, we arrive at
\begin{equation}\label{eq:km1}
    \mathcal{E}^{(2k)}(A:B)= k \(2I\[\mathcal{B}_1\] - I\[\hat{\mathcal{B}}_2^{(k)}\]\).
\end{equation}
Following Ref.~\cite{2021JHEP...06..024D}, we can define $I\[\hat{\mathcal{B}}_2^{(k)}\]\equiv I\(M_2^{AB},\g_A^{(2\pi/k)},\g_{B^*}^{(2\pi/k)}\)$ to emphasize the boundary conditions associated to the on-shell action and rewrite \Eqref{eq:km1} as
\begin{equation}\label{eq:km2}
    \mathcal{E}^{(2k)}(A:B)= k \[2I\(M_1\) - I\(M_2^{AB},\g_A^{(2\pi/k)},\g_{B^*}^{(2\pi/k)}\)\].
\end{equation}
This is precisely the result of Ref.~\cite{2021JHEP...06..024D} which made this proposal for general holographic states after obtaining their RTN results.
We have provided a boundary argument here for justifying their proposal for general states (in the sense of converting the calculation for the ERN to one for R\'enyi entropies on the boundary).
Moreover, we will see that their proposal only agrees with ours for $k\geq1$ since the original cosmic brane proposal fails for $k<1$.

For integer $k$, it is clear from \Eqref{eq:central_identity} and the standard assumption of replica symmetry in the R\'enyi entropy calculation that the saddle computing $\Tr \[\(\rho_{AB}^{T_B} \)^{2k}\]$ is guaranteed to preserve a $\mathbb{Z}_k$ symmetry.
Whether it preserves the full $\mathbb{Z}_{2k}$ symmetry depends on whether the location of the cosmic branes preserves the remaining $\mathbb{Z}_2$ symmetry.
We will show multiple examples in \secref{sec:PSSY} and \secref{sec:intervals} where the remaining $\mathbb{Z}_2$ symmetry is indeed broken, and in general we expect to always have a codimension-0 region in the parameter space where this occurs, although this is difficult to prove.

Before moving on to $k<1$, we would briefly like to mention a related quantity called the refined R\'enyi negativity (RRN)\footnote{As mentioned earlier, we will focus on the even case, so we will refrain from using the long name ``refined even R\'enyi negativity''.} given by
\begin{align}
    \tilde{\mathcal{E}}^{(2k)}(A:B) &= -k^2 \partial_k\left(\frac{1}{k} \mathcal{E}^{(2k)}(A:B) \right).\end{align}
The RRN is convenient to study because it will turn out to have a relatively simple geometric dual.
Using \Eqref{eq:central_identity}, the RRN simply becomes a refined R\'enyi entropy \cite{2016NatCo...712472D} 
\begin{align}
    \tilde{\mathcal{E}}^{(2k)}(A:B) &= \tilde{S}_k \left(\bar{\rho}^{(2)}_{AB^*}\right),\qqu
    \tilde{S}_k := k^2 \pa_k \(\fr{k-1}{k} S_k\),
\end{align}
which can also be computed using either the original or the modified cosmic brane proposal for $k\leq 1$.
A particularly useful limit for later purposes will be the RRN at $k=1$, where the refined R\'enyi entropy becomes an entanglement entropy and the bulk dual is given by
\begin{equation}
    \tilde{\mathcal{E}}^{(2)}(A:B) = \frac{\text{Area}\(\g_A\cup\g_{B^*}:\mathcal{B}_2\)}{4G},
\end{equation}
which is the area of probe extremal surfaces in the geometry corresponding to the doubled state.
By symmetry, of course, the same area also corresponds to $\g_{A^*}\cup\g_B$.

\subsubsection{$k<1$}

For the $k<1$ ERN we obtain
\begin{align}\label{eq:main2}
    \mathcal{E}^{(2k)}(A:B) = -k S_2(\rho_{AB})+
        \displaystyle \max_{A_1,A_2}\, \min_{i=1,2}\( k\log p^{(2)}\(A_1,A_2\) +\frac{(1-k)}{4G}A_i\).
\end{align}
In particular for the negativity $\(k=1/2\)$, we obtain
\begin{align}\label{eq:mainresult}
    \mathcal{E}(A:B) = -\frac{1}{2}S_2(\rho_{AB})+\frac{1}{2}\displaystyle \max_{A_1,A_2}\, \min_{i=1,2}\(\log p^{(2)}\(A_1,A_2\) +\frac{A_i}{4G}\).
\end{align}
As a sanity check, we can see that our results agree with those of Ref.~\cite{2021JHEP...06..024D} for fixed-area states as expected.
For simplicity, we will only look at the negativity.
For fixed-area states, the R\'enyi spectrum for $\rho_{AB}$ is flat \cite{2019JHEP...05..052A,2019JHEP...10..240D,2020JHEP...03..191D}.
Moreover, $p^{(2)}(A_1,A_2)$ is a probability distribution sharply localized at $A_1=A_2=4G(S(\rho_A)+S(\rho_B))$, since the areas of these surfaces are fixed; see, e.g., \figref{splitting_fig}.
Using this in \Eqref{eq:mainresult}, we obtain $\mathcal{E}(A:B)=\frac{1}{2}I(A:B)$ at leading order, in agreement with Ref.~\cite{2021JHEP...06..024D}.

We now discuss the failure of the original cosmic brane proposal.
In the special case of the calculation of ERN, this will in fact be quite generally true as we now show.
The special feature of the calculation of ERN is that the state $\ket{\rho_{AB}}$ has a $\mathbb{Z}_2$ symmetry.
Assuming the doubled state is in the connected phase, for any candidate RT surface that breaks the $\mathbb{Z}_2$ symmetry, we have another candidate RT surface arising from its $\mathbb{Z}_2$ image.\footnote{The results of Ref.~\cite{2021JHEP...06..024D} remain unchanged for the disconnected phase, so we will not focus on it.}
Moreover, because of the $\mathbb{Z}_2$ symmetry in $\ket{\rho_{AB}}$, we have the symmetry $p^{(2)}\(A_1,A_2\)=p^{(2)}\(A_2,A_1\)$.
We can now compare the original cosmic brane proposal and the modified cosmic brane proposal in this setting.

To do so, we will first establish some notation borrowed from Ref.~\cite{Dong:2023bfy}.
Let $f_i\equiv k \log p^{(2)}\(A_1,A_2\)+(1-k)\frac{A_i}{4G}$ for $i=1,2$.
Let $\vec{A^{(i)}}=\(A_1^{(i)},A_2^{(i)}\)$ be the point in the $(A_1, A_2)$ parameter space that maximizes $f_i$ subject to the minimality constraint $A_i\leq A_{3-i}$.
Further define $\vec{\tilde{A}^{(i)}}=\(\tilde{A}_1^{(i)},\tilde{A}_2^{(i)}\)$ to be the point in the parameter space that maximizes $f_i$ without any constraint.
Each of the above points in the parameter space depends on $k$.

For the ERN, we can then prove the following theorem:
\begin{nthm}\label{theorem}
The original cosmic brane proposal fails to correctly compute $\mathcal{E}^{(2k)}(A:B)$ for $k<1$ unless $\vec{\tilde{A}^{(1)}}=\vec{\tilde{A}^{(2)}}$ and they both lie on the diagonal $A_1=A_2$.
\end{nthm}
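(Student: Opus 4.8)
The plan is to reduce the comparison between the two cosmic brane proposals to a single ordering statement about the unconstrained maximizers of $f_1$ and $f_2$, and then to use the $\mathbb{Z}_2$ symmetry $p^{(2)}(A_1,A_2)=p^{(2)}(A_2,A_1)$ of the doubled state to pin that ordering down. The starting point is the characterization recalled above from Ref.~\cite{Dong:2023bfy}: for $k<1$ the original cosmic brane proposal agrees with the (correct) modified proposal if and only if at least one of the original saddles $\vec{\tilde{A^{(i)}}}$ obeys its minimality constraint $\tilde{A}_i^{(i)}\leq\tilde{A}_{3-i}^{(i)}$. Since the modified proposal computes $\mathcal{E}^{(2k)}(A:B)$ correctly, ``the original proposal fails'' is exactly ``neither saddle obeys minimality,'' so it suffices to decide when minimality can hold.

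First I would use the symmetry to collapse the two minimality conditions into one. Writing $g(A_1,A_2)\equiv\log p^{(2)}(A_1,A_2)$ and letting $\sigma:(A_1,A_2)\mapsto(A_2,A_1)$ denote the swap, the symmetry of $p^{(2)}$ gives $g\circ\sigma=g$, hence $f_2=f_1\circ\sigma$. Therefore, if $\vec{\tilde{A^{(1)}}}$ maximizes $f_1$ then $\sigma(\vec{\tilde{A^{(1)}}})$ maximizes $f_2$, so $\vec{\tilde{A^{(2)}}}=(\tilde{A}_2^{(1)},\tilde{A}_1^{(1)})$. Feeding this into the $i=2$ minimality condition $\tilde{A}_2^{(2)}\leq\tilde{A}_1^{(2)}$ turns it into $\tilde{A}_1^{(1)}\leq\tilde{A}_2^{(1)}$, which is identical to the $i=1$ condition. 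Thus the two proposals agree if and only if $\tilde{A}_1^{(1)}\leq\tilde{A}_2^{(1)}$.

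The key lemma is that for $k<1$ the opposite inequality is automatic, so the two can only be reconciled on the diagonal. This follows from the elementary identity
\begin{equation}
    f_1(A_1,A_2)-f_1(A_2,A_1)=(1-k)\,\frac{A_1-A_2}{4G},
\end{equation}
where the probability terms cancel by symmetry of $g$. For $k<1$ the prefactor $1-k$ is strictly positive, so every point with $A_1>A_2$ strictly beats its swap. Hence no maximizer of $f_1$ can have $\tilde{A}_1^{(1)}<\tilde{A}_2^{(1)}$, forcing $\tilde{A}_1^{(1)}\geq\tilde{A}_2^{(1)}$ with equality precisely when $\vec{\tilde{A^{(1)}}}$ lies on the diagonal $A_1=A_2$. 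Combined with the previous paragraph, agreement therefore demands $\tilde{A}_1^{(1)}=\tilde{A}_2^{(1)}$; and once $\vec{\tilde{A^{(1)}}}$ is on the diagonal the relation $\vec{\tilde{A^{(2)}}}=\sigma(\vec{\tilde{A^{(1)}}})$ immediately gives $\vec{\tilde{A^{(1)}}}=\vec{\tilde{A^{(2)}}}$, which is exactly the stated exceptional case.

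I expect the main obstacle to be bookkeeping rather than conceptual. One must verify that the maximizers $\vec{\tilde{A^{(i)}}}$ are well defined---uniqueness holds when $g$ is strictly concave, as expected for a saddle-point area distribution peaked on the diagonal---and that the boundary/degenerate configuration where the maximizer sits on the diagonal is treated consistently with the non-strict ($\leq$) minimality constraint. A useful independent check, which also avoids invoking the characterization of Ref.~\cite{Dong:2023bfy}, is to evaluate both proposals directly: the symmetry forces $\min_i\max_{A_1,A_2}f_i=f_1(\vec{\tilde{A^{(1)}}})$, whereas $\max_{A_1,A_2}\min_i f_i$ is attained on the diagonal at the ``diagonal phase'' saddle; their equality then reproduces the same diagonal condition.
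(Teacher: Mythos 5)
Your proof is correct, and it reaches the paper's conclusion by a genuinely different (and somewhat more economical) route. Both arguments import from Ref.~\cite{Dong:2023bfy} the characterization that the original proposal agrees with the modified one iff at least one unconstrained saddle $\vec{\tilde{A}^{(i)}}$ satisfies its minimality constraint, and both ultimately rest on the same elementary identity $f_1(A_1,A_2)-f_2(A_1,A_2)=(1-k)(A_1-A_2)/(4G)$. But they deploy it differently. The paper splits into cases according to whether the constrained and unconstrained maximizers coincide, and in the nontrivial case runs the chain $f_1(\vec{\tilde{A}^{(1)}})=f_1(\vec{A^{(1)}})\le f_2(\vec{A^{(1)}})\le f_2(\vec{\tilde{A}^{(2)}})$ together with its $\mathbb{Z}_2$ image, forcing equality throughout and then extracting the diagonal condition from the equality case of the middle inequality. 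You instead use $f_2=f_1\circ\sigma$ to collapse both minimality conditions into the single inequality $\tilde{A}_1^{(1)}\le\tilde{A}_2^{(1)}$, and then show by a direct swap comparison that any unconstrained maximizer of $f_1$ automatically satisfies the opposite inequality when $k<1$, so the two can coexist only on the diagonal, after which $\vec{\tilde{A}^{(2)}}=\sigma(\vec{\tilde{A}^{(1)}})=\vec{\tilde{A}^{(1)}}$ follows immediately. Your version replaces a two-sided chain plus equality-case analysis with one reflection argument, and it makes the mechanism transparent: for $k<1$ the negative-tension brane prefers to sit on the \emph{larger} of the two candidate surfaces, which is precisely what minimality forbids. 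The only caveat, which you already flag and which the paper's proof shares implicitly, is that in degenerate situations with multiple unconstrained maximizers the relation $\vec{\tilde{A}^{(2)}}=\sigma(\vec{\tilde{A}^{(1)}})$ and the minimality test should be read set-wise.
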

\begin{proof}
If neither $\vec{A^{(1)}}=\vec{\tilde{A}^{(1)}}$ nor $\vec{A^{(2)}}=\vec{\tilde{A}^{(2)}}$, then it is clear from Theorem~2 of Ref.~\cite{Dong:2023bfy} that the original cosmic brane proposal fails.
If $\vec{A^{(1)}}=\vec{\tilde{A}^{(1)}}$, then by the $\mathbb{Z}_2$ symmetry, we also have $\vec{A^{(2)}}=\vec{\tilde{A}^{(2)}}$.
Moreover, for $k<1$, we repeat the argument proving Lemma~1 of Ref.~\cite{Dong:2023bfy} and find
\begin{align}\label{eq:proof1}
        f_1\(\vec{\tilde{A}^{(1)}}\) = f_1\(\vec{A^{(1)}}\) &= k \log p^{(2)}\(A_1^{(1)},A_2^{(1)}\)+(1-k)\frac{A_1^{(1)}}{4G}\\\label{eq:proof12}
        &\leq k \log p^{(2)}\(A_1^{(1)},A_2^{(1)}\)+(1-k)\frac{A_2^{(1)}}{4G}=f_2\(\vec{A^{(1)}}\)\\
        &\leq f_2\(\vec{\tilde{A}^{(2)}}\),
\end{align}
where the second line uses $k<1$ and the fact that $\vec{A^{(1)}}$ by definition lies within the constrained domain $A_1\leq A_2$, and the third line uses the fact that $\vec{\tilde{A}^{(2)}}$ is the unconstrained maximum of $f_2$.
Due to the $\mathbb{Z}_2$ symmetry, the same argument can be repeated with the two candidate RT surfaces swapped to obtain $f_2\(\vec{\tilde{A}^{(2)}}\)\leq f_1\(\vec{\tilde{A}^{(1)}}\)$, thus implying equality.
The condition of equality implies that $\vec{\tilde{A}^{(1)}}$ and $\vec{\tilde{A}^{(2)}}$ both lie on the diagonal $A_1=A_2$.
Moreover, on the diagonal, the functions $f_i$ are identical and thus, the optimums must be the same, i.e., $\vec{\tilde{A}^{(1)}}=\vec{\tilde{A}^{(2)}}$.
\end{proof}

Theorem~\ref{theorem} shows that the original cosmic brane proposal fails generically, with the exception being the case where the two cosmic brane saddles are identical and they both have two exactly degenerate RT surfaces.
This exceptional case does happen, for instance, for fixed-area states, but for general holographic states, the modified cosmic brane proposal becomes crucial.

So far we only used the $\mathbb{Z}_2$ symmetry of the doubled state, but we will now use its explicit form in order to obtain a stronger result.
We will show that for $k<1$, and in particular for the negativity, the optimum in \Eqref{eq:main2} and \Eqref{eq:mainresult} is always achieved on the diagonal $A_1=A_2$.
This phase was called the diagonal phase in Ref.~\cite{Dong:2023bfy} and our theorem below amounts to proving that the saddle computing the negativity is always in the diagonal phase.
This then allows us to write down a simpler holographic dual for negativity.

We start by proving the following lemma:
\begin{nlemma}\la{lemmatrproj}
Let $\r$ be a density operator and $P_1$, $P_2$ be two projection operators. Then
\be\la{trproj}
\Tr(\r P_1 \r P_2) \leq \sqrt{ \Tr(\r P_1 \r P_1) \Tr(\r P_2 \r P_2) }.
\ee
\end{nlemma}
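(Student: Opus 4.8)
The plan is to read \eqref{trproj} as a Cauchy--Schwarz inequality for the Hilbert--Schmidt inner product $\langle X, Y\rangle = \Tr(X^\dagger Y)$, after rewriting all three traces in terms of a single pair of positive operators. First I would use positivity of the density operator to factor $\rho = \sigma^2$ with $\sigma = \sqrt{\rho}\geq 0$, and introduce
$M := \sigma P_1 \sigma$ and $N := \sigma P_2 \sigma$.
Because each $P_i$ is a projection we have $P_i = P_i^\dagger P_i$, so $M = (P_1\sigma)^\dagger(P_1\sigma)\geq 0$ and likewise $N\geq 0$; both are Hermitian and positive semidefinite, which is the feature that powers the argument.

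Next I would establish the two identifications, each following from cyclicity of the trace by shifting one factor of $\sigma$ around: $\Tr(\rho P_1 \rho P_2) = \Tr\!\big(\sigma P_1\sigma\,\sigma P_2\sigma\big) = \Tr(MN)$, and similarly $\Tr(\rho P_i \rho P_i) = \Tr(M^2)$ for $i=1$ (resp.\ $\Tr(N^2)$ for $i=2$). In particular $\Tr(MN)$ is real since $M, N$ are Hermitian, and in fact nonnegative since $M, N\geq 0$ gives $\Tr(MN) = \Tr(M^{1/2}NM^{1/2})\geq 0$; this confirms that the left-hand side of \eqref{trproj} is the genuine object being bounded.

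Finally I would apply Cauchy--Schwarz in the form $|\langle M, N\rangle|\leq \|M\|_{\text{HS}}\|N\|_{\text{HS}}$; for Hermitian $M, N$ this is $|\Tr(MN)|\leq \sqrt{\Tr(M^2)\,\Tr(N^2)}$, and combined with $\Tr(MN)\geq 0$ it yields exactly $\Tr(\rho P_1\rho P_2)\leq \sqrt{\Tr(\rho P_1\rho P_1)\,\Tr(\rho P_2\rho P_2)}$, which is \eqref{trproj}. The only step demanding real care is the bookkeeping in the middle paragraph: correctly routing the $\sigma$ factors through the cyclic trace so that the off-diagonal trace collapses to $\Tr(MN)$ while the two diagonal traces collapse to $\Tr(M^2)$ and $\Tr(N^2)$. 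I would also remark that the hypothesis $P_1 P_2 = 0$ is not actually used anywhere in this argument---the inequality holds for any two projections---so orthogonality is merely the setting in which the lemma is later applied, where $P_1, P_2$ project onto distinct area sectors.
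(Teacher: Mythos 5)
Your proof is correct, and it takes a genuinely different route from the one in the paper. You write $\rho=\sigma^2$ with $\sigma=\sqrt{\rho}$, set $M=\sigma P_1\sigma\geq 0$ and $N=\sigma P_2\sigma\geq 0$, and reduce the claim to the Hilbert--Schmidt Cauchy--Schwarz inequality $|\Tr(MN)|\leq\sqrt{\Tr(M^2)\Tr(N^2)}$ together with the positivity fact $\Tr(MN)=\Tr(M^{1/2}NM^{1/2})\geq 0$; the trace identifications $\Tr(\rho P_1\rho P_2)=\Tr(MN)$ and $\Tr(\rho P_i\rho P_i)=\Tr(M^2)$ (resp.\ $\Tr(N^2)$) follow from cyclicity exactly as you describe. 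The paper instead works in a block-matrix decomposition adapted to the images of $P_1$ and $P_2$, reduces the claim to $\[\Tr(\r_{12}^\dag\r_{12})\]^2\leq\Tr(\r_{11}^2)\Tr(\r_{22}^2)$, and proves that via positivity of the Schur complement $\r_{22}-\r_{12}^\dag\r_{11}^{-1}\r_{12}$ followed by an eigenbasis/AM--GM estimate. Your argument is shorter, avoids the case analysis over vanishing eigenvalues of $\r_{11}$, and, as you correctly observe, never uses $P_1P_2=0$ --- indeed it goes through verbatim for arbitrary positive semidefinite $P_1,P_2$, since $\sigma P\sigma=(\sqrt{P}\sigma)^\dag(\sqrt{P}\sigma)\geq 0$; the orthogonality hypothesis is only relevant to how the lemma is invoked in Theorem~\re{diagthm} (where one must separately handle the case of identical projections, for which the inequality is trivial). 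The paper's block-matrix route is more hands-on but buys nothing extra here; your version is the cleaner one.
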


\begin{proof}

Using the fact that the left-hand side of \Eqref{trproj} is the trace of a positive operator $\(P_1\r P_2\)^\dagger\(P_1 \r P_2\)$ and thus is a non-negative real number, we have
\begin{align}
    \Tr(\r P_1 \r P_2) &= \Tr\(\(\sqrt{\r} P_1\sqrt{\r}\) \(\sqrt{\r} P_2\sqrt{\r}\)\)\\
    &\leq \sqrt{\Tr(\r P_1 \r P_1)\Tr(\r P_2 \r P_2)},
\end{align}
where in the second line, we have used the Cauchy-Schwarz inequality for the inner product $\bra{\sigma_1}\ket{\sigma_2}=\Tr[\sigma_1^\dagger\sigma_2]$ in the doubled Hilbert space discussed earlier.\footnote{We thank Francesco Mele for suggesting this simpler proof compared to the previous version.}

\end{proof}

This leads us to our next main result:

\begin{nthm}\la{diagthm}
For $0<k<1$ and at leading order in $G$,
\be\la{maximin}
\max_{A_1,A_2}\, \min_{i=1,2}\( k\log p^{(2)}\(A_1,A_2\) +\frac{(1-k)}{4G}A_i\)
\ee
is achieved on the diagonal $A_1=A_2$.\footnote{Note that in cases of degenerate optima, not all optima need to be on the diagonal, but the theorem guarantees that at least one optimum is on the diagonal.}
\end{nthm}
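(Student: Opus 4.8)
The plan is to reduce the theorem to a single Cauchy--Schwarz-type inequality for $p^{(2)}$ supplied by Lemma~\ref{lemmatrproj}, and then pin down the location of the maximin by an elementary averaging argument. First I would rewrite $p^{(2)}$ in a form to which the lemma applies. Using the vectorization identity $\langle M|(X\otimes Y)|M\rangle=\Tr(M^\dagger X M Y^T)$ for the doubled state $\ket{\rho_{AB}}=\mathrm{vec}(\rho_{AB})$, together with the fixed-area decomposition of $\mathcal{H}_{AB}$ into mutually orthogonal sectors labelled by the areas $(a,b)$ of $\gamma_A$ and $\gamma_B$, the unnormalized probability $\langle\rho_{AB}|P_{A_1,A_2}|\rho_{AB}\rangle$ becomes a sum $\sum\Tr(\rho_{AB}X_{a,b}\,\rho_{AB}X_{\alpha,\beta})$ over the constraints $a+\beta=A_1$ and $\alpha+b=A_2$, where $X_{a,b}=\Pi^A_a\Pi^B_b$ are orthogonal projections (the copy-$2$ projectors transpose to copy-$1$ projectors because the area operators are real). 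Lemma~\ref{lemmatrproj} then bounds each off-diagonal term by the geometric mean of the two corresponding diagonal terms.

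At leading order in $G$ each such sum is dominated by its largest term, so writing the maximizing sector as $(a^*,b^*,\alpha^*,\beta^*)$, taking logarithms, and cancelling the normalization factors $\log\Tr\rho_{AB}^2$, I obtain
\begin{equation}
\log p^{(2)}(A_1,A_2)\leq \tfrac12\[\log p^{(2)}(c,c)+\log p^{(2)}(d,d)\],\qquad c=a^*+b^*,\quad d=\alpha^*+\beta^*.
\end{equation}
The crucial bookkeeping fact is that $c+d=(a^*+\beta^*)+(\alpha^*+b^*)=A_1+A_2$: the off-diagonal point is dominated, on average, by two diagonal points whose area-sum it preserves.

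Finally I would run the averaging argument. Suppose the maximin is attained off-diagonal at $(A_1,A_2)$; by the $\mathbb{Z}_2$ symmetry $p^{(2)}(A_1,A_2)=p^{(2)}(A_2,A_1)$ we may relabel so that $A_1\leq A_2$, whence for $k<1$ one has $\min_{i=1,2}f_i=k\log p^{(2)}(A_1,A_2)+\tfrac{1-k}{4G}A_1$. Evaluating $\min_{i=1,2}f_i$ at the diagonal points $(c,c)$ and $(d,d)$, where $f_1=f_2$, the average of the two values is
\begin{equation}
\tfrac{k}{2}\[\log p^{(2)}(c,c)+\log p^{(2)}(d,d)\]+\tfrac{1-k}{4G}\cdot\tfrac{c+d}{2}\geq k\log p^{(2)}(A_1,A_2)+\tfrac{1-k}{4G}A_1,
\end{equation}
using $k>0$ together with the displayed inequality, and $1-k>0$ together with $\tfrac{c+d}{2}=\tfrac{A_1+A_2}{2}\geq A_1$. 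Hence at least one of $(c,c),(d,d)$ attains a value no smaller than the off-diagonal optimum, so a diagonal optimum exists.

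The hard part is not the averaging step, which is routine, but the first step: making the identification of $p^{(2)}$ with a sum of $\Tr(\rho P_1\rho P_2)$ watertight. This requires verifying that the copy-$2$ area projectors genuinely transpose to copy-$1$ projectors (so that distinct sectors are orthogonal, as Lemma~\ref{lemmatrproj} demands), that the Laplace replacement of each area sum by its largest term is valid at leading order in $G$, and that the diagonal points $(c,c),(d,d)$ remain in the physical domain of areas. Once these are established, the conservation $c+d=A_1+A_2$ and the geometric-mean bound do all the remaining work.
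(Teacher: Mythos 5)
Your proposal is correct and follows essentially the same route as the paper's own proof: the same fixed-area sector decomposition of $p^{(2)}$ into terms of the form $\Tr(\rho_{AB}P_1\rho_{AB}P_2)$, the same appeal to Lemma~\ref{lemmatrproj}, the same leading-order-in-$G$ replacement of the sum by its dominant sector, the same bookkeeping identity $c+d=A_1+A_2$, and the same $\mathbb{Z}_2$-symmetry-plus-averaging step at the end. The caveats you flag at the end (orthogonality of distinct area sectors, the case where the two projectors coincide, and nonnegativity of each contribution so that a single diagonal term lower-bounds the full diagonal $p^{(2)}$) are exactly the points the paper addresses explicitly, so nothing essential is missing.
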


\begin{proof}
Using \Eqref{eq:prob}, we find
\be
p^{(2)}\(A_1,A_2\) = N \lt\<\r_{AB}\lt|P_{|\g_{A}\cup \g_{B^*}|=A_1,\, |\g_{A^*}\cup \g_{B}|=A_2}\rt|\r_{AB}\rt\>,
\ee
where $N=\braket{\r_{AB}}^{-1}$ is a normalization constant and $|\g_{A}\cup \g_{B^*}|=A_1$ means fixing the area of $\g_{A}\cup \g_{B^*}$ to $A_1$.
On the doubled Hilbert space, the inner product is given by  $\langle C|D\rangle=\Tr\(C^\dagger D\)$, and the action of operators on $AB$ ($A^*B^*)$ is given by left (right) multiplication on $\r_{AB}$\cite{2019arXiv190500577D}.
Using this, we can rewrite the expectation value as a trace:
\begin{align}
    &p^{(2)}\(A_1,A_2\) \\
=& N \int da_1 da_2 db_1 db_2 \d(a_1+b_2-A_1) \d(a_2+b_1-A_2) \lt\<\r_{AB}\lt|P_{|\g_{A}|=a_1,\, |\g_{B}|=b_1}\, P_{|\g_{A^*}|=a_2,\, |\g_{B^*}|=b_2}\rt|\r_{AB}\rt\>\\
=& N \int da_1 da_2 db_1 db_2 \d(a_1+b_2-A_1) \d(a_2+b_1-A_2) \Tr \( \r_{AB} \, P_{|\g_{A}|=a_1,\, |\g_{B}|=b_1} \, \r_{AB} \, P_{|\g_{A}|=a_2,\, |\g_{B}|=b_2}\).
\la{p2tr}
\end{align}

At leading order in $G$, the integral is well approximated by the maximal value
\be
p^{(2)}\(A_1,A_2\) \ap N \max_{\substack{a_1,a_2,b_1,b_2\\ a_1+b_2=A_1,\, a_2+b_1=A_2}} \Tr \( \r_{AB} \, P_{|\g_{A}|=a_1,\, |\g_{B}|=b_1} \, \r_{AB} \, P_{|\g_{A}|=a_2,\, |\g_{B}|=b_2}\).
\ee
Let $(\bar a_1,\bar a_2,\bar b_1,\bar b_2)$ be a location where this maximum is achieved; they depend on $A_1$, $A_2$ and satisfy the constraints
\be\la{constraints}
\bar a_1+\bar b_2=A_1,\qu
\bar a_2+\bar b_1=A_2.
\ee
Using Lemma \ref{lemmatrproj}, we find
\ba
p^{(2)}\(A_1,A_2\) &\ap N \Tr \( \r_{AB} \, P_{|\g_{A}|=\bar a_1,\, |\g_{B}|=\bar b_1} \, \r_{AB} \, P_{|\g_{A}|=\bar a_2,\, |\g_{B}|=\bar b_2}\)\\
&\leq N \sqrt{ \prod_{i=1,2} \Tr \( \r_{AB} \, P_{|\g_{A}|=\bar a_i,\, |\g_{B}|=\bar b_i} \, \r_{AB} \, P_{|\g_{A}|=\bar a_i,\, |\g_{B}|=\bar b_i}\) }\\
&\leq \sqrt{ \prod_{i=1}^2 p^{(2)}\(\bar a_i+\bar b_i,\bar a_i+\bar b_i\) },
\la{p2bound}
\ea
where in the last step we have used
\be
p^{(2)}\(\bar a_i+\bar b_i,\bar a_i+\bar b_i\) \geq N \Tr \( \r_{AB} \, P_{|\g_{A}|=\bar a_i,\, |\g_{B}|=\bar b_i} \, \r_{AB} \, P_{|\g_{A}|=\bar a_i,\, |\g_{B}|=\bar b_i}\),
\ee
since the right-hand side is one contribution to the left-hand side according to \er{p2tr}, and all contributions are nonnegative because $\Tr\(\r P_1 \r P_2\) = \Tr\[(P_1\r P_2)^\dag (P_1\r P_2)\] \geq 0$.

Note that \er{p2bound} holds for an arbitrary $(A_1, A_2)$. We can now show that $(A_1, A_2)$ cannot give a more optimal value to \er{maximin} than a corresponding point on the diagonal. Due to the symmetry $p^{(2)}\(A_1,A_2\)=p^{(2)}\(A_2,A_1\)$, the image location $(A_2, A_1)$ is degenerate with $(A_1, A_2)$. Thus, without loss of generality, we assume $A_1\leq A_2$.
Since $A_1\leq A_2$, the contribution of $(A_1, A_2)$ is $f_1\(A_1,A_2\)$, where we remind the reader that we defined $f_i\equiv k \log p^{(2)}\(A_1,A_2\)+(1-k)\frac{A_i}{4G}$ for $i=1,2$.
Rewriting \er{p2bound} as
\be
\log p^{(2)}\(A_1,A_2\) \leq \fr{1}{2} \sum_{i=1}^2 \log p^{(2)}\(\bar a_i+\bar b_i,\bar a_i+\bar b_i\)
\ee
and using \er{constraints} to find
\be
\sum_{i=1}^2 (\bar a_i+\bar b_i) =A_1+A_2 \geq 2A_1,
\ee
we arrive at
\begin{align}\la{f1bound}
f_1\(A_1,A_2\) &= k\log p^{(2)}\(A_1,A_2\) +\frac{(1-k)}{4G}A_1\\
&\leq \fr{1}{2} \sum_{i=1}^2 \[k\log p^{(2)}\(\bar a_i+\bar b_i,\bar a_i+\bar b_i\) +\frac{(1-k)}{4G}\(\bar a_i+\bar b_i\)\]\\
&=\fr{1}{2} \sum_{i=1}^2 f_1\(\bar a_i+\bar b_i,\bar a_i+\bar b_i\)\\
&\leq \max_{i=1,2} f_1\(\bar a_i+\bar b_i,\bar a_i+\bar b_i\).\la{f1boundlast}
\end{align}
Thus we have found a point on the diagonal which contributes no less than $(A_1, A_2)$. Since we showed this for arbitrary $(A_1, A_2)$, we have shown that \er{maximin} is achieved on the diagonal.
\end{proof}

This is a very powerful result since it highly simplifies the calculation of ERN for $k<1$ and in particular, the negativity.
Anticipating that the optimum for the modified cosmic brane proposal is achieved on the diagonal, we can remove the inner minimization in \Eqref{eq:main2} and replace $A_i$ with $\frac{A_1+A_2}{2}$, finding
\be\la{eq:mainsimple}
    \mathcal{E}^{(2k)}(A:B) = -k S_2(\rho_{AB})+
        \displaystyle \max_{A_1,A_2}\( k\log p^{(2)}\(A_1,A_2\) +\frac{(1-k)}{8G}(A_1+A_2)\).
\ee
To see that this maximization is also achieved on the diagonal and thus gives the same result as \Eqref{eq:main2}, we use an argument similar to (\ref{f1bound})--(\ref{f1boundlast}):
\begin{align}
\frac{f_1\(A_1,A_2\)+f_2\(A_1,A_2\)}{2} &= k\log p^{(2)}\(A_1,A_2\) +\frac{(1-k)}{8G}(A_1+A_2)\\
&\leq \fr{1}{2} \sum_{i=1}^2 \[k\log p^{(2)}\(\bar a_i+\bar b_i,\bar a_i+\bar b_i\) +\frac{(1-k)}{4G}\(\bar a_i+\bar b_i\)\]\\
&\leq \max_{i=1,2} f_1\(\bar a_i+\bar b_i,\bar a_i+\bar b_i\).
\end{align}

The maximization conditions in \er{eq:mainsimple} then become \cite{Dong:2023bfy}
\begin{align}\label{eq:max3}
     \frac{\pa I\[g_{A_1,A_2}\]}{\pa A_1}&= \frac{1-k}{8 k G},\\
    \frac{\pa I\[g_{A_1,A_2}\]}{\pa A_2}&= \frac{1-k}{8 k G}\label{eq:max4},
\end{align}
where we remind the reader that $I\[g_{A_1,A_2}\]$ is the action of the fixed-area saddle.

Analogous to \Eqref{eq:max11} and \Eqref{eq:max21}, \Eqref{eq:max3} and \Eqref{eq:max4} are precisely the bulk equations of motion obtained by inserting cosmic branes of tension $\frac{T_k}{2}=\frac{k-1}{8kG}$ at the surfaces $\g_A \cup \g_{B^*}$ and $\g_{A^*}\cup\g_B$.
In other words, the proposal is to insert cosmic branes of half the usual tension considered in the original cosmic brane proposal at each of the surfaces $\g_A,\g_B,\g_{A^*}$ and $\g_{B^*}$.
For a holographic state with a sufficiently smooth area distribution, the RHS of \Eqref{eq:max3} and \Eqref{eq:max4} can be related to the deficit angle at the surfaces \cite{Dong:2023bfy} leading to a saddle with opening angle $\pi+\frac{\pi}{k}$ at each of the candidate RT surfaces which are degenerate by symmetry.

We again label the resulting saddle $\hat{\mathcal{B}}_2^{(k)}$, with the understanding that it satisfies $\pa \hat{\mathcal{B}}_2^{(k)} = M_2^{AB}$ and has conical defects of opening angle $\pi+\frac{\pi}{k}$ at each of the surfaces $\g_A$, $\g_B$, $\g_{A^*}$ and $\g_{B^*}$.
Then, similar to the case of $k\geq 1$, \Eqref{eq:main2} simplifies to
\begin{equation}\label{eq:kl1}
    \mathcal{E}^{(2k)}(A:B)= k \(2 I\[\mathcal{B}_1\] - I\[\hat{\mathcal{B}}_2^{(k)}\]\).
\end{equation}
Curiously, the diagonal phase restores the $\mathbb{Z}_2$ symmetry that was lost for $k\geq1$, allowing us to further quotient by the $\mathbb{Z}_2$ symmetry.
Using this to rewrite \Eqref{eq:kl1} while also emphasizing its boundary conditions, we obtain
\begin{equation}
    \mathcal{E}^{(2k)}(A:B)= 2 k \[I\(M_1\) - I\(M_1,\g_{AB}^{(\pi)},\(\g_{A}\cup\g_B\)^{(\pi+\pi/k)}\)\].
\end{equation}

In particular, we have arrived at a remarkably simple geometric prescription for the negativity summarized by
\begin{align}\label{eq:mainresult2}
    \mathcal{E}(A:B) = I\(M_1\) - I\(M_1,\g_{AB}^{(\pi)},\(\g_{A}\cup\g_B\)^{(3\pi)}\),
\end{align}
where the second term corresponds to a gravitational saddle with boundary conditions set by the original state and has conical defects of opening angle $\pi$ at $\g_{AB}$ and $3\pi$ at $\g_A$ and $\g_B$.

Throughout this subsection, we have restricted our attention to the two candidate HRT surfaces, $\g_{A}\cup \g_{B^*}$ and $\g_{A^*}\cup \g_{B}$, for subregion $AB^*$ in the doubled state $\ket{\rho_{AB}}$. In principle, when applying the modified cosmic brane proposal we should also include other candidate HRT surfaces (which would lead to a connected phase for the entanglement wedge of $AB^*$). They include, for example in the case of Fig.~\re{fig:torus}, the union of a line connecting the left endpoint of $A$ to the right endpoint of $B^*$ and a line connecting the right endpoint of $A$ to the left endpoint of $B^*$ (as well as analogues with higher winding numbers). Near $k=1$ (the case without backreaction), it is easy to see generally that these surfaces are subdominant because they have larger areas than $\g_{A}\cup \g_{B^*}$ or $\g_{A^*}\cup \g_{B}$ due to the $\Z_2$ symmetry \cite{2019arXiv190500577D}. For general $k<1$, an analogue of Theorem~\re{diagthm} shows that the ERN is dominated by the diagonal for these surfaces as well (i.e., they have the same area as their $\Z_2$ images). Then using this $\Z_2$ symmetry and an argument similar to the holographic proof of strong subadditivity \cite{Headrick:2007km}, we find that these surfaces are subdominant to the two surfaces studied above and can be ignored.

\section{PSSY Model}
\label{sec:PSSY}
\begin{figure}
    \centering
    \includegraphics[scale=0.6]{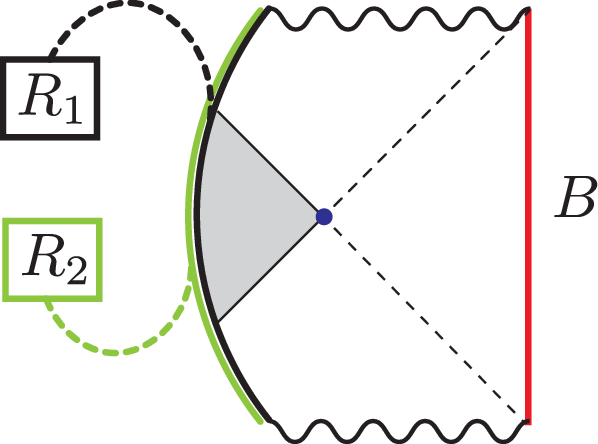}
    \caption{The PSSY model consists of a JT gravity black hole coupled to ETW branes with flavor indices (black and green) entangled with auxiliary radiation systems $R_1$ and $R_2$.}
    \label{fig:WC}
\end{figure}

In this section, we analyze the negativity for the PSSY model of black hole evaporation \cite{2019arXiv191111977P}.
This problem was previously studied in Ref.~\cite{2021arXiv211011947D}, and related models have been studied using the equilibrium approximation in Refs.~\cite{2021arXiv211200020V,2022PhRvL.129f1602V}.
We will find perfect agreement with these previous results.

The PSSY model is a theory of Jackiw-Teitelboim (JT) gravity coupled to end-of-the-world (ETW) branes.
The ETW branes are entangled with two auxiliary radiation systems $R_1$ and $R_2$.
The state of the whole system as depicted in \figref{fig:WC} is
\begin{equation}
    \ket{\Psi} = \frac{1}{\sqrt{k}} \sum_{i=1}^{k_1}\sum_{j=1}^{k_2} \ket{i}_{R_1}\ket{j}_{R_2}\ket{\psi_{ij}(\beta)}_{B},
\end{equation}
where $\ket{\psi_{ij}(\beta)}_{B}$ is the state of the black hole system $B$ at inverse temperature $\beta$ with the ETW brane chosen to be of sub-flavors $i$ and $j$ respectively.

To apply our proposal, we need to consider the doubled state $\ket{\rho_{R_1R_2}}$ of the radiation systems in the PSSY model.
This problem was studied in Ref.~\cite{2022JHEP...06..089A}, which found that the gravitational description of the doubled state is as shown in \figref{fig:R1R2}.
There are two phases depending on whether $k=k_1 k_2$ is smaller/larger compared to the parameter $S_0$ in JT gravity.
More precisely, the transition is determined by the dominant saddle for the second R\'enyi entropy of the thermal black hole, or equivalently the black hole entropy at inverse temperature $2\beta$.
When the radiation is in the disconnected phase for small $k$, the doubled state simply involves a doubled copy of the radiation system.
On the other hand, for large $k$ the radiation has an island and is in the connected phase, in which case the doubled state includes a closed universe.
The closed universe involves two copies of the island obtained in the computation of the second R\'enyi entropy.

\begin{figure}
    \centering
    \includegraphics[scale=0.6]{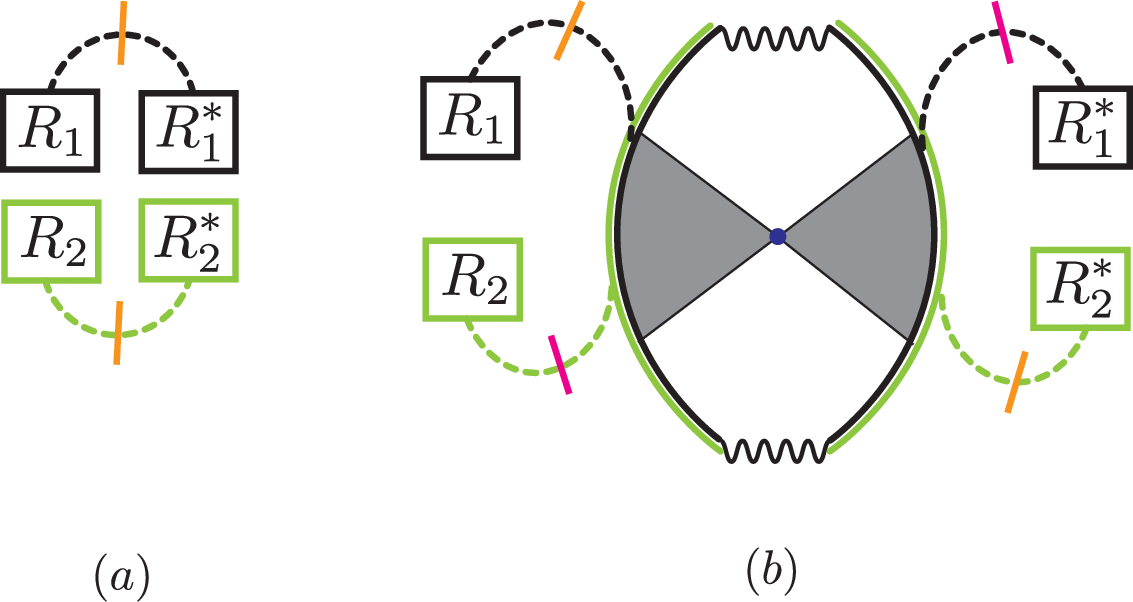}
    \caption{The doubled state $\ket{\rho_{R_1 R_2}}$ in (a) the disconnected phase and (b) the connected phase (where we have further assumed a so-called $\tau$ phase where neither $k_1$ nor $k_2$ is too large). The RT surfaces computing the RRN are depicted in orange in both geometries. In the connected phase, the magenta surface is degenerate with the orange surface.}
    \label{fig:R1R2}
\end{figure}

We can now use the gravitational description of the doubled state to compute the ERN and RRN.
For simplicity, we will only focus on the RRN since it has an easier holographic dual -- the areas of RT surfaces anchored to $R_1 R_2^*$ as depicted in \figref{fig:R1R2}.
Furthermore, we will approximate the area of the black hole by $S_0$ and ignore $\frac{1}{\beta}$ corrections for simplicity, although our results agree with those of Ref.~\cite{2021arXiv211011947D} even after including them.
In this approximation, all the entanglement spectra are flat.

In the disconnected phase, we find $\tilde{\mathcal{E}}^{(2k)}(R_1:R_2)=\log k$.
This is the identity phase of Ref.~\cite{2021arXiv211011947D}.
In the connected phase, $\tilde{\mathcal{E}}^{(2k)}(R_1:R_2)$ depends on how large $k_1,k_2$ are relative to each other.
If they are comparable, then we have $\tilde{\mathcal{E}}^{(2k)}(R_1:R_2)=\log k$. This is the so-called $\tau$ phase of Ref.~\cite{2021arXiv211011947D}.
If instead $\log k_1\geq \log k_2+S_0$ (which is the cyclic phase of Ref.~\cite{2021arXiv211011947D}), then we obtain $\tilde{\mathcal{E}}^{(2k)}(R_1:R_2)=2\log k_2 +S_0$.
Similarly, we obtain the anti-cyclic phase of Ref.~\cite{2021arXiv211011947D} by swapping $1\leftrightarrow2$ in the previous phase.

In the previous example, we see that the modified cosmic brane proposal did not play a key role.
This would have been true even if we included $\frac{1}{\beta}$ corrections.
This is because the only situations with two degenerate candidate surfaces, i.e., the ones which break the full replica symmetry, involve a flat entanglement spectrum as shown in \figref{fig:R1R2}.
\begin{figure}
    \centering
    \includegraphics[scale=0.6]{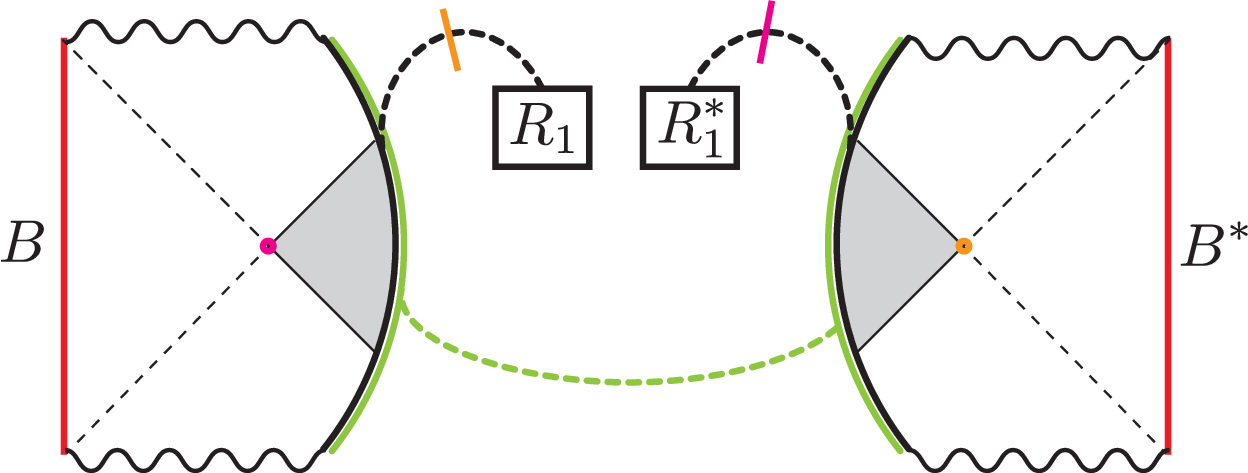}
    \caption{The doubled state $\ket{\rho_{BR_1}}$ in the connected phase is depicted. The degenerate RT surfaces computing the RRN are shown in orange and magenta.}
    \label{fig:BR1}
\end{figure}

We can instead consider the negativity between $R_1$ and $B$ as the simplest setting where the modified cosmic brane proposal becomes important.
In this case, the doubled state in the connected phase is depicted in \figref{fig:BR1}.
In the phase where neither $k_1$ nor $S_0$ is too large, we find that the RT surface is as shown in \figref{fig:BR1}.
In this case, there are non-trivial fluctuations in the area spectrum due to the thermal fluctuations of the black hole area.
The modified cosmic brane proposal thus becomes important in this situation.
This setup was not analyzed previously in the literature, and it would be interesting in the future to compute it using a full resolvent calculation to check the validity of our modified cosmic brane proposal.

\section{Two Intervals in the Vacuum State}
\label{sec:intervals}

In Refs.~\cite{2019PhRvD..99j6014K, 2019PhRvL.123m1603K}, 2D holographic CFT calculations were presented that provided evidence for the conjecture that the negativity was related to the area of a cosmic brane located at the entanglement wedge cross section. Given that this disagrees with our results, it is helpful to revisit these calculations to identify the assumptions that do not hold.

For simplicity, we only consider the case of the negativity between two disjoint intervals, $[0,x]$ and $[1,\infty]$, in the vacuum state. The even moments of the partially transposed density matrix are given by the following four-point function in the product of $2k$ copies of the original CFT \cite{2012PhRvL.109m0502C, 2013JSMTE..02..008C}
\begin{align}
\label{LN_four_point}
    \Tr\left( \rho_{AB}^{T_B}\right)^{2k} = \left \langle \sigma_{2k}(0) \sigma_{2k}^{-1}(x) \sigma_{2k}^{-1}(1) \sigma_{2k}(\infty)\right \rangle,
\end{align}
where $\sigma_{2k}$ and $\sigma_{2k}^{-1}$ are cyclic and anti-cyclic twist operators respectively which are scalar Virasoro primary operators with identical conformal dimensions
\begin{align}
    \Delta_{2k} = \frac{c}{12}\left( 2 k-\frac{1}{2k}\right).
\end{align}
To find the negativity, one analytically continues this correlation function to $k = 1/2$. 

\begin{figure}
    \centering
    \includegraphics[scale=0.45]{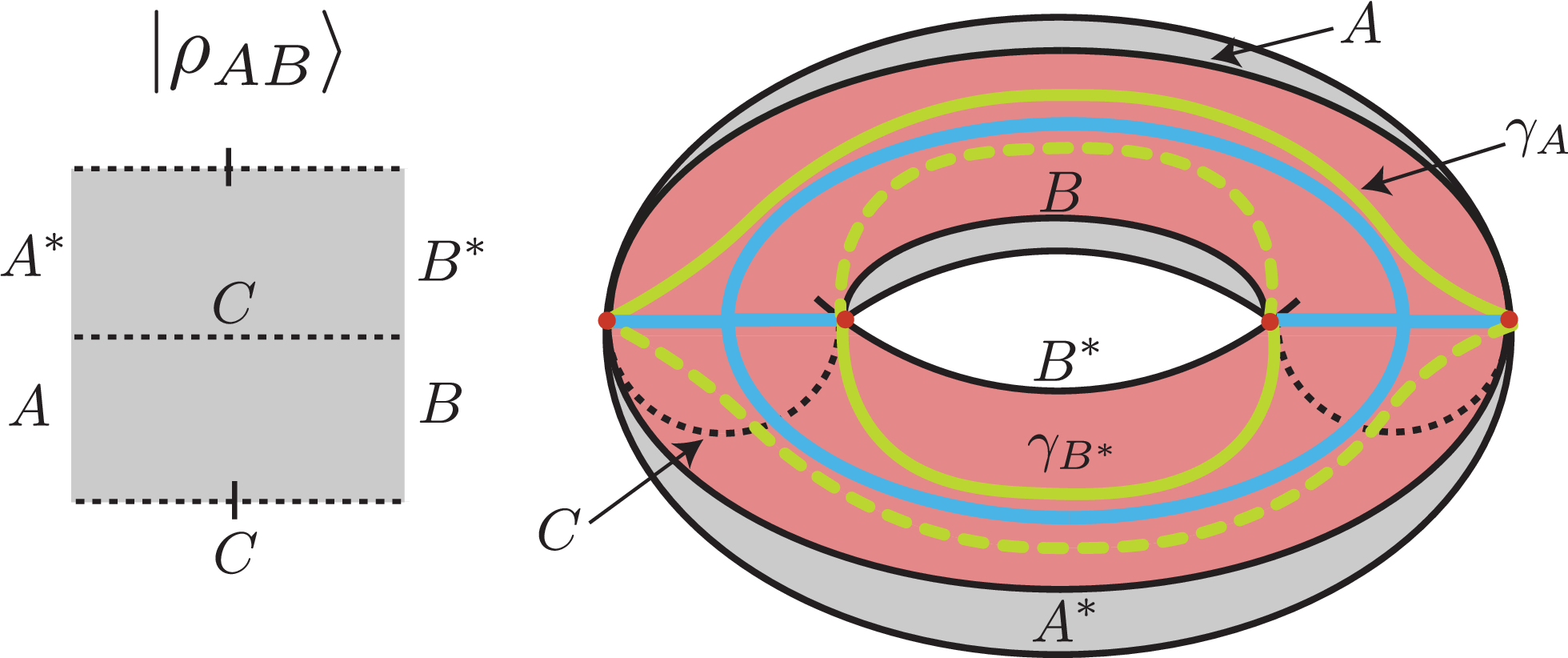}
    \caption{(left): The state $\rho_{AB}$ for $A,B$ chosen to be disjoint intervals in the vacuum state is computed by a square with open boundary conditions at the subregions $A,B,A^*,B^*$. (right): The bulk dual of the state $\ket{\rho_{AB}}_{AA^*BB^*}$ is the BTZ black hole geometry up to conformal transformations. The R\'enyi entropy involves computing $\Tr[(\rho^{(2)}_{AB^*})^k]$ whose bulk dual involves conical defects (green) sourced by twist operators (red). For $k\geq 1$, the cosmic brane is placed at either the solid or the dashed green lines(thus breaking the $\mathbb{Z}_{2k}$ replica symmetry at integer $k$), whereas for $k<1$ it is distributed over both surfaces. The $\mathbb{Z}_{2k}$ replica symmetric configuration with intersecting cosmic branes is shown in blue.}
    \label{fig:torus}
\end{figure}
Using our formalism, we are interested in the doubled state $\ket{\rho_{AB}}$, which can be computed from the path integral computing $\Tr\(\rho_{AB}^2\)$.
For the case of two intervals in the vacuum state, it is well known that this path integral is related to the torus partition function via a conformal transformation \cite{Headrick:2010zt}.
The leading bulk saddle for torus boundary conditions is given either by thermal AdS in the disconnected phase or the BTZ black hole in the connected phase.
In the disconnected phase, the negativity vanishes and the resulting saddle is fully replica symmetric.
In the connected phase, one finds replica symmetry breaking (RSB) as expected in general.
We depict the geometry of the doubled state along with the relevant RT surfaces computing the RRN in \figref{fig:torus}.

The configuration considered in the proposal of Ref.~\cite{2019PhRvL.123m1603K}, which has intersecting cosmic branes, is also shown in \figref{fig:torus} for comparison.
Ref.~\cite{2019PhRvL.123m1603K} considered this family of intersecting brane configurations with a $k$-dependent tension at $k\geq1$ and analytically continued the tension to $k<1$ in order to compute the negativity.
However, at integer $k$, it is crucial that any candidate saddle comes from a smooth parent space, with cosmic branes resulting from performing the quotient by $\mathbb{Z}_{k}$.
By understanding the possible brane intersections that can arise from a quotient of a smooth manifold, we are able to prove in Appendix~\ref{app:brane} that it is impossible to have an intersection of the type proposed by Ref.~\cite{2019PhRvL.123m1603K}.
Thus, this family of configurations should not be considered saddles for the negativity problem at any $k$.

The reason this configuration fails to be a saddle is simplest to understand when we consider the intervals to be adjacent, using a version of the argument presented in Ref.~\cite{Penington:2022dhr} for a different entanglement quantity.
In \figref{fig:triway}, we show the $\mathbb{Z}_{2k}$ quotient\footnote{We caution the reader that this is unlike most of our discussion in the rest of the paper where we consider $\mathbb{Z}_{k}$ quotients.}  of the putative fully ($\mathbb{Z}_{2k}$) replica symmetric saddle where three branes meet at a vertex.
The parent space can be obtained by gluing together $2k$ such copies in a manner specified by the permutations labelling different bulk regions in \figref{fig:triway}.
Using a radial coordinate $r$ along the branes (which has the range $[0,1]$), we find that the topology of this parent space is $\(\Sigma_{2k}\times \[0,1\]\)/\sim$, where $\Sigma_{2k}$ is the topology of the Riemann surface that computes the boundary replica partition function and $\sim$ identifies all the points at $r=0$.
Using the Riemann-Hurwitz formula, one finds that the genus of $\Sigma_{2k}$ is $k-1$.
Since the neighborhood of every point in a smooth manifold is topologically a ball, the parent space can be a smooth manifold only if $\Sigma_{2k}$ is a sphere.
Thus, it is clear that for $k\geq 2$, the replica symmetric configuration shown in \figref{fig:triway} cannot be a saddle.
Since this argument is local at the intersection, the same is true for multiple intervals. A more rigorous version of this argument is presented in Appendix~\ref{app:brane}.

\begin{figure}
    \centering
    \includegraphics[scale=0.45]{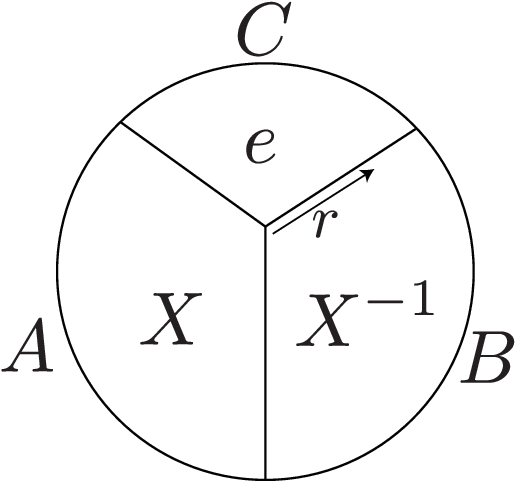}
    \caption{The quotient space description of the naive $\mathbb{Z}_{2k}$ replica symmetric bulk configuration with three intersecting branes (solid lines) is shown for $A,B$ being adjacent intervals in the vacuum state. $2k$ such copies are glued together in a manner specified by the permutations $\{X,X^{-1},e\}$ on the different regions to obtain the parent space; see \figref{splitting_fig} for details. The radial coordinate $r$ goes from the intersection point $r=0$ to the asymptotic boundary at $r=1$.}
    \label{fig:triway}
\end{figure}

Having presented our proposal, it is useful to understand why the CFT calculation in Ref.~\cite{2019PhRvL.123m1603K} failed.
When presented with a four-point function such as \eqref{LN_four_point}, it is usually convenient to perform a conformal block decomposition. When the intervals are close, we may take the $\sigma_{2k} \times \sigma_{2k}$ and $\sigma_{2k}^{-1}\times \sigma_{2k}^{-1}$ OPEs, expanding in the t-channel as
\begin{align}
\label{conf_block}
    \Tr\left( \rho_{AB}^{T_B}\right)^{2k} = \sum_p |C_{\sigma_{2k} \sigma_{2k} p}|^2 \mathcal{F}_p(1-x)\bar{\mathcal{F}}_p(1-\bar{x}),
\end{align}
where the sum is over primary fields, $\mathcal{F}$ is the Virasoro conformal block.
The vacuum state does not contribute to this sum because it has twist number $0$. Instead, the  primary field contributing to the sum with the lowest dimension is the ``double-twist'' operator that performs two consecutive cyclic permutations. 

In the large central charge limit, the conformal blocks approximately exponentiate as \cite{zamolodchikov1984conformal,2020JHEP...01..109B}
\begin{align}
    \mathcal{F}(x) \simeq \exp\left[-\frac{c}{6}f\(\frac{h_p}{c}, \frac{h_{2k}}{c}, x\) \right].
\end{align}
Under seemingly mild assumptions on OPE coefficients and the spectrum, one can argue \cite{2013arXiv1303.6955H} that the primary field in the OPE with the lowest conformal dimension then dominates the sum \Eqref{conf_block}. Namely, one assumes that the Cardy density of states times the OPE coefficients $|C_{\sigma_{2k}\sigma_{2k} p}|^2$ does not grow exponentially with $c$ faster than the suppression from the conformal block for a finite range of $z$. It is therefore this assumption that must break down. This suggests that the holographic formula for RRN may lead to interesting constraints on the $C_{\sigma_{2k}\sigma_{2k} p}$ OPE coefficients. A similar result is shown in Appendix~\ref{sub:PRMI} for the Petz R\'enyi mutual information, where an analogous assumption leads to an obviously wrong answer.

Nevertheless, sticking with this assumption, one may compute the conformal block with the double-twist operator as the intermediate state, which has conformal dimension $\Delta^{(2)}_{2k} = \frac{c}{6}\left(k-\frac{1}{k} \right)$, although this computation is generally difficult to do explicitly and analytically because the operator is heavy, i.e., $O(c)$. However, it may be evaluated numerically with arbitrarily high precision using Zamolodchikov's recursion relations \cite{zamolodchikov1987conformal,zamolodchikov1984conformal}.

The more general idea of Ref.~\cite{2019PhRvL.123m1603K} was to relate the negativity calculation to a calculation of the $(m,n)$-R\'enyi reflected entropy and then argue that the negativity is given by the entanglement wedge cross section, due to the known connection between reflected entropy and entanglement wedge cross section \cite{2019arXiv190500577D}. In particular, the $(m,n)$-R\'enyi reflected entropy is evaluated via a four-point function of twist operators
\begin{align}
    S_R^{(m,n)} = \frac{1}{1-n}\log \frac{Z_{m,n}}{Z_{m,1}^n}, \quad Z_{m,n} = \langle{\sigma_{g_A}(0)\sigma_{g_A^{-1}}(x)\sigma_{g_B}(1)\sigma_{g_B^{-1}}(\infty)}\rangle
\end{align}
in $mn$ copies of the original CFT. Labeling the copies from $1$ to $mn$, the permutations are given in cycle notion as
\begin{align}
    {g_A} &= \prod_{i = 1}^n \big(i, i+n,\dots,i+n(m/2-1),i+1+nm/2,
    \dots,i+1+n(m-1)\big),
    \\
    {g_B} &= \prod_{i = 1}^n \left(i, i+n,\dots,i+n(m-1)\right).
\end{align}
The relevant conformal dimensions are
\begin{align}
    \Delta_{g_A}= \Delta_{g_B}=\frac{cn}{12m}\(m^2-1\),\quad
     \Delta_{g_B g_A^{-1}}=\frac{c}{6n}\(n^2-1\).
\end{align}
For the following choice of $\(n,m\)$, the dimensions of the operators computing the four point function for negativity can be matched to the four point function for reflected entropy:
\begin{align}\label{eq:match}
    n=k,\qquad m=\frac{4k^2+\sqrt{32 k^4-8 k^2+1}-1}{4 k^2}.
\end{align}
The idea was then to use the fact that the dominant conformal block for the reflected entropy calculation is known to be related to a backreacted version of the entanglement wedge cross section by a path integral argument, and moreover the conformal blocks only depend on the relevant operator dimensions. Thus, the answers can be matched by using the identification \Eqref{eq:match}.

We would now like to demonstrate that our proposed saddle is strictly better than this replica symmetric configuration near $k=1$, which we have already argued quite generally from the gravitational side earlier. The calculation of the reflected entropy can be performed gravitationally by computing the replica partition function $Z_{m,n}$. The RRN is given by
\begin{equation}\label{eq:RRN2}
    \tilde{\mathcal{E}}^{(2k)} = -k^2 \pa_{k}\(\frac{1}{k} \log Z_{2k}\),
\end{equation}
where $Z_{2k}$ is the replica partition function for the negativity problem. We reproduce the answer from the replica symmetric configuration by assuming $Z_{2k}=Z_{m(k),n(k)}$, i.e., the replica partition functions for negativity and reflected entropy agree upon identification \Eqref{eq:match}. We will call the resulting RRN $\tilde{\mathcal{E}}^{(2k)}_{CFT}$, and \Eqref{eq:RRN2} becomes
\be
\tilde{\mathcal{E}}^{(2k)}_{CFT} = -k^2 \pa_{k}\(\frac{1}{k} \log Z_{m(k),n(k)}\).
\ee
Note that since $Z_2$ computes $\Tr\(\rho_{AB}^2\)$ where the two proposals agree, our proposed saddle dominating over the replica symmetric configuration would mean that $\tilde{\mathcal{E}}^{(2k)}_{CFT}$ is larger than $\tilde{\mathcal{E}}^{(2k)}$ computed using our proposal, which amounts to computing the areas of minimal surfaces anchored to $AB^*$ in the doubled state.

In order to test this, we need to compute $\tilde{\mathcal{E}}^{(2k)}_{CFT}$ gravitationally.
We can do this by using the Lewkowycz-Maldacena method \cite{2013JHEP...08..090L}. However, it is important to note that the Lewkowycz-Maldacena method implies
\begin{equation}
    -n^2\pa_n\(\frac{1}{n}\log Z_n\) = \frac{A_n}{4G}
\end{equation}
where $Z_n$ is the replica partition function for R\'enyi entropies and $A_n$ is the area of a cosmic brane with tension $T_n=\frac{n-1}{4nG}$. So for the $k=1$ RRN (corresponding to $m=2$, $n=1$), we have
\begin{align}
    \tilde{\mathcal{E}}^{(2)}_{CFT} &= \lt. -k^2 \pa_{k}\(\frac{1}{m n} \frac{m n}{k}\log Z_{m,n}\) \rt|_{k=1}\\
    &= \lt. -k^2 \(\pa_{k}\(\frac{m n}{k}\)\frac{\log Z_{m,n}}{mn}+\frac{\pa m}{\pa k}\pa_{m}\(\frac{1}{mn} \log Z_{m,n}\)+\frac{\pa n}{\pa k}\pa_{n}\(\frac{1}{m n} \log Z_{m,n}\)\) \rt|_{k=1}\\
    &= -\frac{2}{5}\log Z_{2,1} -4\(-\frac{1}{10}\frac{A_2(\gamma_C)}{4G}-\frac{1}{4}\frac{2 EW^{(2)}}{4G}\)\\
    &= 2 \frac{EW^{(2)}}{4G} + \frac{2}{5}\tilde{S}_2(AB)+\frac{2}{5} S_2(AB),
\end{align}
where $EW^{(2)}$ is the entanglement wedge cross section that computes the $(2,1)$ R\'enyi reflected entropy/CCNR \cite{Milekhin:2022zsy}.
Moreover, $S_n(AB)$ and $\tilde{S}_n(AB)$ are the R\'enyi and refined R\'enyi entropy for two intervals in the vacuum state.

In order to show that our saddle is better than this, we will show
\begin{equation}
    \tilde{S}_n(AB) \geq \frac{2}{n} \tilde{S}_2(AB)
    \label{eq:inequalnumer}
\end{equation}
within the range $n\in [1,2]$. This can be reliably computed using CFT methods, which we use to demonstrate the validity of the inequality \Eqref{eq:inequalnumer} numerically in Figure \ref{fig:saddle_compare}. From \Eqref{eq:inequalnumer}, we have
\begin{align}
    S_2(AB) &= 2\int_1^2\frac{\tilde{S}_{n'}(AB)}{n'^2} dn'\\
    & \geq 4\tilde{S}_{2}(AB)\int_1^2 \frac{1}{n'^3}dn' = \frac{3}{2}\tilde{S}_2(AB).
\end{align}
From this it follows that,
\begin{equation}
    \tilde{\mathcal{E}}^{(2)}_{CFT} \geq 2 \frac{EW^{(2)}}{4G} + \tilde{S}_2(AB),
\end{equation}
but it is easy to see geometrically that our proposed saddle is strictly better than this by smoothing corners.

\begin{figure}
    \centering
    \includegraphics{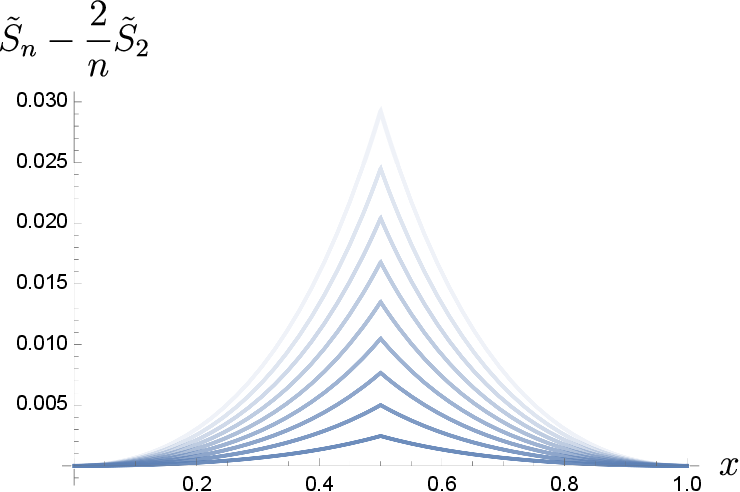}
    \caption{The difference in refined R\'enyi entropies (normalized by $c$) for $n$ from $1.1$ (bottom line) to $1.9$ (top line) in steps of $0.1$. These are approximated by the vacuum conformal block using Zamolodchikov's recursion relation. Clearly, \eqref{eq:inequalnumer} is satisfied. We note that in the adjacent ($x\rightarrow 1$) and distant ($x\rightarrow 0$) limits, the difference disappears.}
    \label{fig:saddle_compare}
\end{figure}

It is also useful to note that the geometric proposal of Ref.~\cite{2019PhRvL.123m1603K} continues to disagree with our proposal even in the adjacent interval limit. For adjacent intervals of lengths $l_1$ and $l_2$, the even moments are given by
\begin{align}
      \Tr\left( \rho_{AB}^{T_B}\right)^{2k} = \left \langle \sigma_{2k}(-l_1) \sigma_{2k}^{-2}(0)  \sigma_{2k}(l_2)\right \rangle.
\end{align}
The scaling of the three-point function is fixed by conformal invariance, such that
\begin{align}
\Tr\left( \rho_{AB}^{T_B}\right)^{2k} = \frac{C_{\sigma_{2k}\sigma_{2k}\sigma_{2k}^{-2}}}{(l_1l_2)^{\Delta_{2k}^{(-2)}}(l_1 + l_2)^{2\Delta_{2k} - \Delta_{2k}^{(-2)}}},
\end{align}
where $\Delta_{2k}^{(-2)}$ denotes the conformal dimension of $\sigma_{2k}^{-2}$.
Our proposal and that of Ref.~\cite{2019PhRvL.123m1603K} both reproduce the correct scaling, though they disagree on the value of $C_{\sigma_{2k}\sigma_{2k}\sigma_{2k}^{-2}}$. For the logarithmic negativity, this leads to a relative constant shift between the two proposals that can be tested from the CFT.

\section{Discussion}
\label{sec:disc}

\subsection{Odd/Transposed entropy} 
\label{sub:odd}

We have mainly focused on the even moments of the partial transpose due to their relation with the negativity. The odd moments have also been proposed to be useful as an entanglement measure. Namely, Ref.~\cite{2019PhRvL.122n1601T} introduced the odd entropy, later called the partially transposed entropy in Ref.~\cite{2021JHEP...06..024D}. It is defined as
\begin{align}
    S^{T_B}(A:B) = -\sum_{i}\lambda^{(T)}_i \log |\lambda^{(T)}_i|.
\end{align}
This may also be evaluated using a replica trick by analytically continuing the odd moments
\begin{align}
    S^{T_B}(A:B) =  \lim_{k \rightarrow 1} \frac{1}{2(1-k)} \log \Tr \left(\rho_{AB}^{T_B}\right)^{2k - 1}.
\end{align}
Similar to the tension for holographic negativity, Refs.~\cite{2019PhRvL.122n1601T} and \cite{2021JHEP...06..024D} have conflicting proposals for the holographic dual of $S^{T_B}$. Using 2D holographic CFT techniques (nearly identical to the incorrect derivation of Petz R\'enyi mutual information in Appendix~\ref{sub:PRMI}), Ref.~\cite{2019PhRvL.122n1601T} showed that $S^{T_B}$ was equal to the area of the entanglement wedge cross section plus the RT surface, without any backreaction for either surface. In contrast, Ref.~\cite{2021JHEP...06..024D} showed that for fixed-area states, $S^{T_B}$ was equal to half the mutual information plus the area of the RT surface. 

The proposal of Ref.~\cite{2019PhRvL.122n1601T} again assumes the full $\mathbb{Z}_{2k-1}$ symmetry for calculating $\Tr \left(\rho_{AB}^{T_B}\right)^{2k - 1}$.
For the case of adjacent intervals in the vacuum state of a 2D CFT, this leads to a bulk configuration whose quotient space again has three intersecting branes as shown in \figref{fig:triway}.
The only difference from \figref{fig:triway} is that now the number of copies being glued together is $2k-1$ and the permutations $\{X,X^{-1},e\}$ are correspondingly the cyclic, anti-cyclic, and identity permutations on $2k-1$ elements.
By the same argument made in \secref{sec:intervals}, we can just look at the topology of the Riemann surface computing the boundary partition function.
For the case of the odd entropy, the genus is $k-1$ and thus, we again see that for $k\geq 2$, this space time is not a smooth manifold.\footnote{Similar arguments can be used to rule out replica symmetric saddles for the multi-entropy discussed in Refs.~\cite{Gadde:2022cqi,Gadde:2023zzj}. This argument however does not work for the reflected entropy since there the topology around the intersection point ends up being a sphere.}
Our study of brane intersections in Appendix~\ref{app:brane} makes this precise more generally.

More generally, one may hope to use continuity bounds on the odd entropy in order to find its holographic dual, given that we already know the answer for fixed-area states. The fact that the computation of odd entropy involves tensionless branes makes it promising for it to have a continuity bound similar to other such quantities like entanglement entropy \cite{cmp/1103859037} and reflected entropy \cite{2020JHEP...04..208A}. However, the partially transposed density matrix does not satisfy good continuity properties as can be checked in simple examples.\footnote{We thank Isaac Kim for discussions related to this.} 

\subsection{Replica symmetry restoration for RTNs}
\label{sub:nmRTNdisc}

Random tensor networks have been demonstrated to be very useful models of holographic states \cite{2016JHEP...11..009H}, clarifying various information theoretic aspects of the holographic mapping. In standard random tensor networks, the links in the network are taken to be maximally entangled, which causes the entanglement spectrum for states to be ``flat,'' i.e., all R\'enyi entropies are equal. This is unlike general states in conformal field theory, where the spectrum is far from being flat \cite{2008PhRvA..78c2329C}. As was noted in \cite{2016JHEP...11..009H}, this can be implemented in random tensor networks by having the links in the network be non-maximally entangled. In Appendix \ref{sec:RTN}, we demonstrate that when sufficiently non-maximally entangled states are taken for the links in the network, the replica symmetric saddle becomes dominant over the replica symmetry breaking saddles. We comment on why the conclusion in random tensor networks is so different from full quantum gravity that we have discussed in the main text. We expect this to be useful in the pursuit of better tensor network models of AdS/CFT.

\subsection{Implications for holography}

An important future direction is to explore what the quantum information theoretic implications of this holographic dual of the negativity are.
It is well known that the Ryu-Takayanagi formula led to a much better understanding of the bulk-boundary dictionary in AdS/CFT.
It remains for us to understand what mileage we can gain from our holographic prescription for the negativity.

For instance, Refs.~\cite{2021arXiv211200020V,2022PhRvL.129f1602V} argued that there are instances in holography where the negativity can be large while the mutual information is small.
They interpreted such situations as consisting of bound entanglement between the two parties, which is not distillable.
Using our holographic prescription for the negativity, it is easy to see that such cases arise quite generally in the presence of entanglement phase transitions.
The negativity, and the ERNs more generally, are sensitive to the doubled state which corresponds to a saddle computing the second R\'enyi entropy.
In general, the phase transition in R\'enyi entropy can happen at different locations in the parameter space depending on the R\'enyi parameter.
This would give rise to situations where the negativity can be large while the mutual information is small, the case of an evaporating black hole being a particular example.

\acknowledgments
We thank Chris Akers, Ahmed Almheiri, Eugenia Colafranceschi, Tom Faulkner, Abhijit Gadde, Tom Hartman, Matt Headrick, Isaac Kim, Yuya Kusuki, Simon Lin, Henry Maxfield, Francesco Mele, Sean McBride, Vladimir Narovlansky, Geoff Penington, Xiao-Liang Qi, Mukund Rangamani, Shinsei Ryu, Michael Walter, and Wayne Weng for helpful discussions and comments. JKF specially thanks Yuya Kusuki, Vladimir Narovlansky, and Shinsei Ryu for many discussions and previous collaboration on related work. XD is supported in part by the U.S. Department of Energy, Office of Science, Office of High Energy Physics, under Award Number DE-SC0011702 and by funds from the University of California. JKF is supported by the Marvin L.~Goldberger Member Fund at the Institute for Advanced Study and the National Science Foundation under Grant No. PHY-2207584. PR is supported in part by a grant from the Simons Foundation, by funds from UCSB, the Berkeley Center for Theoretical Physics; by the Department of Energy, Office of Science, Office of High Energy Physics under QuantISED Award DE-SC0019380, under contract DE-AC02-05CH11231 and by the National Science Foundation under Award Number 2112880. 
This material is based upon work supported by the Air Force Office of Scientific Research under award number FA9550-19-1-0360.

\appendix

\section{Petz R\'enyi Mutual Information}
\label{sub:PRMI}

To gain further understanding of this misidentification of the dominant channel, it is useful to consider a similar quantity called the Petz R\'enyi mutual information (PRMI), which may be evaluated using a similar replica trick \cite{2022arXiv221101392K,2023arXiv230808600K}. Unlike the usual linear combination of R\'enyi entropies frequently studied in the literature, the PRMI is a well-behaved generalization of the mutual information in that it is never negative and monotonically decreases under quantum channels. This is a consequence of its definition using the Petz R\'enyi relative entropy
\begin{align}
    I_{\alpha}(A,B) := D_{\alpha}(\rho_{AB}||\rho_{A}\otimes \rho_B):= \frac{1}{\alpha-1}\log\Tr\left[ \rho_{AB}^{\alpha} (\rho_{A}\otimes \rho_B)^{1-\alpha}\right].
\end{align}
The usefulness of the quantity for our purposes is that, by definition, it must limit to the standard mutual information in the $\alpha \rightarrow 1$ limit, and we know with confidence, from the Ryu-Takayanagi formula, what the holographic dual for mutual information is (away from phase transitions). 

We now demonstrate that using the same assumptions for holographic correlation functions used in Refs.~\cite{2019PhRvD..99j6014K, 2019PhRvL.123m1603K} for negativity leads to an answer that we know for certain is incorrect. We then conclude that RSB must be incorporated into CFT computations in order to determine the correct answers.

The replica trick for PRMI involves two replica indices
\begin{align}
     I_{\alpha}(A,B) = \lim_{m\rightarrow 1-\alpha}  \frac{1}{\alpha-1}\log \left[\Tr\left(\rho_{AB}^{\alpha} (\rho_{A}\otimes \rho_B)^{m}\right)\right].
     \label{PREE_replica}
\end{align}
The joint moments may be evaluated using twist fields implementing a $g_A$ permutation on region $A$ and a $g_B$ permutation on region $B$, where in cycle notation
\begin{align}
    g_A &= (1,\dots, \alpha, \alpha + 1,\dots, \alpha + m), 
    \quad
    g_B = (1,\dots, \alpha, \alpha + m+1,\dots, \alpha + 2m).
\end{align}
The analogue of the double twist operator is the twist field corresponding to the permutation coming from fusing $g_A^{-1}$ and $g_B$
\begin{align}
    g_{A}^{-1} g_B = (1,\alpha + m, \alpha + m-1,\dots , \alpha+ 1,\alpha+m+1,\alpha + m +2,\dots ,\alpha +2m).
\end{align}
The conformal dimensions are fixed by the cycle structures
\begin{align}
    \Delta := \Delta_{g_A} = \Delta_{g_B} &= \frac{c}{12}\left(\alpha + m -\frac{1}{\alpha + m} \right),
    \quad
    \Delta_{g_A^{-1}g_B} = \frac{c}{12}\left(2m+1 -\frac{1}{2m+1} \right).
\end{align}
For disjoint intervals, the moments are given by
\begin{align}
    \Tr\left(\rho_{AB}^{\alpha} (\rho_{A}\otimes \rho_B)^{m}\right) = \left \langle \sigma_{g_A}(x_1) \sigma_{g_A}^{-1}(x_2) \sigma_{g_B}(x_3) \sigma_{g_B}^{-1}(x_4)\right \rangle.
\end{align}
For close intervals, we expand in the t-channel
\begin{align}
    \Tr\left(\rho_{AB}^{\alpha} (\rho_{A}\otimes \rho_B)^{m}\right)= \sum_p |C_{ABp}|^2 \mathcal{F}_p(1-x)\bar{\mathcal{F}}_p(1-\bar{x}), \quad 
    x := \frac{(x_2-x_1)(x_4-x_3)}{(x_3-x_1)(x_4-x_2)}.
\end{align}
We assume that at large $c$, we only need to keep the $g_A^{-1}g_B$ twist field in the sum. Unlike the case of negativity, we note that as $m\rightarrow 1-\alpha$ and $\alpha \rightarrow 1$, all operators become light. In such a limit, the Virasoro conformal blocks are known analytically at large $c$ \cite{2015JHEP...11..200F}, giving
\begin{align}
    \lim_{\alpha \rightarrow 1}I_{\alpha}(A,B) = \frac{c}{3}\log \left(\frac{1 + \sqrt{x}}{1-\sqrt{x}} \right)+O(1),
\end{align}
where the additive constant comes from the OPE coefficient and is not important for our purposes. This is proportional to the area of the entanglement wedge cross section in the vacuum \cite{2018NatPh..14..573U}, which is very different from the known answer for mutual information
\begin{align}
    I(A,B) = \frac{c}{3}\log \left(\frac{x}{1-x}\right).
\end{align}
Clearly, our assumption regarding the dominant conformal block was incorrect.

For the PRMI, there is also a clear analogue of the RSB saddle in the bulk. The RSB permutation that lies simultaneously on geodesics between $g_A$ and $\mathbbm{1}$ and between $g_B$ and $\mathbbm{1}$ with the most residual symmetry is
\begin{align}
    X = (1,\dots, \alpha).
\end{align}
The compositions of this permutation with $g_A$ and $g_B$ are
\begin{align}
    X^{-1}g_A &= (1,\alpha + 1,\dots, \alpha + m),
    \quad
    X^{-1}g_B =  (1,\alpha + m+ 1,\dots, \alpha +2 m).
\end{align}
It is clear that we need a general CFT prescription for evaluating RSB saddles to obtain the correct answer. The naive argument for single-block dominance in the conformal block decomposition is insufficient.

\section{Random Tensor Networks vs.\ Gravity}
\label{sec:RTN}
 
In this appendix, we consider random tensor networks, simple toy models of holographic duality that are remarkably effective in modeling the information theoretic aspects of AdS/CFT \cite{2010JPhA...43A5303C, 2016JHEP...11..009H}. The negativity has been analyzed in random tensor networks where replica symmetry breaking saddles provided the dominant contributions \cite{2021JHEP...06..024D, 2021PRXQ....2c0347S,2022JHEP...02..076K}. A key feature of these tensor networks was that their link states were maximally entangled, such that the entanglement spectra were approximately flat. This is unlike the entanglement spectra in general holographic states, and it has been suggested that a better model of holographic states can be made by modifying the link states to be sub-maximally entangled such that their spectra are not flat \cite{2016JHEP...11..009H, 2022arXiv220610482C}. We consider this modification and find that for sufficiently non-flat link spectra, replica symmetry is restored. We first review the construction of random tensor networks, explain the mechanism for replica symmetry restoration, and then comment on why this conclusion differs from that in a full theory of gravity.  

\subsection{Random tensor networks with non-flat link spectra}

A tensor network is defined on a graph with vertices $V$, and edges $E$ connecting pairs of vertices. For each vertex $x \in V$, we assign a rank-$k$ tensor, $T^{(x)}_{i_1 \dots i_k}$, where $k$ is the number of edges connected to $x$. Each tensor defines a state
\begin{align}
    \ket{T_x} =\sum_{i_1 \dots i_k} T^{(x)}_{i_1 \dots i_k}\ket{i_1}\dots \ket{i_k},
\end{align}
where the states on the right-hand side are basis vectors. To each edge $\{xy \} \in E$, we define a rank-$2$ tensor $E^{(xy)}_{ij}$ with the corresponding state
\begin{align}
     \ket{xy} = \sum_{i,j} E^{(xy)}_{ij}\ket{i}_x\ket{j}_y.
\end{align}
Frequently, these are taken to be maximally entangled (up to normalization), i.e., $E^{(xy)}_{ij} = \delta_{ij}$. The total tensor network state is then defined as
\begin{align}
    \ket{T} = \left(\bigotimes_{\{ xy\} \in E}\bra{xy} \right)\left(\bigotimes_{x \in V} \ket{T_x} \right).
\end{align}

In a random tensor network, the $\ket{T_x}$'s are drawn from the uniform (Haar) measure on a $D_x$-dimensional Hilbert space, where $D_x$ is the local Hilbert space dimension at a given vertex $x$. The (unnormalized) moments are then given by
\begin{align}
    \overline{\ket{T_x} \bra{T_x}^{\otimes k}}  =  {\sum_{\tau \in Sym_k}g_{\tau}}
\end{align}
where $g_{\tau}$ is the matrix representation of permutation $\tau$ in the symmetric group $Sym_k$. In order to compute negativities, we will need the moments of the partially transposed density matrix. These may be evaluated using correlation functions of twist operators, i.e., cyclic ($\Sigma$) and anti-cyclic ($\Sigma^{-1}$) permutations. For example,
\begin{align}
    \overline{\Tr\left[\left( \rho_{AB}^{T_B}\right)^k\right]} = \frac{\overline{\bra{T}^{\otimes k} \Sigma_A \Sigma^{-1}_B \ket{T}^{\otimes k}}}{\overline{\bra{T}\ket{T}^{ k}}} = {\sum_{\{ g_x\}} e^{-\mathcal{A}[\{g_x \}]}}.
\end{align}
In the second equality, the correlation function is reinterpreted as the partition function for a classical statistical mechanics model involving spins valued in $Sym_k$ located at each tensor. $\{ g_x\}$ represents the set of all allowed spin configurations obeying the boundary conditions set by the twist operators. That is, we set the boundary condition to the cyclic permutation $X$ on subregion $A$, the anti-cyclic permutation $X^{-1}$ on $B$, and the identity permutation $e$ on $C$. When all bond dimensions are taken to be large, the spin model will be in its ferromagnetic phase such that the dominant contributions to the partition function are given by simple domain wall configurations between groups of tensors that are all aligned.

The spin model action is given by
\begin{align}
    \mathcal{A}[\{g_x \}] = \sum_{\{ xy\} \in E}J(g_x^{-1}g_y), \quad J(h) = -\log\Tr\left[h\rho_e^{\otimes k} \right],
\end{align}
where $\rho_e$ is the density matrix for the link states, i.e., $\rho^{(xy)}_e  = \Tr_y \ket{xy}\bra{xy}$.
If permutation $h$ contains $C(h)$ cycles of lengths $k_1,\dots ,k_{C(h)}$, then
\begin{align}
    J(h) = \sum_{i = 1}^{C(h)}(k_i -1)S_{k_i}(\rho_e).
\end{align}
In the simplifying case where the links are maximally entangled, all R\'enyi entropies are the same, so
\begin{align}
    J(h) = \left(k- C(h)\right)\log D = d(e,h) \log D,
    \label{cost}
\end{align}
where $D$ is the dimension of the link state $\rho_e$, and $d(g_1,g_2)=k-C(g_1 g_2^{-1})$ is the Cayley distance metric on $Sym_k$.

\subsection{Replica symmetry breaking}

To warm up, let us first consider a tensor ``network'' consisting of a single random tensor with maximally entangled links (the topic of Ref.~\cite{2021PRXQ....2c0347S}). There is only a single spin to sum over in the partition function, i.e.,
\begin{align}
    \overline{\Tr\left[\left( \rho_{AB}^{T_B}\right)^k\right]} = \sum_{h \in S_k} D_A^{C\left(X^{-1}h\right) - k}D_B^{C\left(X h\right) - k}D_C^{C(h)-k},
\end{align}
where the subscripts indicate the different bond dimensions on different links.
The relevant parameter regime for holography is $1/D_C \ll D_A/D_B \ll D_C$. Therefore, to maximize the exponents, we would like to find an $h$ that simultaneously maximizes $C\left(X^{-1}h\right) + C(h)$ and $C\left(X h\right) + C(h) $. This means that $h$ is on the intersection of a geodesic between $X$ and $e$ and a geodesic between $X^{-1}$ and $e$, as measured by the Cayley metric of $Sym_k$. These are given by non-crossing permutations consisting of only one-cycles and two-cycles \cite{2021PRXQ....2c0347S,2021JHEP...06..024D}. 
Furthermore, in the regime where there is significant entanglement between $A$ and $B$ ($D_A D_B/D_C \gg 1$), the number of two-cycles is maximized, such that there is no one-cycle for even $k$ and a single one-cycle for odd $k$. We denote this set of non-crossing pairings as $NC_2$.
Including the contributions of all such elements and computing the spectrum of $\rho_{AB}^{T_B}$ via the resolvent method, one obtains a semicircle distribution centered at $(D_AD_B)^{-1}$ and with radius $2(D_AD_BD_C)^{-1/2}$. At leading order, the logarithmic negativity is then found to equal half of the mutual information
\begin{align}
    \mathcal{E}(A:B) = \frac{1}{2}I(A:B) = \frac{1}{2}\log \frac{D_A D_B}{D_C}.
\end{align}

Considering more general tensor networks with more than one tensor, Ref.~\cite{2021JHEP...06..024D} further showed that the $NC_2$ permutations are the only relevant ones for a large class of RTNs with maximally entangled links.\footnote{For random tensor networks that exhibit negativity spectra different from the semicircle distribution, see \cite{2022JHEP...02..076K}.
} The twist operators that set the boundary conditions in the spin model fix the domain wall structure at the boundary between $X$ or $X^{-1}$ and the identity $e$. One would naively think that these domain walls extend into the bulk as in the left figure of \figref{splitting_fig}. However, as the domain wall ``tension'' for maximally entangled links is given by \Eqref{cost}, the domain wall between the $X$ and $X^{-1}$ domains can split creating a new domain filled in by some permutation $h$, without incurring any energy cost so long as $h$ lies on the intersection of pairwise geodesics between $X$, $X^{-1}$ and $e$. Once the domain walls split, they can relax into their minimal area positions in order to minimize the global energy cost (right figure of \figref{splitting_fig}). The final dominant spin configurations in the partition function consist of a large region in the center filled in by some $\tau \in NC_2$. The calculation thus reduces to that of the single-tensor network with $D_A, D_B, D_C$ replaced by the product of the dimensions of the bonds on the corresponding domain walls. Thus, it is clear that the negativity again equals half of the mutual information.

\subsection{Replica symmetry restoration}

The simplest RTN model that realizes the restoration of replica symmetry by virtue of a non-flat spectrum comprises two random tensors, which we call the 2TN model. 
\begin{align}
    \begin{tikzpicture}[scale=1]


\draw[black,thick,rounded corners] (0-7,0+1/2) rectangle (1-7,1+1/2);

\draw[black,thick,rounded corners] (2-7,0+1/2) rectangle (3-7,1+1/2);

\node[black] at (1/2-7,1/2+1/2) {$T_1$};
\node[black] at (2+1/2-7,1/2+1/2) {$T_2$};

\draw[black, thick] (1-7,1/2+1/2) -- (2-7, 1/2+1/2);

\draw[black, thick] (1/2-7,0+1/2) -- (1/2-7, -1/2+1/2);
\draw[black, thick] (5/2-7,0+1/2) -- (5/2-7, -1/2+1/2);
\draw[black, thick] (-1-7,1/2+1/2) -- (0-7, 1/2+1/2);

\draw[black, thick] (3-7,1/2+1/2) -- (4-7, 1/2+1/2);

\node[left] at (-1-7,1/2+1/2) {$A$};

\node[right] at (4-7,1/2+1/2) {$B$};

\end{tikzpicture}
\end{align}
This models a generic situation in holography where the internal bond plays the role of the entanglement wedge cross section. This model has been studied in detail in the case of flat link spectra \cite{2022JHEP...02..076K}. We specialize to a particular spectrum motivated by the single interval R\'enyi entropy in 2D CFT and analyze the problem using the resolvent method.

Consider the 2TN model with all link spectra following 2D CFT behavior $S_n\propto \frac{n+1}{2n}$. Explicitly, we may take the links to be
\begin{align}
    \ket{x y} = \int_0^{\lambda_{max}} d\lambda \rho(\lambda) \sqrt{\lambda }\ket{\lambda}_x\ket{\lambda}_y, \quad \lambda_{max} = e^{-S_{vN}/2}
\end{align}
where the density of states is \cite{2008PhRvA..78c2329C}
\begin{align}
    \rho(\lambda) = \delta(\lambda_{max}-\lambda)+ \Theta(\lambda_{max}-\lambda)\frac{{S_{vN}}{}}{\lambda\sqrt{{2S_{vN}}{}\log (\lambda_{max}/\lambda)}}I_1\left(\sqrt{2{S_{vN}}{}\log (\lambda_{max}/\lambda)}\right).
\end{align}
The above spectrum ensures the R\'enyi entropies for single intervals agree with the answer obtained in the vacuum state of a CFT.

In this situation, it is straightforward to check that the RSB saddle is subdominant to the RS saddle. We are then left with comparing the connected and disconnected RS saddles. We have
\begin{align}
     \mathcal{A}^{(n)}_{disc}-\mathcal{A}^{(n)}_{conn} = \begin{cases}
     (n-1)\left({S_{vN,A}+S_{vN,B}-\frac{n+1}{2n}S_{vN,C}}\right) -(n-2){S_{vN,EW}}
     , & n \in \mathbb{Z}_{even}
     \\
          (n-1)\left({S_{vN,A}+S_{vN,B}-\frac{n+1}{2n}S_{vN,C}}-S_{vN,EW}\right) 
     , & n \in \mathbb{Z}_{odd}
     \end{cases}.
\end{align}
In the regime where the entanglement wedge of $AB$ is connected, $S_{vN,A}+S_{vN,B} > S_{vN,C}$, the difference is always positive, so we can safely ignore the disconnected saddle.

To evaluate the negativity, we first find the full negativity spectrum using the resolvent method. The negativity resolvent is defined as
\begin{align}
\label{resolvent_def}
    R(z) = \Tr\left(\frac{\rho_{AB}^{T_B}}{z-\rho_{AB}^{T_B}}\right).
\end{align}
The negativity spectrum is given by the discontinuity over the real axis
\begin{align}
    \lambda \rho_{\mathcal{E}}(\lambda) = -\frac{1}{\pi} \Im R(\lambda + i\epsilon) \Big |_{\epsilon \rightarrow 0^+}.
\end{align}
Following similar calculations to \cite{2008PhRvA..78c2329C}, we obtain
\begin{align}
   \rho_{\mathcal{E}}(\lambda) &=   \frac{\delta_{\[0,\lambda_{max} \]}
   }{2 \lambda  \sqrt{\log (\lambda_{max}/\lambda)}}
   \Bigg(\sqrt{\frac{S_{vN,C}+4S_{vN,EW}}{2}} I_1\left(2 \sqrt{\frac{S_{vN,C}+4S_{vN,EW}}{2}} \sqrt{\log
   (\lambda_{max}/\lambda)}\right)
   \nonumber
   \\
   & +\sqrt{\frac{S_{vN,C}+S_{vN,EW}}{2}} I_1\left(2 \sqrt{\frac{S_{vN,C}+S_{vN,EW}}{2}}
   \sqrt{\log (\lambda_{max}/\lambda)}\right)\Bigg)
   \nonumber
   \\
   &-\frac{\delta_{\[ -\lambda_{max},0 \]}}{2 \lambda  \sqrt{\log (-\lambda_{max}/\lambda)}}
   \Bigg(\sqrt{\frac{S_{vN,C}+4S_{vN,EW}}{2}} I_1\left(2 \sqrt{\frac{S_{vN,C}+4S_{vN,EW}}{2}} \sqrt{\log
   (-\lambda_{max}/\lambda)}\right)
   \nonumber
   \\
   & -\sqrt{\frac{S_{vN,C}+S_{vN,EW}}{2}} I_1\left(2 \sqrt{\frac{S_{vN,C}+S_{vN,EW}}{2}}
   \sqrt{\log (-\lambda_{max}/\lambda)}\right)\Bigg) + \delta\left(\lambda -  \lambda_{max}\right),
   \label{neg_spec_final}
\end{align}
where $\lambda_{max}:= e^{-\frac{S_{vN,C}+S_{vN,EW}}{2}}$.
The above spectrum reproduces all the integer moments of $\rho_{AB}^{T_B}$. One may furthermore evaluate the RRNs and logarithmic negativity directly from the spectrum to find
\begin{align}
    \tilde{\mathcal{E}}^{(n)}(A:B)
    &= \log\left( \int d\lambda \rho_{\mathcal{E}}(\lambda) \left|\lambda \right|^n\right) - n \frac{\int d\lambda \rho_{\mathcal{E}}(\lambda) \left|\lambda\right|^n\log \left|\lambda\right|}{\int d\lambda \rho_{\mathcal{E}}(\lambda)\left| \lambda \right|^n}
    =\frac{S_{vN,C}+4S_{vN,EW}}{n},
    \\
    {\mathcal{E}}(A:B) &= 
    \log \left(\int d\lambda \rho_{\mathcal{E}}(\lambda) |\lambda| \right)= 
    \frac{3}{2}S_{vN,EW},
\end{align}
in agreement with the naive analytic continuation.

\subsection{Comparison to gravity}

We have explicitly shown that the replica symmetric saddle is the dominant contribution when there are sufficiently non-flat link spectra. It is instructive to analyze why this conclusion was distinct from gravity.

Consider computing the RRN for even integer $2k$ using the gravitational path integral for two intervals in vacuum AdS. For the candidate RSB saddle, we focus on the permutation $(1,2)(3,4)\dots (2k-1,2k)$ because it retains a $\mathbb{Z}_{k}$ replica symmetry that cyclically permutes the pairs of copies.\footnote{The degeneracy between different choices of RSB saddle breaks once we move away from the fixed-area limit. Perturbatively, for nfRTNs, it can be shown that this choice of saddle is indeed the best RSB candidate. However, it remains an open interesting question whether this is true in gravity. For our analysis, we will assume this is the case.} It is convenient to quotient the bulk by this symmetry, giving a bulk geometry, $\hat{\mathcal{B}}_2^{(k)}$, whose asymptotic boundary is a two-fold cover of the original boundary, branched over $A\cup B$. In the quotient space, there are conical defects at the fixed points of the quotient with opening angle $\frac{2\pi}{k}$. These are homologous to subregions $A$ and $B^*$ as shown in \figref{fig:RRN}. At $k = 1$, the defects disappear and the geometry is smooth. If this saddle dominates, the RRN is given by 
\begin{align}\label{eq:RRNgrav}
    \tilde{\mathcal{E}}^{(2k)} = k^2 \partial_k I\[\hat{\mathcal{B}}_2^{(k)}\] = \frac{\text{Area}\left(\gamma_{A}^{(2\pi/k)} \cup \gamma_{B^*}^{(2\pi/k)}: \hat{\mathcal{B}}_2^{(k)}\right)}{4G},
\end{align}
where we remind the reader that $\g^{(2\pi/k)}$ means a conical defect of opening angle $\frac{2\pi}{k}$.
At $k=1$, \Eqref{eq:RRNgrav} gives the area of the surface $\gamma_{A}\cup \gamma_{B^*}$ in $\mathcal{B}_2$. $\mathcal{B}_2$ is locally the original single-copy geometry with the additional backreaction of R\'enyi-2 branes of tension $\frac{1}{8 G}$ located at the RT surface of $AB$ (which we will call $\gamma_C$). Note that at $k=1$ the backreaction from the surfaces $\gamma_{A}, \gamma_{B^*}$ vanishes, and thus, it does not matter whether we compute the area of $\gamma_{A}$ or $\gamma_{A^*}$ since there is a symmetry relating them. 

\begin{figure}
    \centering
    \includegraphics[scale=0.5]{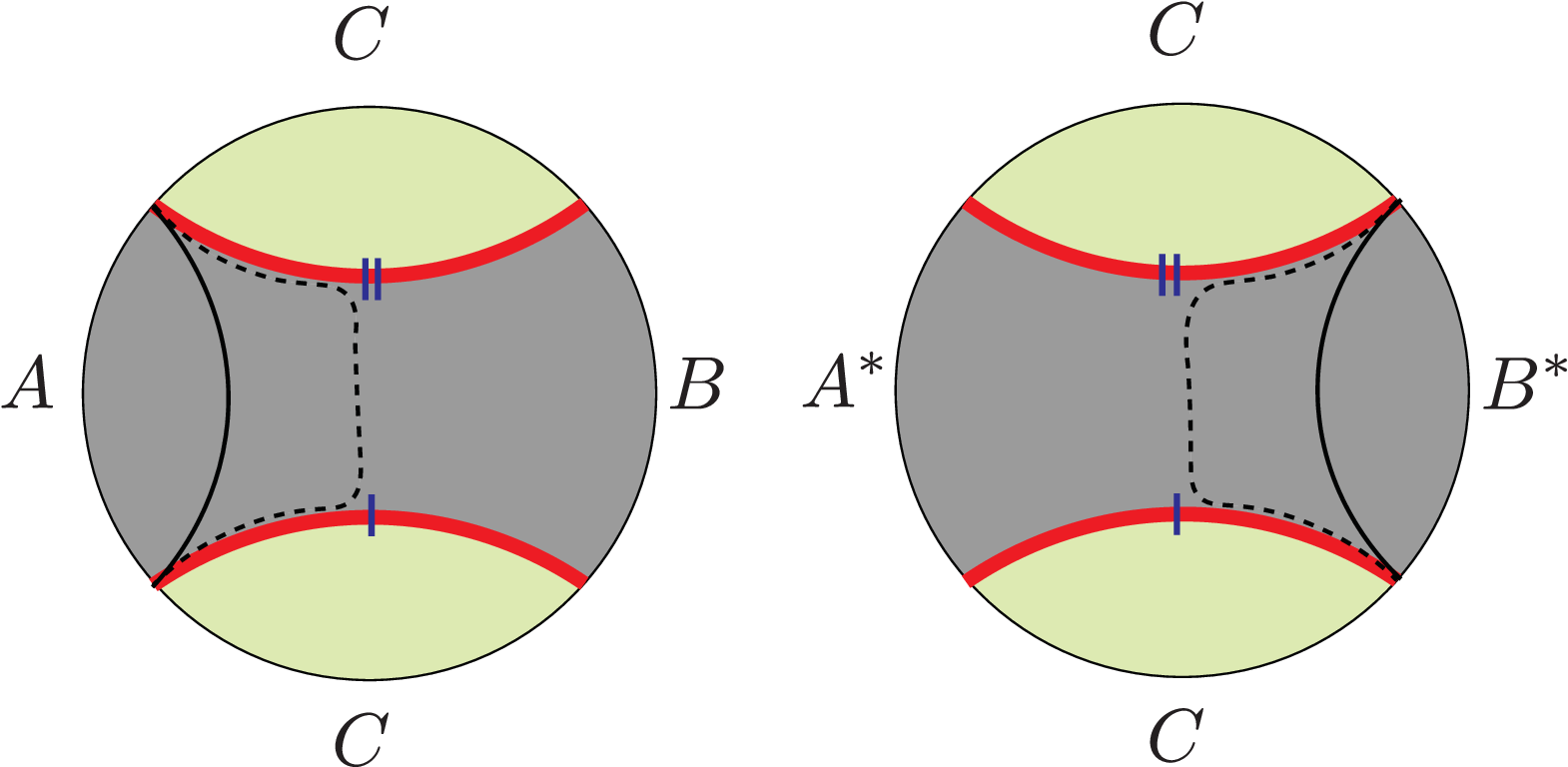}
    \caption{The RRN $\tilde{\mathcal{E}}^{(2)}$ is computed by the sum of minimal surfaces (solid, black) homologous to subregions $A$ and $B^*$ in a gravitational solution $\mathcal{B}_2$ which has the topology of a double-cover of the original spacetime branched over the ``cosmic branes'' (red) anchored on subregion $AB$. The analogous answer obtained in nfRTNs is given by the dashed line, which is non-minimal.}
    \label{fig:RRN}
\end{figure}

There is a new effect here not seen in the nfRTN. We understand this effect in gravity as follows: in $\hat{\mathcal{B}}_2^{(k)}$, there is a backreaction effect due to the branching at $\gamma_{C}$ that has the effect of a R\'enyi-2 brane\footnote{Strictly speaking, we have a R\'enyi-2 brane only after taking a $\Z_2$ quotient of $\hat{\mathcal{B}}_2^{(k)}$, but the other R\'enyi-$k$ branes break the $\Z_2$ symmetry. Nonetheless, we will refer to the backreaction effect in $\hat{\mathcal{B}}_2$ as that of a R\'enyi-2 brane.} with tension $\frac{1}{8 G}$. There is also a backreaction effect due to the R\'enyi-$k$ branes at $\gamma_{A}$ and $\gamma_{B^*}$ with tensions $\frac{k-1}{4 kG}$. Near the asymptotic boundary, these branes converge, and naively adding their tensions gives $\frac{3k-2}{8 k G}$, which (for $k>1$) is larger than the $\frac{2k-1}{8 k G}$ tension of a R\'enyi-$2k$ brane that one would have from the replica symmetric solution if taking the full $\mathbb{Z}_{2k}$ quotient. This larger tension would naively suggest that the RSB saddle is always subdominant due to this IR divergence, just as in the nfRTN. However, this naive argument has crucially neglected the fact that in gravity, the branes will backreact \textit{on each other}, which will be in just the right way to cancel this effect. It is this mutual backreaction that is not captured by the nfRTN, causing the RS saddle to dominate in the nfRTN.

We mention that a similar effect has been observed previously in the literature of nfRTN \cite{2016JHEP...11..009H} in the simpler case of the R\'enyi entropies of disjoint intervals. Because the R\'enyi entropies are holographically computed from the areas of tensionful branes \cite{2016NatCo...712472D}, the additional gravitational action due to even very distant branes is different from the sum of additional actions of the solutions with just one brane, owing to the mutual backreaction between the branes. This implies that the R\'enyi mutual information is never zero, even at leading order in $G$ \cite{2016NatCo...712472D,2010PhRvD..82l6010H}. While it is caused by a similar mechanism, this well-known example is a milder critique of nfRTN as models of holography than the negativity case that has been the topic of this paper because at least, the nfRTN faithfully captures the correct bulk saddle topology for the R\'enyi entropy.

\section{Ruling Out Intersecting Branes}\label{app:brane}

In this appendix, we prove that geometries $\hat{g}$ with certain intersecting branes or branes ending on other branes cannot be obtained from a $\Z_k$ quotient of a smooth parent geometry $g$. Therefore, these geometries are not saddles of the gravitational path integral and will not contribute to calculations such as for the negativity.  

\begin{theorem}
\label{thm1}
    Let $\hat{g}$ be a geometry with intersecting branes or a brane ending on another
brane. If all involved branes are codimension-2 and at least one has conical opening angle $2\pi/k$, then $\hat{g}$
cannot be obtained by a $\Z_k$ quotient of a smooth geometry $g$.
\end{theorem}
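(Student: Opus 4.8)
The plan is to work locally near the singular strata and exploit the fact that a smooth finite-group quotient can only produce conical branes that are images of fixed-point loci. Assume for contradiction that $\hat g = g/\Z_k$ with $g$ smooth and the $\Z_k$-action isometric (as is guaranteed in the replica setting, where the quotient is taken along the genuine replica isometry of the bulk saddle). First I would invoke the slice theorem: since $\Z_k$ is cyclic, every subgroup is $\Z_m$ with $m\mid k$ and generated by a single isometry, so $\mathrm{Fix}(\Z_m)$ is a disjoint union of smooth, boundaryless, totally geodesic submanifolds on which the action linearizes. A codimension-$2$ fixed locus forces its stabilizer to act on the normal $2$-plane by rotation through $2\pi/m$, so its image in $\hat g$ is exactly a conical brane of opening angle $2\pi/m$. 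Conversely, the whole singular set of $\hat g$ is the image of $\bigcup_{m\mid k,\,m>1}\mathrm{Fix}(\Z_m)$, so \emph{every} brane arises this way; in particular the brane of opening angle $2\pi/k$ is a component of $\mathrm{Fix}(\Z_k)$, the fixed locus of the \emph{full} group.

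The two facts I would then use are (i) these lifted loci are boundaryless, and (ii) they are nested, $\mathrm{Fix}(\Z_k)\subseteq\mathrm{Fix}(\Z_m)$ whenever $\Z_m\subseteq\Z_k$, because a point fixed by all of $\Z_k$ is a fortiori fixed by $\Z_m$. Fact (i) disposes of the ``ending'' case at once: a brane terminating in the interior of $\hat g$ would lift to a fixed submanifold with an interior boundary point, a contradiction; this holds whether the $2\pi/k$ brane is the one ending or the one being ended upon, since both are images of fixed loci.

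For the intersecting case, let $\beta$ (opening angle $2\pi/k$, a component of $\mathrm{Fix}(\Z_k)$) and a distinct brane $\beta'$ meet at $p$. If $\beta'$ has opening angle not of the form $2\pi/m$ with $m\mid k$ we are immediately done, as it cannot descend from a smooth quotient; otherwise $\beta'$ is a component of some $\mathrm{Fix}(\Z_m)$. By (ii) the component of $\mathrm{Fix}(\Z_m)$ through a lift $\tilde p$ contains the component of $\mathrm{Fix}(\Z_k)$ through $\tilde p$, i.e.\ $\beta\subseteq\beta'$ near $p$. But both branes are codimension $2$ by hypothesis, so these are equidimensional submanifolds with one inside the other; since fixed loci are closed, $\beta$ is open and closed in $\beta'$ and the two coincide near $p$, contradicting that they are distinct intersecting branes. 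This is precisely where the hypotheses bite: the $2\pi/k$ brane is the \emph{minimal} (most-stabilized) stratum, so it cannot be crossed by a genuinely different codimension-$2$ brane—whereas branes coming only from proper subgroups may legitimately meet along a deeper $\mathrm{Fix}(\Z_k)$ stratum, which is why the analogous reflected-entropy configuration, lacking any full-group brane, is permitted.

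The step I expect to be the main obstacle is justifying rigorously, without overassuming regularity, that a smooth $\Z_k$ quotient yields \emph{only} conical orbifold branes of the stated rotation type and that $\mathrm{Fix}(\Z_k)$ really is the minimal stratum through $p$ even when the intersection stratum has codimension greater than $2$. I would settle this by passing to the $2$-plane $N$ normal to $\mathrm{Fix}(\Z_k)$ at $\tilde p$: there the generator of $\Z_k$ acts as a faithful rotation, so $\Z_m$ rotates $N$ nontrivially and $\mathrm{Fix}(\Z_m)\cap N=\{\tilde p\}$, forcing $\mathrm{Fix}(\Z_m)$ to collapse onto $\mathrm{Fix}(\Z_k)$ in the normal directions. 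This reduces the whole comparison to two codimension-$2$ loci inside a single smooth chart, where the coincidence conclusion of step three is unambiguous.
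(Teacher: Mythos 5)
Your proposal is correct, and it rests on the same central lemma as the paper's proof: a codimension-2 brane of opening angle $2\pi/k$ must descend from a locus fixed by the \emph{full} group $\Z_k$ (equivalently, by the generator $r$), because the cone angle at a generic point is $2\pi$ divided by the order of its stabilizer. The paper establishes this by a hands-on argument in a local Euclidean chart (a union of affine fixed-point sets $\bigcup_m \operatorname{Fix}(r^m)$ can only be a codimension-2 plane if one of them is the whole plane), whereas you get it from the slice theorem and the orbit-type stratification; these are the same idea in different clothing. Where you genuinely diverge is the endgame. The paper picks regular points $q_1,q_2$ on the two branes near the singular point, notes both lie in the affine set $\operatorname{Fix}(r^m)$ since $\operatorname{Fix}(r)\subseteq\operatorname{Fix}(r^m)$, and derives a contradiction because the chord joining them would have to lie on the branes. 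You instead use the same inclusion $\operatorname{Fix}(\Z_k)\subseteq\operatorname{Fix}(\Z_m)$ together with equidimensionality: two nested, closed, codimension-2 totally geodesic loci through $\tilde p$ must locally coincide, so the two ``distinct'' branes are the same brane — and your normal-2-plane computation (that $\Z_m$ rotates $N$ nontrivially, so $\operatorname{Fix}(\Z_m)\cap N=\{\tilde p\}$ and $\operatorname{Fix}(\Z_m)$ collapses onto $\operatorname{Fix}(\Z_k)$) makes this airtight and is arguably cleaner than the paper's chord argument, since it needs only the totally geodesic structure rather than a literal affine approximation. One small caveat: your one-line dismissal of the ``ending'' case via boundarylessness is slightly too quick as stated — the lift of the terminating brane does not literally acquire a boundary point; rather, the boundaryless fixed locus continues past $\tilde p$ and one must argue its continuation still projects to the same brane (or invoke your intersection argument at the junction, which covers the case of a brane ending \emph{on} another brane, the situation the theorem actually addresses). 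This is a presentational gap, not a logical one.
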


In order to prove this theorem, we first prove the following lemma.

\begin{nlemma}\label{fixedpoint}
    Under the assumptions of Theorem \ref{thm1} and further supposing that $\hat{g}$ is the quotient of a smooth geometry $g$ by a $\Z_k$ isometry generated by $r$, any regular point on any brane with conical opening angle $2\pi/k$ in $\hat{g}$ must be a fixed point of $r$. Here, a regular point on a brane is defined as a point at which the union of all the branes is locally a smooth manifold, thus omitting intersection points.
\end{nlemma}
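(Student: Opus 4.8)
The plan is to reduce the claim to a purely local statement near a lift of the regular point and then exploit the linearization of a finite isometry at a fixed point. Let $d=\dim g$, let $p$ be a regular point on a brane of $\hat g$ with conical opening angle $2\pi/k$, and let $C_\phi$ denote the two-dimensional cone of total angle $\phi$. By the definition of a regular point, a neighborhood of $p$ in $\hat g$ is isometric to $\R^{d-2}\times C_{2\pi/k}$, with the brane sitting at $\R^{d-2}\times\{\text{tip}\}$; in particular the transverse cone angle is $2\pi/k\neq 2\pi$ for $k\geq 2$. Fix a lift $\tilde p\in g$ with $\pi(\tilde p)=p$, where $\pi\colon g\to g/\Z_k=\hat g$ is the quotient map, and let $H=\mathrm{Stab}_{\Z_k}(\tilde p)\leq \Z_k$ be its stabilizer.

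First I would invoke the slice theorem for the action of the finite (hence proper) group $\Z_k$ by isometries: a small geodesic ball $U$ around $\tilde p$ can be chosen $H$-invariant, with every other group element moving $U$ off itself, so that $\pi$ identifies a neighborhood of $p$ in $\hat g$ with $U/H$. Since $g$ is smooth, $U$ is isometric to a ball in $T_{\tilde p}g\cong \R^d$, and since $H$ fixes $\tilde p$ it acts there through its linearization, an orthogonal representation $H\to O(d)$ (this is the standard fact that an isometry fixing a point coincides with its differential in normal coordinates). Thus a neighborhood of $p$ is isometric to $\R^d/H$ for this linear action. Now I match the two local models: the image of the fixed-point set of $H$ is the codimension-$2$ brane through $p$, so the $H$-fixed subspace of $\R^d$ has codimension exactly $2$ (a codimension-$1$ fixed locus would produce a reflection wall rather than a cosmic brane), and $H\cong \Z_m$ with $m=|H|$ acts on the normal $2$-plane as a cyclic group of rotations. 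Faithfulness of this normal action — any element acting trivially on the normal plane would have trivial differential at $\tilde p$ and hence be the identity — implies $\R^d/H\cong \R^{d-2}\times C_{2\pi/m}$. Comparing with $\R^{d-2}\times C_{2\pi/k}$ gives $2\pi/m=2\pi/k$, so $|H|=k$. Since $H\leq \Z_k$, this forces $H=\Z_k=\langle r\rangle$, whence $r\tilde p=\tilde p$, which is precisely the claim.

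The main obstacle is the middle step: justifying that the quotient is locally modeled on the linear action $\R^d/H$ and that the transverse action is a faithful rotation producing exactly the angle $2\pi/|H|$. This is where smoothness of $g$ and the codimension-$2$ hypothesis on the branes are essential, since they guarantee both that $U\cong\R^d$ and that the $H$-fixed locus has the correct codimension to reproduce a codimension-$2$ brane. The degenerate possibility $H=\{e\}$ is excluded immediately, as it would make $p$ a smooth point of cone angle $2\pi$, contradicting that $p$ lies on a brane of angle $2\pi/k$.
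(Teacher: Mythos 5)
Your strategy---pass to a lift $\tilde p$, invoke the slice theorem to model a neighborhood of $p$ as $U/H$ with $H=\mathrm{Stab}_{\Z_k}(\tilde p)$ acting linearly, and match local models to force $|H|=k$---is genuinely different from the paper's, which works directly with the affine subspaces $\operatorname{Fix}(r^m)$ and derives a contradiction at a nearby generic point of the brane using an explicit computation of the cone angle there. The approach is viable, but as written it has a gap at precisely the step where the paper's proof does its real work. You assert that ``the image of the fixed-point set of $H$ is the codimension-2 brane through $p$, so the $H$-fixed subspace of $\R^d$ has codimension exactly 2,'' and hence that $H$ fixes the brane pointwise and acts faithfully by rotations on a normal $2$-plane. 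But the brane is the image of the \emph{union} $\bigcup_{h\in H\setminus\{e\}}\operatorname{Fix}(h)$, whereas the $H$-fixed subspace is the \emph{intersection}, i.e.\ $\operatorname{Fix}(s)$ for a generator $s$ of the cyclic group $H$. A priori the codimension-2 plane could be $\operatorname{Fix}(s^{j})$ for a proper power while $\operatorname{Fix}(s)$ has strictly higher codimension: e.g.\ in $d=4$, $s=\mathrm{rot}(\pi/2)\oplus\mathrm{rot}(\pi)$ generates a $\Z_4$ with $\operatorname{Fix}(s)=\{0\}$ but $\operatorname{Fix}(s^2)$ a $2$-plane, and the union of fixed sets is still a smooth codimension-2 submanifold. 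In that situation $H$ is not of the form (identity on the brane)$\,\times\,$(rotation on the normal plane), the model $\R^{d-2}\times C_{2\pi/|H|}$ is wrong, and your faithfulness argument (``an element acting trivially on the normal plane has trivial differential'') silently presupposes that every element of $H$ already fixes the brane pointwise---which is exactly what must be proved. This is the case $m_0>1$ that the paper's proof is devoted to excluding, and it is excluded only by using the hypothesis that the opening angle is $2\pi/k$ at a nearby \emph{generic} point $q$ of the brane, where the stabilizer is the proper subgroup $\langle s^{j}\rangle$ and the rotation lemma forces an angle $2\pi/|\langle s^{j}\rangle|\neq 2\pi/k$.

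The gap is repairable within your framework: once you grant that a neighborhood of $p$ is isometric to $\R^{d-2}\times C_{2\pi/k}$, comparing the volume of a small ball about $p$ in $U/H$ (which is $|H|^{-1}$ times the Euclidean ball volume) with that in $\R^{d-2}\times C_{2\pi/k}$ (which is $k^{-1}$ times it) yields $|H|=k$ directly, bypassing any claim about the structure of the action. Be aware, though, that even the premise that the metric \emph{at $p$ itself} is the product cone $\R^{d-2}\times C_{2\pi/k}$ reads more into ``regular point'' than the paper's definition (which constrains only the topology of the brane set); this is why the paper evaluates the cone angle at a nearby generic point rather than at $p$, and you should either justify that premise or relocate your comparison to such a point.
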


\begin{proof}
    Let $p$ be a regular point on a brane. We aim to rule out the possibility that $p$ is a fixed point of a power of $r$, but not $r$ itself. In a sufficiently small neighborhood of $p$, $g$ is approximately a Euclidean space $E_D$ (where $D$ is the dimension). The isometry group of the Euclidean space, $ISO(E_D)$, is generated by translations, rotations, and reflections. It is well known that the set of fixed points, $\operatorname{Fix}(s)$, of any $s \in ISO(E_D)$ in $E_D$ is an affine subspace of $E_D$.

    By definition, the union of all the branes is the fixed point set $\bigcup_{m=1}^{k-1} \operatorname{Fix}(r^m)$, i.e., the union of the fixed point sets of all non-identity elements of the group $\Z_k$. In a sufficiently small neighborhood of a regular point $p$ on a codimension-2 brane, the brane is approximately a codimension-2 plane, and each $\operatorname{Fix}(r^m)$ can be viewed as an affine subspace of $E_D$: either $r^m$ acts within the neighborhood and thus can be identified with an element of $ISO(E_D)$, or $r^m$ does not act within the neighborhood\footnote{This happens in the examples studied in Section 3 of Ref. \cite{Haehl:2014zoa}.} and thus $\operatorname{Fix}(r^m)$ is empty in the neighborhood. Since each $\operatorname{Fix}(r^m)$ is an affine space, their union can only be a codimension-2 plane if $\operatorname{Fix}(r^m)$ is the full plane for some $m$ and all other $\operatorname{Fix}(r^m)$ are subspaces of the plane. Let $m_0$ be the smallest $m \in \{1,2,\dots,k-1 \}$ such that $\operatorname{Fix}(r^{m})$ is the codimension-2 plane. If $m_0 = 1$, every point on the codimension-2 plane, including $p$, is a fixed point of $r$, and this shows what we wanted to prove.
    
    If $m_0 > 1$, we now derive a contradiction. For $1\leq m \leq m_0-1$, by assumption, $\operatorname{Fix}(r^m)$ is a proper subset of the plane, and since $\operatorname{Fix}(r^m)$ must still be affine, it must be of higher codimension than two. Therefore, we can find a point $q$ in the neighborhood of $p$ that is in $\operatorname{Fix}(r^{m_0})$ but not in $\operatorname{Fix}(r^m)$ for any $1\leq m \leq m_0-1$. This implies that $q$ is a fixed point of any element of $\langle r^{m_0}\rangle$ (the group generated by $r^{m_0}$) but not a fixed point of any other element of $\Z_k$.
    We then use the fact (which we prove in Lemma \ref{rotation} below) that if $s\in ISO(E_D)$ generates a group of finite order $n$ and $\operatorname{Fix}(s)$ is a codimension-2 plane, then $s$ must be a rotation of order $n$ in some 2-plane. Thus, $r^{m_0}$ must be a rotation of order $|\langle r^{m_0}\rangle|$ (the size of $\langle r^{m_0}\rangle$). Since $q$ is a fixed point of $r^{m_0}$ and its powers but not other elements of $\Z_k$, the conical opening angle at $q$ in $\hat g$ must be $2\pi/|\langle r^{m_0}\rangle|$, which contradicts our assumption that the conical opening angle is $2\pi/k$, completing the proof of the lemma.
\end{proof}

In the proof above, we promised to prove the following lemma.

\begin{nlemma}\label{rotation}
    If $s\in ISO(E_D)$ generates a group $\<s\>$ of finite order $n$, and $\operatorname{Fix}(s)$ is a codimension-2 plane, then $s$ must be a rotation of order $n$ in some 2-plane.
\end{nlemma}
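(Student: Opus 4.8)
The plan is to reduce the statement to a claim about a single orthogonal matrix and then apply the real normal form for orthogonal transformations. First I would use the hypothesis that $\operatorname{Fix}(s)$ is a codimension-2 affine plane, hence in particular nonempty, to pick a point $p \in \operatorname{Fix}(s)$ and translate coordinates so that $p$ is the origin. Writing a general isometry as $s(x) = Rx + b$ with $R \in O(D)$, the condition $s(p)=p$ forces $b = (I-R)p$, so in the shifted coordinate $y = x-p$ the map becomes the linear transformation $y \mapsto Ry$. Thus without loss of generality $s = R \in O(D)$ fixes the origin, and $\operatorname{Fix}(s) = \ker(R-I)$ is exactly the $+1$-eigenspace $V_0$ of $R$, of dimension $D-2$.

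Next I would invoke the structure theorem for real orthogonal matrices, which requires no finite-order hypothesis: $\R^D$ decomposes orthogonally into $R$-invariant subspaces, namely the $+1$ and $-1$ eigenspaces together with a collection of $2$-dimensional planes on each of which $R$ acts as a rotation $R_\theta$ by some angle $\theta \in (0,\pi)$. Since $V_0$ is precisely the $+1$-eigenspace, its orthogonal complement $W := V_0^\perp$ is $R$-invariant, $2$-dimensional, and contains no nonzero $+1$-eigenvector of $R$.

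Then I would argue that $R|_W$ must be a rotation. Because $W$ is $2$-dimensional and carries no $+1$-eigenvector, the only possibilities allowed by the normal form are (i) a single rotation block $R_\theta$ with $\theta \in (0,\pi)$, or (ii) a pair of $-1$ eigenvalues, i.e.\ $R|_W = -I_2 = R_\pi$; in both cases $R|_W$ is a genuine rotation of $W$. The point I would stress is that a reflection of a $2$-plane is conjugate to $\operatorname{diag}(1,-1)$ and therefore has a $1$-dimensional fixed line; such a line would lie in $\operatorname{Fix}(s)$ and enlarge $V_0$ beyond codimension $2$, contradicting the hypothesis. Hence $s = R$ acts as the identity on $V_0$ and as a rotation by some angle $\theta$ in the $2$-plane $W$.

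Finally I would pin down the order. Since $s$ acts trivially on $V_0$, its order equals the order of the rotation $R_\theta$, which is the least positive $m$ with $m\theta \equiv 0 \pmod{2\pi}$. By hypothesis $\langle s\rangle$ has order $n$, so this rotation has order exactly $n$, giving $\theta = 2\pi j/n$ with $\gcd(j,n)=1$; thus $s$ is a rotation of order $n$ in the $2$-plane $W$, as claimed. The only mildly delicate step is the third one: making sure the improper case $\det R = -1$ cannot produce a codimension-2 fixed set through a reflection, which is exactly what pins $R|_W$ down to a rotation rather than a reflection, so that is the step I would write out most carefully.
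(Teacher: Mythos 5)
Your proof is correct and follows essentially the same route as the paper's: reduce to a linear orthogonal map fixing the codimension-2 plane pointwise, restrict to the 2-dimensional complement, and rule out a reflection there because its fixed line would enlarge $\operatorname{Fix}(s)$ beyond codimension 2. The only cosmetic difference is that you translate to a fixed point and invoke the normal form for orthogonal matrices, whereas the paper derives the block-diagonal structure (vanishing translation part, $M_{12}=M_{21}=0$, $M_{22}=1$, $M_{11}\in O(2)$) by a direct computation.
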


\begin{proof}
$s$ generally acts as
\be
x' = M x + v, \qqu \forall x \in E_D,
\ee
where $v$ is a $D$-vector, and $M \in O(D)$. Choose coordinates so that $\operatorname{Fix}(s)$ is the codimension-2 plane given by $x^1=x^2=0$. In other words, as long as $x^1=x^2=0$, we have
\be
x = M x + v.
\ee
Writing this in a block form and, in particular, writing $x=(0,x_2)$ where $0$ denotes a 2-vector specifying the first two components, we find
\be
\begin{pmatrix}
0\\
x_2
\end{pmatrix}=
\begin{pmatrix}
M_{11} & M_{12}\\
M_{21} & M_{22}
\end{pmatrix}
\begin{pmatrix}
0\\
x_2
\end{pmatrix}+
\begin{pmatrix}
v_1\\
v_2
\end{pmatrix}=
\begin{pmatrix}
M_{12} x_2+v_1\\
M_{22} x_2+v_2
\end{pmatrix}.
\ee
In particular, we have
\be
0 = M_{12} x_2+v_1, \qu \forall x_2.
\ee
Setting $x_2=0$, we find $v_1=0$. Thus $0 = M_{12} x_2,\, \forall x_2$. Therefore $M_{12} = 0$.
Similarly, we have
\be
x_2 = M_{22} x_2+v_2, \qu \forall x_2.
\ee
Setting $x_2=0$, we find $v_2=0$. Thus $x_2 = M_{22} x_2,\, \forall x_2$. Therefore $M_{22} = 1$.
Now, since $M\in O(D)$, we find
\be
1= M^T M =
\begin{pmatrix}
M_{11}^T & M_{21}^T\\
0 & 1
\end{pmatrix}
\begin{pmatrix}
M_{11} & 0\\
M_{21} & 1
\end{pmatrix}=
\begin{pmatrix}
M_{11}^T M_{11} + M_{21}^T M_{21} & M_{21}^T\\
M_{21} & 1
\end{pmatrix}.
\ee
Thus, we find $M_{21}=0$ and $M_{11}^T M_{11}=1$. Using $M M^T=1$, we find $M_{11} M_{11}^T=1$. Thus $M_{11} \in O(2)$. It is well known that any such $M_{11}$ must be either a rotation by some angle $\p$ or a reflection in some direction. But in the case of a reflection, it is clear that the resulting $\operatorname{Fix}(s)$ would be codimension-1 instead of codimension-2. Therefore, $M_{11}$ is a rotation by some angle $\p$, and the action of $s$ can be written as a rotation in the $12$-plane:
\be
x' = \begin{pmatrix}
\cos\p & \sin\p & 0\\
-\sin\p & \cos\p & 0\\
0 & 0 & 1
\end{pmatrix} x,
\ee
where $1$ denotes the identity matrix of dimension $D-2$. Since $\<s\>$ has order $n$, $s$ must be a rotation of order $n$. In other words, $\p = 2\pi m/n$ with $(m,n)=1$.
\end{proof}

    We now prove Theorem \ref{thm1}. Suppose that $\hat{g}$ is the quotient of a smooth geometry $g$ by a $\Z_k$ isometry generated by $r$. Let $p$ be a non-regular point on the branes (e.g., an intersection point or ending point between two branes). In a sufficiently small neighborhood of $p$, let $q_1$ be a regular point on the first brane and $q_2$ a regular point on the second. Without loss of generality, we take the second brane to have conical opening angle $2\pi/k$. By definition, the angle between $\overrightarrow{pq_1}$ and $\overrightarrow{pq_2}$ is not $0$ or $\pi$.
    According to Lemma \ref{fixedpoint}, $q_2$ is a fixed point of $r$. We cannot guarantee the same for $q_1$, but by definition $q_1$ is a fixed point of some $r^m$ with $1\leq m\leq k-1$. Therefore, $q_1$ and $q_2$ both belong to $\operatorname{Fix}(r^m)$, which must be an affine space. Thus, any affine combination $\lambda q_1 + (1-\lambda)q_2$ must also be in $\operatorname{Fix}(r^m)$, which is part of the branes. But this is a contradiction: such an affine combination cannot be on the branes, since the angle between $\overrightarrow{pq_1}$ and $\overrightarrow{pq_2}$ is neither $0$ or $\pi$. Thus, we conclude that such a smooth geometry $g$ does not exist, proving Theorem \ref{thm1}.

\addcontentsline{toc}{section}{References}
\bibliographystyle{JHEP}
\bibliography{main}

\providecommand{\href}[2]{#2}\begingroup\raggedright\begin{thebibliography}{10}

\bibitem{2021JHEP...06..024D}
X.~{Dong}, X.-L. {Qi} and M.~{Walter}, \emph{{Holographic entanglement
  negativity and replica symmetry breaking}},
  \href{https://doi.org/10.1007/JHEP06(2021)024}{\emph{Journal of High Energy
  Physics} {\bfseries 2021} (2021) 24}
  [\href{https://arxiv.org/abs/2101.11029}{{\ttfamily 2101.11029}}].

\bibitem{2010GReGr..42.2323V}
M.~{van Raamsdonk}, \emph{{Building up spacetime with quantum entanglement}},
  \href{https://doi.org/10.1007/s10714-010-1034-0}{\emph{General Relativity and
  Gravitation} {\bfseries 42} (2010) 2323}
  [\href{https://arxiv.org/abs/1005.3035}{{\ttfamily 1005.3035}}].

\bibitem{2013ForPh..61..781M}
J.~{Maldacena} and L.~{Susskind}, \emph{{Cool horizons for entangled black
  holes}}, \href{https://doi.org/10.1002/prop.201300020}{\emph{Fortschritte der
  Physik} {\bfseries 61} (2013) 781}
  [\href{https://arxiv.org/abs/1306.0533}{{\ttfamily 1306.0533}}].

\bibitem{2015JHEP...04..163A}
A.~{Almheiri}, X.~{Dong} and D.~{Harlow}, \emph{{Bulk locality and quantum
  error correction in AdS/CFT}},
  \href{https://doi.org/10.1007/JHEP04(2015)163}{\emph{Journal of High Energy
  Physics} {\bfseries 2015} (2015) 163}
  [\href{https://arxiv.org/abs/1411.7041}{{\ttfamily 1411.7041}}].

\bibitem{2016PhRvL.117b1601D}
X.~{Dong}, D.~{Harlow} and A.~C. {Wall}, \emph{{Reconstruction of Bulk
  Operators within the Entanglement Wedge in Gauge-Gravity Duality}},
  \href{https://doi.org/10.1103/PhysRevLett.117.021601}{\emph{\prl} {\bfseries
  117} (2016) 021601} [\href{https://arxiv.org/abs/1601.05416}{{\ttfamily
  1601.05416}}].

\bibitem{2006PhRvL..96r1602R}
S.~{Ryu} and T.~{Takayanagi}, \emph{{Holographic Derivation of Entanglement
  Entropy from the anti de Sitter Space/Conformal Field Theory
  Correspondence}},
  \href{https://doi.org/10.1103/PhysRevLett.96.181602}{\emph{\prl} {\bfseries
  96} (2006) 181602} [\href{https://arxiv.org/abs/hep-th/0603001}{{\ttfamily
  hep-th/0603001}}].

\bibitem{2007JHEP...07..062H}
V.~E. {Hubeny}, M.~{Rangamani} and T.~{Takayanagi}, \emph{{A covariant
  holographic entanglement entropy proposal}},
  \href{https://doi.org/10.1088/1126-6708/2007/07/062}{\emph{Journal of High
  Energy Physics} {\bfseries 2007} (2007) 062}
  [\href{https://arxiv.org/abs/0705.0016}{{\ttfamily 0705.0016}}].

\bibitem{2015JHEP...01..073E}
N.~{Engelhardt} and A.~C. {Wall}, \emph{{Quantum extremal surfaces: holographic
  entanglement entropy beyond the classical regime}},
  \href{https://doi.org/10.1007/JHEP01(2015)073}{\emph{Journal of High Energy
  Physics} {\bfseries 2015} (2015) 73}
  [\href{https://arxiv.org/abs/1408.3203}{{\ttfamily 1408.3203}}].

\bibitem{2019arXiv190500577D}
S.~{Dutta} and T.~{Faulkner}, \emph{{A canonical purification for the
  entanglement wedge cross-section}}, {\emph{arXiv e-prints} (2019)
  arXiv:1905.00577} [\href{https://arxiv.org/abs/1905.00577}{{\ttfamily
  1905.00577}}].

\bibitem{2019PhRvD..99j6014K}
J.~{Kudler-Flam} and S.~{Ryu}, \emph{{Entanglement negativity and minimal
  entanglement wedge cross sections in holographic theories}},
  \href{https://doi.org/10.1103/PhysRevD.99.106014}{\emph{\prd} {\bfseries 99}
  (2019) 106014} [\href{https://arxiv.org/abs/1808.00446}{{\ttfamily
  1808.00446}}].

\bibitem{2019PhRvL.123m1603K}
Y.~{Kusuki}, J.~{Kudler-Flam} and S.~{Ryu}, \emph{{Derivation of Holographic
  Negativity in AdS$_{3}$/CFT$_{2}$}},
  \href{https://doi.org/10.1103/PhysRevLett.123.131603}{\emph{\prl} {\bfseries
  123} (2019) 131603} [\href{https://arxiv.org/abs/1907.07824}{{\ttfamily
  1907.07824}}].

\bibitem{2018NatPh..14..573U}
K.~{Umemoto} and T.~{Takayanagi}, \emph{{Entanglement of purification through
  holographic duality}},
  \href{https://doi.org/10.1038/s41567-018-0075-2}{\emph{Nature Physics}
  {\bfseries 14} (2018) 573}
  [\href{https://arxiv.org/abs/1708.09393}{{\ttfamily 1708.09393}}].

\bibitem{2020JHEP...04..208A}
C.~{Akers} and P.~{Rath}, \emph{{Entanglement wedge cross sections require
  tripartite entanglement}},
  \href{https://doi.org/10.1007/JHEP04(2020)208}{\emph{Journal of High Energy
  Physics} {\bfseries 2020} (2020) 208}
  [\href{https://arxiv.org/abs/1911.07852}{{\ttfamily 1911.07852}}].

\bibitem{2018JHEP...01..098N}
P.~{Nguyen}, T.~{Devakul}, M.~G. {Halbasch}, M.~P. {Zaletel} and B.~{Swingle},
  \emph{{Entanglement of purification: from spin chains to holography}},
  \href{https://doi.org/10.1007/JHEP01(2018)098}{\emph{Journal of High Energy
  Physics} {\bfseries 2018} (2018) 98}
  [\href{https://arxiv.org/abs/1709.07424}{{\ttfamily 1709.07424}}].

\bibitem{2019PhRvL.122n1601T}
K.~{Tamaoka}, \emph{{Entanglement Wedge Cross Section from the Dual Density
  Matrix}}, \href{https://doi.org/10.1103/PhysRevLett.122.141601}{\emph{\prl}
  {\bfseries 122} (2019) 141601}
  [\href{https://arxiv.org/abs/1809.09109}{{\ttfamily 1809.09109}}].

\bibitem{2018JHEP...10..152U}
K.~{Umemoto} and Y.~{Zhou}, \emph{{Entanglement of purification for
  multipartite states and its holographic dual}},
  \href{https://doi.org/10.1007/JHEP10(2018)152}{\emph{Journal of High Energy
  Physics} {\bfseries 2018} (2018) 152}
  [\href{https://arxiv.org/abs/1805.02625}{{\ttfamily 1805.02625}}].

\bibitem{Plenio:2007zz}
M.~B. Plenio and S.~Virmani, \emph{{An Introduction to entanglement measures}},
  {\emph{Quant. Inf. Comput.} {\bfseries 7} (2007) 1}
  [\href{https://arxiv.org/abs/quant-ph/0504163}{{\ttfamily
  quant-ph/0504163}}].

\bibitem{1998PhRvA..58..883Z}
K.~{{\.Z}yczkowski}, P.~{Horodecki}, A.~{Sanpera} and M.~{Lewenstein},
  \emph{{Volume of the set of separable states}},
  \href{https://doi.org/10.1103/PhysRevA.58.883}{\emph{\pra} {\bfseries 58}
  (1998) 883} [\href{https://arxiv.org/abs/quant-ph/9804024}{{\ttfamily
  quant-ph/9804024}}].

\bibitem{2002PhRvA..65c2314V}
G.~{Vidal} and R.~F. {Werner}, \emph{{Computable measure of entanglement}},
  \href{https://doi.org/10.1103/PhysRevA.65.032314}{\emph{\pra} {\bfseries 65}
  (2002) 032314} [\href{https://arxiv.org/abs/quant-ph/0102117}{{\ttfamily
  quant-ph/0102117}}].

\bibitem{1996PhRvL..77.1413P}
A.~{Peres}, \emph{{Separability Criterion for Density Matrices}},
  \href{https://doi.org/10.1103/PhysRevLett.77.1413}{\emph{\prl} {\bfseries 77}
  (1996) 1413} [\href{https://arxiv.org/abs/quant-ph/9604005}{{\ttfamily
  quant-ph/9604005}}].

\bibitem{1999JMOp...46..145E}
J.~{Eisert} and M.~B. {Plenio}, \emph{{A comparison of entanglement measures}},
  \href{https://doi.org/10.1080/09500349908231260}{\emph{Journal of Modern
  Optics} {\bfseries 46} (1999) 145}
  [\href{https://arxiv.org/abs/quant-ph/9807034}{{\ttfamily
  quant-ph/9807034}}].

\bibitem{2005PhRvL..95i0503P}
M.~B. {Plenio}, \emph{{Logarithmic Negativity: A Full Entanglement Monotone
  That is not Convex}},
  \href{https://doi.org/10.1103/PhysRevLett.95.090503}{\emph{\prl} {\bfseries
  95} (2005) 090503} [\href{https://arxiv.org/abs/quant-ph/0505071}{{\ttfamily
  quant-ph/0505071}}].

\bibitem{2000PhRvL..84.2726S}
R.~{Simon}, \emph{{Peres-Horodecki Separability Criterion for Continuous
  Variable Systems}},
  \href{https://doi.org/10.1103/PhysRevLett.84.2726}{\emph{\prl} {\bfseries 84}
  (2000) 2726} [\href{https://arxiv.org/abs/quant-ph/9909044}{{\ttfamily
  quant-ph/9909044}}].

\bibitem{1996PhLA..223....1H}
M.~{Horodecki}, P.~{Horodecki} and R.~{Horodecki}, \emph{{Separability of mixed
  states: necessary and sufficient conditions}},
  \href{https://doi.org/10.1016/S0375-9601(96)00706-2}{\emph{Physics Letters A}
  {\bfseries 223} (1996) 1}
  [\href{https://arxiv.org/abs/quant-ph/9605038}{{\ttfamily
  quant-ph/9605038}}].

\bibitem{2014JHEP...10..060R}
M.~{Rangamani} and M.~{Rota}, \emph{{Comments on entanglement negativity in
  holographic field theories}},
  \href{https://doi.org/10.1007/JHEP10(2014)060}{\emph{Journal of High Energy
  Physics} {\bfseries 2014} (2014) 60}
  [\href{https://arxiv.org/abs/1406.6989}{{\ttfamily 1406.6989}}].

\bibitem{2016arXiv160906609C}
P.~{Chaturvedi}, V.~{Malvimat} and G.~{Sengupta}, \emph{{Holographic Quantum
  Entanglement Negativity}}, {\emph{arXiv e-prints} (2016) arXiv:1609.06609}
  [\href{https://arxiv.org/abs/1609.06609}{{\ttfamily 1609.06609}}].

\bibitem{2014JHEP...09..010K}
M.~{Kulaxizi}, A.~{Parnachev} and G.~{Policastro}, \emph{{Conformal blocks and
  negativity at large central charge}},
  \href{https://doi.org/10.1007/JHEP09(2014)010}{\emph{Journal of High Energy
  Physics} {\bfseries 2014} (2014) 10}
  [\href{https://arxiv.org/abs/1407.0324}{{\ttfamily 1407.0324}}].

\bibitem{2013arXiv1303.6955H}
T.~{Hartman}, \emph{{Entanglement Entropy at Large Central Charge}},
  {\emph{arXiv e-prints} (2013) arXiv:1303.6955}
  [\href{https://arxiv.org/abs/1303.6955}{{\ttfamily 1303.6955}}].

\bibitem{2019JHEP...10..240D}
X.~{Dong}, D.~{Harlow} and D.~{Marolf}, \emph{{Flat entanglement spectra in
  fixed-area states of quantum gravity}},
  \href{https://doi.org/10.1007/JHEP10(2019)240}{\emph{Journal of High Energy
  Physics} {\bfseries 2019} (2019) 240}
  [\href{https://arxiv.org/abs/1811.05382}{{\ttfamily 1811.05382}}].

\bibitem{2020JHEP...03..191D}
X.~{Dong} and D.~{Marolf}, \emph{{One-loop universality of holographic codes}},
  \href{https://doi.org/10.1007/JHEP03(2020)191}{\emph{Journal of High Energy
  Physics} {\bfseries 2020} (2020) 191}
  [\href{https://arxiv.org/abs/1910.06329}{{\ttfamily 1910.06329}}].

\bibitem{2019JHEP...05..052A}
C.~{Akers} and P.~{Rath}, \emph{{Holographic Renyi entropy from quantum error
  correction}}, \href{https://doi.org/10.1007/JHEP05(2019)052}{\emph{Journal of
  High Energy Physics} {\bfseries 2019} (2019) 52}
  [\href{https://arxiv.org/abs/1811.05171}{{\ttfamily 1811.05171}}].

\bibitem{Dong:2023bfy}
X.~Dong, J.~Kudler-Flam and P.~Rath, \emph{{A Modified Cosmic Brane Proposal
  for Holographic Renyi Entropy}},
  \href{https://arxiv.org/abs/2312.04625}{{\ttfamily 2312.04625}}.

\bibitem{2013JHEP...08..090L}
A.~{Lewkowycz} and J.~{Maldacena}, \emph{{Generalized gravitational entropy}},
  \href{https://doi.org/10.1007/JHEP08(2013)090}{\emph{Journal of High Energy
  Physics} {\bfseries 2013} (2013) 90}
  [\href{https://arxiv.org/abs/1304.4926}{{\ttfamily 1304.4926}}].

\bibitem{2016NatCo...712472D}
X.~{Dong}, \emph{{The gravity dual of R{\'e}nyi entropy}},
  \href{https://doi.org/10.1038/ncomms12472}{\emph{Nature Communications}
  {\bfseries 7} (2016) 12472}
  [\href{https://arxiv.org/abs/1601.06788}{{\ttfamily 1601.06788}}].

\bibitem{choi1975completely}
M.-D. Choi, \emph{Completely positive linear maps on complex matrices},
  {\emph{Linear algebra and its applications} {\bfseries 10} (1975) 285}.

\bibitem{jamiolkowski1972linear}
A.~Jamiolkowski, \emph{Linear transformations which preserve trace and positive
  semidefiniteness of operators}, {\emph{Reports on mathematical physics}
  {\bfseries 3} (1972) 275}.

\bibitem{2019arXiv191111977P}
G.~{Penington}, S.~H. {Shenker}, D.~{Stanford} and Z.~{Yang}, \emph{{Replica
  wormholes and the black hole interior}}, {\emph{arXiv e-prints} (2019)
  arXiv:1911.11977} [\href{https://arxiv.org/abs/1911.11977}{{\ttfamily
  1911.11977}}].

\bibitem{2020JHEP...11..007D}
X.~{Dong} and H.~{Wang}, \emph{{Enhanced corrections near holographic
  entanglement transitions: a chaotic case study}},
  \href{https://doi.org/10.1007/JHEP11(2020)007}{\emph{Journal of High Energy
  Physics} {\bfseries 2020} (2020) 7}
  [\href{https://arxiv.org/abs/2006.10051}{{\ttfamily 2006.10051}}].

\bibitem{2020JHEP...12..084M}
D.~{Marolf}, S.~{Wang} and Z.~{Wang}, \emph{{Probing phase transitions of
  holographic entanglement entropy with fixed area states}},
  \href{https://doi.org/10.1007/JHEP12(2020)084}{\emph{Journal of High Energy
  Physics} {\bfseries 2020} (2020) 84}
  [\href{https://arxiv.org/abs/2006.10089}{{\ttfamily 2006.10089}}].

\bibitem{2021arXiv211011947D}
X.~Dong, S.~McBride and W.~W. Weng, \emph{{Replica wormholes and holographic
  entanglement negativity}},
  \href{https://doi.org/10.1007/JHEP06(2022)094}{\emph{JHEP} {\bfseries 06}
  (2022) 094} [\href{https://arxiv.org/abs/2110.11947}{{\ttfamily
  2110.11947}}].

\bibitem{2021JHEP...04..062A}
C.~{Akers} and G.~{Penington}, \emph{{Leading order corrections to the quantum
  extremal surface prescription}},
  \href{https://doi.org/10.1007/JHEP04(2021)062}{\emph{Journal of High Energy
  Physics} {\bfseries 2021} (2021) 62}
  [\href{https://arxiv.org/abs/2008.03319}{{\ttfamily 2008.03319}}].

\bibitem{Penington:2024jmt}
G.~Penington and P.~Rath, \emph{{The Diagonal Approximation for Holographic
  R\'enyi Entropies}},  \href{https://arxiv.org/abs/2412.03670}{{\ttfamily
  2412.03670}}.

\bibitem{Headrick:2007km}
M.~Headrick and T.~Takayanagi, \emph{{A Holographic proof of the strong
  subadditivity of entanglement entropy}},
  \href{https://doi.org/10.1103/PhysRevD.76.106013}{\emph{Phys. Rev. D}
  {\bfseries 76} (2007) 106013}
  [\href{https://arxiv.org/abs/0704.3719}{{\ttfamily 0704.3719}}].

\bibitem{2021arXiv211200020V}
S.~{Vardhan}, J.~{Kudler-Flam}, H.~{Shapourian} and H.~{Liu},
  \emph{{Mixed-state entanglement and information recovery in thermalized
  states and evaporating black holes}}, {\emph{arXiv e-prints} (2021)
  arXiv:2112.00020} [\href{https://arxiv.org/abs/2112.00020}{{\ttfamily
  2112.00020}}].

\bibitem{2022PhRvL.129f1602V}
S.~{Vardhan}, J.~{Kudler-Flam}, H.~{Shapourian} and H.~{Liu}, \emph{{Bound
  Entanglement in Thermalized States and Black Hole Radiation}},
  \href{https://doi.org/10.1103/PhysRevLett.129.061602}{\emph{\prl} {\bfseries
  129} (2022) 061602} [\href{https://arxiv.org/abs/2110.02959}{{\ttfamily
  2110.02959}}].

\bibitem{2022JHEP...06..089A}
C.~{Akers}, T.~{Faulkner}, S.~{Lin} and P.~{Rath}, \emph{{The Page curve for
  reflected entropy}},
  \href{https://doi.org/10.1007/JHEP06(2022)089}{\emph{Journal of High Energy
  Physics} {\bfseries 2022} (2022) 89}
  [\href{https://arxiv.org/abs/2201.11730}{{\ttfamily 2201.11730}}].

\bibitem{2012PhRvL.109m0502C}
P.~{Calabrese}, J.~{Cardy} and E.~{Tonni}, \emph{{Entanglement Negativity in
  Quantum Field Theory}},
  \href{https://doi.org/10.1103/PhysRevLett.109.130502}{\emph{\prl} {\bfseries
  109} (2012) 130502} [\href{https://arxiv.org/abs/1206.3092}{{\ttfamily
  1206.3092}}].

\bibitem{2013JSMTE..02..008C}
P.~{Calabrese}, J.~{Cardy} and E.~{Tonni}, \emph{{Entanglement negativity in
  extended systems: a field theoretical approach}},
  \href{https://doi.org/10.1088/1742-5468/2013/02/P02008}{\emph{Journal of
  Statistical Mechanics: Theory and Experiment} {\bfseries 2013} (2013) 02008}
  [\href{https://arxiv.org/abs/1210.5359}{{\ttfamily 1210.5359}}].

\bibitem{Headrick:2010zt}
M.~Headrick, \emph{{Entanglement Renyi entropies in holographic theories}},
  \href{https://doi.org/10.1103/PhysRevD.82.126010}{\emph{Phys. Rev. D}
  {\bfseries 82} (2010) 126010}
  [\href{https://arxiv.org/abs/1006.0047}{{\ttfamily 1006.0047}}].

\bibitem{Penington:2022dhr}
G.~Penington, M.~Walter and F.~Witteveen, \emph{{Fun with replicas:
  tripartitions in tensor networks and gravity}},
  \href{https://arxiv.org/abs/2211.16045}{{\ttfamily 2211.16045}}.

\bibitem{zamolodchikov1984conformal}
A.~B. Zamolodchikov, \emph{Conformal symmetry in two dimensions: an explicit
  recurrence formula for the conformal partial wave amplitude},
  {\emph{Communications in mathematical physics} {\bfseries 96} (1984) 419}.

\bibitem{2020JHEP...01..109B}
M.~{Besken}, S.~{Datta} and P.~{Kraus}, \emph{{Semi-classical Virasoro blocks:
  proof of exponentiation}},
  \href{https://doi.org/10.1007/JHEP01(2020)109}{\emph{Journal of High Energy
  Physics} {\bfseries 2020} (2020) 109}
  [\href{https://arxiv.org/abs/1910.04169}{{\ttfamily 1910.04169}}].

\bibitem{zamolodchikov1987conformal}
A.~B. Zamolodchikov, \emph{Conformal symmetry in two-dimensional space:
  recursion representation of conformal block}, {\emph{Theoretical and
  Mathematical Physics} {\bfseries 73} (1987) 1088}.

\bibitem{Milekhin:2022zsy}
A.~Milekhin, P.~Rath and W.~Weng, \emph{{Computable Cross Norm in Tensor
  Networks and Holography}},
  \href{https://arxiv.org/abs/2212.11978}{{\ttfamily 2212.11978}}.

\bibitem{Gadde:2022cqi}
A.~Gadde, V.~Krishna and T.~Sharma, \emph{{New multipartite entanglement
  measure and its holographic dual}},
  \href{https://doi.org/10.1103/PhysRevD.106.126001}{\emph{Phys. Rev. D}
  {\bfseries 106} (2022) 126001}
  [\href{https://arxiv.org/abs/2206.09723}{{\ttfamily 2206.09723}}].

\bibitem{Gadde:2023zzj}
A.~Gadde, V.~Krishna and T.~Sharma, \emph{{Towards a classification of
  holographic multi-partite entanglement measures}},
  \href{https://doi.org/10.1007/JHEP08(2023)202}{\emph{JHEP} {\bfseries 08}
  (2023) 202} [\href{https://arxiv.org/abs/2304.06082}{{\ttfamily
  2304.06082}}].

\bibitem{cmp/1103859037}
M.~Fannes, \emph{{A continuity property of the entropy density for spin lattice
  systems}}, \href{https://doi.org/cmp/1103859037}{\emph{Communications in
  Mathematical Physics} {\bfseries 31} (1973) 291 }.

\bibitem{2016JHEP...11..009H}
P.~{Hayden}, S.~{Nezami}, X.-L. {Qi}, N.~{Thomas}, M.~{Walter} and Z.~{Yang},
  \emph{{Holographic duality from random tensor networks}},
  \href{https://doi.org/10.1007/JHEP11(2016)009}{\emph{Journal of High Energy
  Physics} {\bfseries 2016} (2016) 9}
  [\href{https://arxiv.org/abs/1601.01694}{{\ttfamily 1601.01694}}].

\bibitem{2008PhRvA..78c2329C}
P.~{Calabrese} and A.~{Lefevre}, \emph{{Entanglement spectrum in
  one-dimensional systems}},
  \href{https://doi.org/10.1103/PhysRevA.78.032329}{\emph{\pra} {\bfseries 78}
  (2008) 032329} [\href{https://arxiv.org/abs/0806.3059}{{\ttfamily
  0806.3059}}].

\bibitem{2022arXiv221101392K}
J.~{Kudler-Flam}, \emph{{R{\'e}nyi Mutual Information in Quantum Field
  Theory}}, {\emph{arXiv e-prints} (2022) arXiv:2211.01392}
  [\href{https://arxiv.org/abs/2211.01392}{{\ttfamily 2211.01392}}].

\bibitem{2023arXiv230808600K}
J.~{Kudler-Flam}, L.~{Nie} and A.~{Vijay}, \emph{{R{\'e}nyi mutual information
  in quantum field theory, tensor networks, and gravity}},
  \href{https://doi.org/10.48550/arXiv.2308.08600}{\emph{arXiv e-prints} (2023)
  arXiv:2308.08600} [\href{https://arxiv.org/abs/2308.08600}{{\ttfamily
  2308.08600}}].

\bibitem{2015JHEP...11..200F}
A.~L. {Fitzpatrick}, J.~{Kaplan} and M.~T. {Walters}, \emph{{Virasoro conformal
  blocks and thermality from classical background fields}},
  \href{https://doi.org/10.1007/JHEP11(2015)200}{\emph{Journal of High Energy
  Physics} {\bfseries 2015} (2015) 200}
  [\href{https://arxiv.org/abs/1501.05315}{{\ttfamily 1501.05315}}].

\bibitem{2010JPhA...43A5303C}
B.~{Collins}, I.~{Nechita} and K.~{{\.Z}yczkowski}, \emph{{Random graph states,
  maximal flow and Fuss-Catalan distributions}},
  \href{https://doi.org/10.1088/1751-8113/43/27/275303}{\emph{Journal of
  Physics A Mathematical General} {\bfseries 43} (2010) 275303}
  [\href{https://arxiv.org/abs/1003.3075}{{\ttfamily 1003.3075}}].

\bibitem{2021PRXQ....2c0347S}
H.~{Shapourian}, S.~{Liu}, J.~{Kudler-Flam} and A.~{Vishwanath},
  \emph{{Entanglement Negativity Spectrum of Random Mixed States: A
  Diagrammatic Approach}},
  \href{https://doi.org/10.1103/PRXQuantum.2.030347}{\emph{PRX Quantum}
  {\bfseries 2} (2021) 030347}
  [\href{https://arxiv.org/abs/2011.01277}{{\ttfamily 2011.01277}}].

\bibitem{2022JHEP...02..076K}
J.~{Kudler-Flam}, V.~{Narovlansky} and S.~{Ryu}, \emph{{Negativity spectra in
  random tensor networks and holography}},
  \href{https://doi.org/10.1007/JHEP02(2022)076}{\emph{Journal of High Energy
  Physics} {\bfseries 2022} (2022) 76}
  [\href{https://arxiv.org/abs/2109.02649}{{\ttfamily 2109.02649}}].

\bibitem{2022arXiv220610482C}
N.~{Cheng}, C.~{Lancien}, G.~{Penington}, M.~{Walter} and F.~{Witteveen},
  \emph{{Random tensor networks with nontrivial links}}, {\emph{arXiv e-prints}
  (2022) arXiv:2206.10482} [\href{https://arxiv.org/abs/2206.10482}{{\ttfamily
  2206.10482}}].

\bibitem{2010PhRvD..82l6010H}
M.~{Headrick}, \emph{{Entanglement R{\'e}nyi entropies in holographic
  theories}}, \href{https://doi.org/10.1103/PhysRevD.82.126010}{\emph{\prd}
  {\bfseries 82} (2010) 126010}
  [\href{https://arxiv.org/abs/1006.0047}{{\ttfamily 1006.0047}}].

\bibitem{Haehl:2014zoa}
F.~M. Haehl, T.~Hartman, D.~Marolf, H.~Maxfield and M.~Rangamani,
  \emph{{Topological aspects of generalized gravitational entropy}},
  \href{https://doi.org/10.1007/JHEP05(2015)023}{\emph{JHEP} {\bfseries 05}
  (2015) 023} [\href{https://arxiv.org/abs/1412.7561}{{\ttfamily 1412.7561}}].

\end{thebibliography}\endgroup

\end{document}